\def\>{\rangle}
\def\<{\langle}
\def\K{ {\mathcal K} }
\def\E{ {\mathcal E} }
\def\P{ {\mathcal P} }
\def\H{ {\mathcal H} }
\def\M{ {\mathcal M} }
\def\U {{\mathcal U}}
\def\F{ {\mathcal F} }
\def\B{ {\mathcal B} }
\def\P{ {\mathcal P} }
\def\D{ {\mathcal D} }
\def\I{ \mathbbm{1} }
\def\tr{ \mbox{tr} }
\def\x{\boldsymbol{x}}
\def\z{\boldsymbol{z}}
\newtheorem{theorem}{Theorem}
\newtheorem{lemma}{Lemma}
\begin{document}

\title{On the classification of two-qubit group orbits and the use of coarse-grained `shape' as a superselection property}

\author{Thomas Hebdige}
\affiliation{Controlled Quantum Dynamics Theory Group, Imperial College London, Prince Consort Road, London SW7 2BW, UK}
\author{David Jennings}
\affiliation{Controlled Quantum Dynamics Theory Group, Imperial College London, Prince Consort Road, London SW7 2BW, UK}
\affiliation{Department of Physics, University of Oxford, Oxford, OX1 3PU, UK}
\affiliation{School of Physics and Astronomy, University of Leeds, Leeds, LS2 9JT, UK.}
\date{\today}

\begin{abstract}
Recently a complete set of entropic conditions has been derived for the interconversion structure of states under quantum operations that respect a specified symmetry action, however the core structure of these conditions is still only partially understood. Here we develop a coarse-grained description with the aim of shedding light on both the structure and the complexity of this general problem. Specifically, we consider the degree to which one can associate a basic `shape' property to a quantum state or channel that captures coarse-grained data either for state interconversion or for the use of a state within a simulation protocol. We provide a complete solution for the two-qubit case under the rotation group, give analysis for the more general case and discuss possible extensions of the approach. 
\end{abstract}

\maketitle

\begin{figure}[t]
\includegraphics[width=8.5cm]{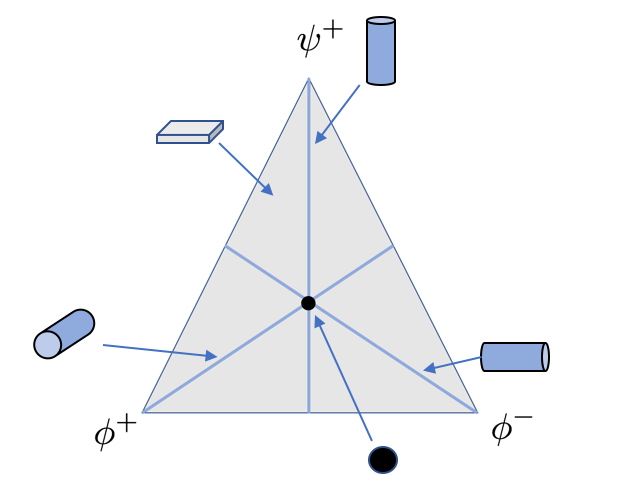}
\caption{\textbf{Residual symmetry-types in two-qubit systems.} We associate to each state a `shape' that describes the residual symmetries of that state. As an example, in the above figure we show the symmetry properties of the set of all mixtures of Bell states under $SU(2)$. This is determined solely by the triplet component of the state. The only possibilities are for the state to have a residual symmetry of (a) a cylinder, (b) a double-sided rectangle, or (c) a sphere. This `shape' can be used to provide superselection rules when using the two-qubit state to simulate quantum operations. Details are given in section \ref{twoqubits}.}
\label{fig:friendlypicture}
\end{figure}

\section{Introduction}

Symmetry principles are ubiquitous in both the classical and quantum realms. They constrain dynamics, connect with conservation laws and simplify computations of physical properties. In crystallography the crystal structure is described by a symmetry group and largely determines the properties of the material \cite{tinkham}.

Symmetry considerations have been extremely successful in the field of quantum information theory \cite{definetti, hiddensubgroup, entanglement}. Moreover a substantial component of this work has addressed the regime in which symmetry principles disconnect from conservation laws \cite{extendingnoether}, and which has found application in the study of quantum features of thermodynamics \cite{timetranslation,gour,matteonature}.  Quantum states that break the symmetry can also be used to circumvent limitations on the precision of measurements imposed by conservation laws, as described by the WAY theorem \cite{wigner, arakiyanase, LoveridgeBusch, way, marvianway, popescuway}, while metrology can be viewed as using broken symmetries to distinguish different group transformations \cite{metrology}. Symmetry principles have also provided new insights into quantum speed limits and how quickly quantum operations can be performed \cite{speedlimits}. Finally, symmetry groups provide an invaluable toolkit in the context of quantum computing \cite{stabilizer, gottesmanthesis, benchmarking, nielsen}.

In this work we look at how general quantum states $\rho$ can be classified by groups that break a given symmetry constraint to an equal degree. A concrete example makes this clearer: both a chair and an arrow break rotational symmetry, however an arrow has a residual (axial) symmetry group to it, and so breaks rotations to a weaker degree than the chair. For a fixed quantum system of dimension $d$, and symmetry action $G$, we can ask what residual symmetries does the system admit, and how do these residual symmetries transform under symmetric dynamics. We also address the degree to which this aspect of a quantum system constitutes a property in the `resource-theoretic' sense.

The structure of the paper is as follows. In the next section we provide notation and background motivation. In section \ref{isotropy} we provide the coarse-graining scheme and some basic relations involved. This demonstrates the complexity of the topic, and highlights some subtleties if one wishes to associate the scheme to a quantum resource. In section \ref{twoqubits} we provide a complete classification for two-qubit systems (partly pictured in Fig. \ref{fig:friendlypicture} and more fully in Fig. \ref{fig:horodecki}), and which illustrates how quantum state spaces admit a coarse-grained partial order.

\section{Background and Motivation}

In this paper we shall use the following notation. For any quantum system $A$ we denote by $\H_A$ its associated Hilbert space,  and $\B(\H_A)$ the set of linear operators on $\H_A$. We assume that a symmetry action of $G$ is defined on $\H_A$ through the unitary representation $g \mapsto U(g) \in \B(\H_A)$. The group action on a quantum state $\rho \in \B(\H_A)$ is given by the adjoint action $\mathcal{U}_g (\rho) := U(g) \ \rho \ U^\dagger (g)$. 

For any quantum channel $\E: \B(\H_A) \rightarrow \B(\H_{A'})$, the action of the group on $\mathcal{E}$ is $\mathfrak{U}_g (\E) := \mathcal{U}_g \circ \mathcal{E} \circ \mathcal{U}_{g^{-1}}$ (where $\circ$ denotes the concatenation of channels). This can be interpreted as moving to a different frame, performing the channel and returning to the original frame. 

If $\mathfrak{U}_g (\E) = \mathcal{E}$ for all $g\in G$ then $\E$ is called a \emph{symmetric operation}. This includes certain preparation procedures, and the resulting states are known as \emph{symmetric states} because they are invariant under the group action: $\mathcal{U}_g(\rho) = \rho$ for all $g \in G$.

\subsection{General state transformations under a symmetry constraint}\label{interconversion}

A primary concern here is the study of general quantum states or quantum operations under a symmetry action. In particular we would like to study the degree to which any given quantum state breaks the symmetry (a `resource state' \cite{vaccaroasymmetry, frameness}). One way of comparing states in this manner is to say that $\rho \succ \sigma$ if and only if $\rho \rightarrow \sigma = \E(\rho)$ for some \emph{symmetric} quantum operation $\E$, which defines a partial order on quantum states \cite{resourceframes, asymmetry}. A \emph{measure} of how much a state breaks the symmetry is then a function $M$ that respects the partial order -- namely, if $\rho \succ \sigma$ then $M(\rho) \ge M(\sigma)$ \cite{vaccaroasymmetry, frameness}. A central question is whether a \emph{complete} set of measures exist that fully capture this ordering of states, and thus asks when two quantum states can or cannot be interconverted under the symmetry. Very recently \cite{gour} an answer was given for this general problem in terms of single-shot entropies of the form $H_{\mbox{\tiny min}} (R|A)$ that quantify the amount of information in a system $A$ about an external quantum reference frame $R$. Thus it is the case that $\rho \rightarrow \sigma$ under the symmetry constraint if and only if $H_{\rm min} (R|A)$ decreases with respect to all reference frames $R$. 

However this set of measures for the resource theory is over-complete and highly complex to use. Applying it to general systems requires sophisticated techniques in single-shot information theory. Here we wish to tackle a more modest goal, and instead of describing the complete symmetry properties of a quantum state, wish to coarse-grain states into groups that break the symmetry in the same way. Notably, this itself turns out to be a highly non-trivial problem and so sheds light on the structure of the more general set of measures given in \cite{gour}.

\subsection{Efficient simulation of quantum channels} \label{local}

One very grounded way of probing the degree to which a quantum state breaks a symmetry is to see how useful it is when we wish to perform tasks, such as induce a channel on another system. More generally, if we have some target quantum operation $\E$ on a system, we would like to know what kind of resource states and interactions are required to realise that operation. The degree to which a resource $\sigma$ can do this under symmetric dynamics therefore constitutes a measure of its properties.

In \cite{cristina}, the symmetry properties of quantum channels were analysed, with a focus on the orbit of a channel $\E$, defined as 
\begin{equation}
\mathcal{M}(\mathcal{E}) := \{ \mathfrak{U}_g(\E) : \ g \in G \}.
\end{equation}
A similar definition applies to the orbit of states under the group, so $\M(\rho) = \{ \U_g(\rho) : \ g \in G \}$.

Now the type of orbit one obtains provides a natural context in which to understand both how the channel breaks a symmetry constraint, and moreover how one might simulate such a channel using an auxiliary quantum system $B$ prepared in a non-symmetric state $\sigma_B$. 

It is clear that an auxiliary resource state may be used with varying efficiency in the simulation of a channel on a system $S$, however in \cite{cristina} the question of when a state $\sigma_B$ can be used to induce a channel $\E$ such that its ability to simulate the very same channel on subsequent systems is undiminished. This was motivated by the discovery of a ``catalytic coherence'' protocol \cite{aberg} in which a quantum state $\sigma_B$ is used as a phase reference. The state undergoes non-trivial changes $\sigma_B \rightarrow \sigma_B^{(1)} \rightarrow \sigma_b^{(2)} \rightarrow \cdots \sigma_B^{(n)}$ yet for all $n \ge 1$ its abilities as a phase reference never diminish. In \cite{cristina} the structure of such \emph{arbitrarily repeatable} protocols was spelled out using harmonic analysis. It was shown that if a channel $\E$ on $S$ is simulated using a quantum state $\sigma_B$ under an arbitrarily repeatable protocol then there exists a POVM measurement $\{M_k\}$ on $B$ such that
\begin{equation}
\E(\rho) = \sum_k \tr(M_k\sigma) \Phi_k (\rho)
\end{equation}
where $\{\Phi_k\}$ are completely positive maps on the primary system $S$. The POVM $\{M_k\}$ on the auxiliary system depends explicitly on the target channel $\E$. Crucially however, it can be understood as resolving the location $\x$ of $\E$ on its unitary orbit $\M(\E)$ to some scale that depends on the dimension $d$ of $S$. For example, in the case of catalytic coherence, the POVM measurement on $B$ corresponds simply to resolving a point $\x$ on a circle (the orbit of the channel $\E$ under phase rotations) down to an angular scale $\sim \frac{2\pi}{d}$. 

The degree to which $\x$ is resolved on $\M(\E)$ clearly depends on the resource state $\sigma_B$, and corresponds to the fact that $\sigma_B$ is a \emph{quantum} reference frame. This in turn can be viewed as inducing non-classical geometry on $\M(\E)$ -- for example, for $G=SU(2)$ a channel can have orbit $\M(\E)$ that is a 2-sphere. The use of a bounded reference system in this case means that only finite resolution of points on $\M(\E$) is possible, however since the 2-sphere carries a phase space structure the bounded reference case corresponds to the so-called ``fuzzy sphere'' model from non-commutative geometry \cite{fuzzysphere}. This differs from catalytic coherence in that we now have complementarity in resolving the coordinates on the sphere, and which ultimately arises because the two orbits have quite different structure.

Thus the general efficient use of a quantum state $\sigma_B$ to simulate a channel $\E$ on $S$ can be analysed in terms of two aspects:
\begin{enumerate}
\item (`Shape') The type of orbit $\M(\sigma)$ the state has under the group action compared to $\M(\E)$, independent of metrical aspects.
\item (`Geometry') The ability of $\sigma$ to encode classical coordinate data for $\x$ in $\M(\E)$.
\end{enumerate}

\noindent What is a necessary relation between $\M(\E)$ and $\M(\sigma)$? It is clear that we need $\sigma$ to break the symmetry to a `larger' degree than $\E$, however in this paper we would like to make this statement more precise. Given the complexity of the general problem, one motivation in this work is to start with the above separation into `shape' and `geometry', and develop an organisational setting in which we provide a coarse-graining over states of the same `shape', with the remaining task being to analyse the `geometry' aspect of the state.

\subsection{Related topics in quantum state tomography and quantum computation}

Beyond the abstract problem of simulating an arbitrary quantum channel using a resource state, there are specific contexts of importance where such a coarse-grained division naturally arises. In \cite{tomography}, quantum state tomography was studied in the context of prior information that restricts the state to a lower-dimensional submanifold of the state space. The analysis shows that the topological genus of the manifold can be used to bound the number of measurements needed to discriminate states on the submanifold. The orbit of a state (or channel) is one such submanifold, and so the study of what kinds of orbits can exist and how they transform among themselves is of potential relevance to quantum tomography under symmetry constraints.

A wholly practical direction where such analysis might be of relevance is in the computational power of gate-sets in quantum computation, and how such gate-sets interact with states that increase the computational power of the gate-set (such as magic states for the stabilizers \cite{magic}). For example, a recent work provides a classification of all Clifford gate-sets \cite{stabilizer} in a lattice hierarchy, and so aspects of the present work might be applicable in studying how noisy quantum states increase the computational power of easily realisable gate-sets.

\subsection{Core questions of the present work}

In subsection \ref{interconversion} we highlighted that state interconversion $\rho \rightarrow \sigma$ under a symmetry constraint is highly non-trivial, and while progress has been made on this fundamental problem, it is not clear how complex it is in general. Also, as discussed in subsection \ref{local} recent results on optimal protocols show that the structure of this theory naturally decomposes into a geometric component and the particular kind of group orbit the quantum channel has.
 In light of these considerations one expects that the group orbit level of description might provide a form of coarse-grained description that sheds light on the more complex parent resource theory. By studying this coarse-grained theory we can shed light on the parent theory, and moreover extend the notion of a resource theory from being about pre-orders on quantum states to considering pre-orders on \emph{sub-sets} of states. However, in order for this to work and be meaningful, it is crucial that the coarse-grained theory be (a) physically sensible and (b) relate naturally to the parent theory. Therefore the core questions we address in this work are the following:
\begin{enumerate}
\item Does a coarse-grained resource-theory framework exist? 
\item Is this theory consistent with its parent resource theory? Is it physically sensible?
\item How complex is the general structure of symmetry-based resource theories? 
\end{enumerate}
In the next section we tackle the first two of these questions, by showing how a natural partition of the state space arises that is consistent with the finer-grained resource theory, and describe its limitations and physical robustness. In the process, and through exhaustive analysis of the $G=SU(2)$ case for two qubits, we shed light on the crucial third question posed.

\section{Coarse-graining states and channels under a Symmetry Action} \label{isotropy}

\subsection{Simulating Operations Under Symmetry Constraints}

If one wishes to realise a quantum channel $\E$ via a symmetric interaction with a state $\sigma$, the most natural way to model this is as
\begin{equation}
\E( \rho ) = \tr_B V( \rho_A \otimes \sigma_B)V^\dagger
\end{equation}
where $V$ is a covariant unitary, namely $[V, \ U_A(g) \otimes U_B(g)] = 0$ for all $g \in G$. If such a relation holds, we say that $\E$ can be \emph{simulated} using the state $\sigma_B$.

A basic classification of states and channels under a symmetry can be done by studying the subgroup of residual symmetries for the state or channel, called the isotropy subgroup or stabilizer of $\E$ and defined as
\begin{equation}
\rm{Iso}(\E) := \{ g \in G \ : \ \mathfrak{U}_g(\E) = \E \}.
\end{equation}
It specifies the residual symmetry that the channel has under the group action. Symmetric channels are invariant under all group transformations, and simply have $\rm{Iso}(\E)=G$. This notion also applies to states, where $\rm{Iso}(\rho) := \{ g \in G \ : \ \mathcal{U}_g(\rho) = \rho \}$.

In many cases the quantum operation $\E$ has some residual symmetry, which is characterised an isotropy subgroup $H$ of $G$. The possible isotropy subgroups form an abstract structure called a \emph{lattice} \cite{subgrouplattice}. The partial ordering of the subgroups is defined by subset inclusion, namely $H_1 \prec H_2$ if $H_1 \subseteq H_2$ for $H_1, H_2 \subseteq G$. In addition, every pair of elements have a unique supremum and unique infimum, which define binary operations called the meet and join \cite{lattice}. For two subgroups $H_1$ and $H_2$, their meet is denoted $H_1 \wedge H_2$ and defined as $H_1 \cap H_2$, while their join is denoted $H_1 \vee H_2$, defined as the subgroup \emph{generated} by $H_1 \cup H_2$ \cite{subgrouplattice}. This structure is illustrated by a Hasse diagram of the subgroups, as shown in Fig. \ref{fig:csubh}.

\begin{figure*}[t]
\includegraphics[width=16cm]{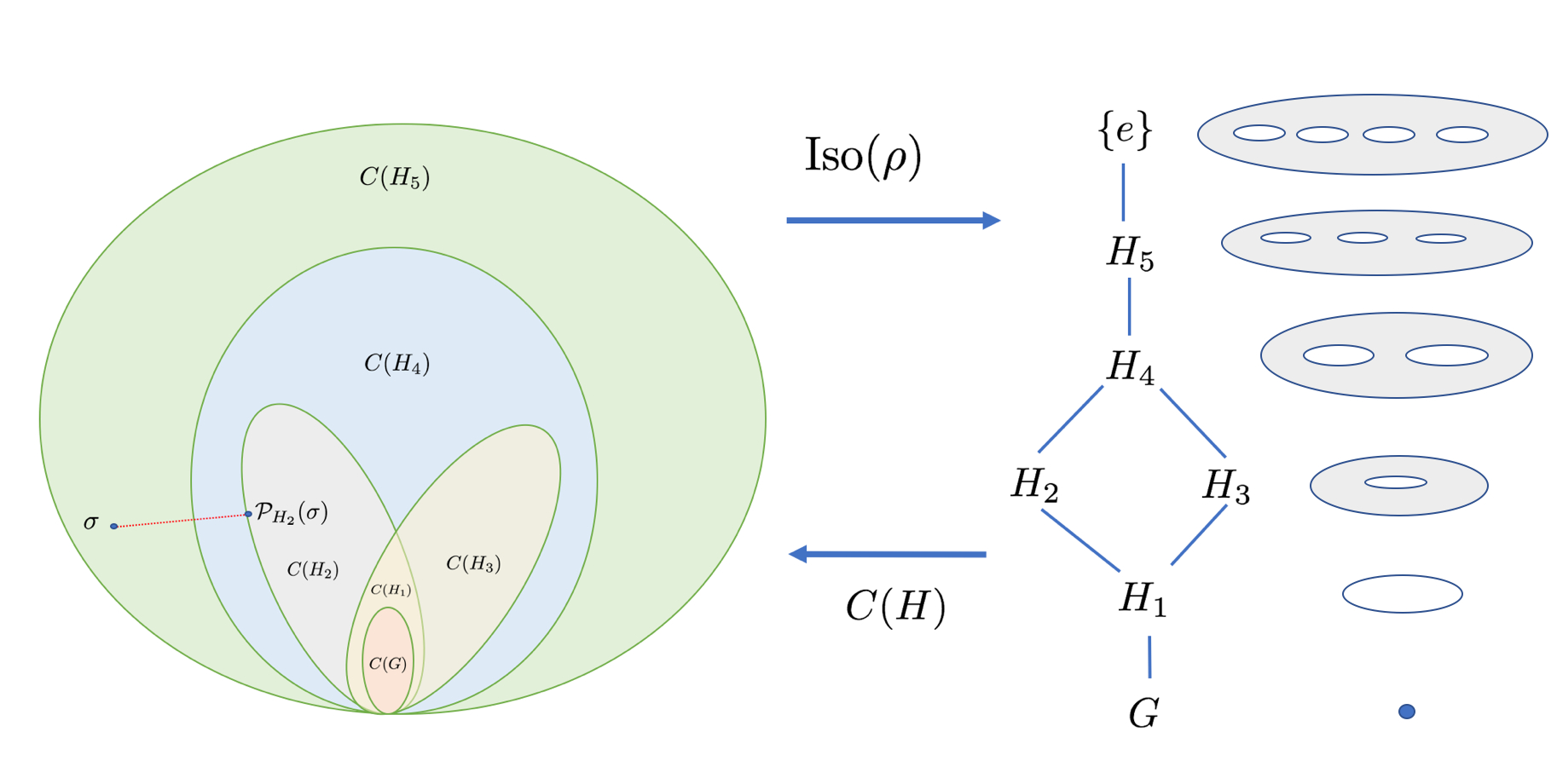}
\caption{\textbf{Coarse-grained classification of quantum states under a symmetry.} The set of all density operators is partitioned into sets $C(H)$ with definite symmetries (seen on the left). The $C(H)$ correspond to the subgroups shown in the Hasse diagram in the middle. The Hasse diagram represents the subgroup lattice, with lines indicating subgroup inclusion, i.e. $H_5 \prec H_4 \prec H_3$ etc. Note that for finite-dimensional systems many of the $C(H)$ will be empty. The subsets $C(H)$ can be associated, up to diffeomorphisms, with group orbits (indicated on the right). The left figure shows the action of group-averaging (over $H_2$), going from $\sigma$ to $\mathcal{P}_{H_2}(\sigma)$ as described in section \ref{projecting}. } \label{fig:csubh}
\end{figure*}

Since quantum operations can be combined in a number of ways, we present some basic statements that constrain the residual symmetry of the resultant quantum operation. The proofs for this section are provided in Appendix \ref{proofs}.

\begin{lemma} \label{concatenate}
Given any two quantum channels $\E:\B(\H_A) \rightarrow \B(\H_{A'})$ and $\F : \B(\H_B) \rightarrow \B(\H_{B'})$, with unitary representations of a group $G$ defined on all input and output spaces. Then we have the following:
\begin{enumerate}
\item $\rm{Iso}( \E\otimes \F) \succ \rm{Iso}(\E) \wedge \rm{Iso}(\F)$ under the tensor product group action $\mathfrak{U}_g(\E \otimes \F) = \mathfrak{U}_g(\E) \otimes \mathfrak{U}_g(\F)$.
\item If $B' = A$ then $\rm{Iso}(\E \circ \F)  \succ \rm{Iso}(\E) \wedge \rm{Iso}(\F)$.
\item If $A=B$ and $A'=B'$, and $p$ some probability $0\le p \le 1$, then $\mathrm{Iso}( p \E + (1-p) \F) \succ \mathrm{Iso}(\E) \wedge \mathrm{Iso}(\F)$.
\item $\mathrm{Iso}(\U_g \circ \E \circ \U_g^\dagger) = g[ \mathrm{Iso}(\E)]g^{-1}$. 
\end{enumerate}
\end{lemma}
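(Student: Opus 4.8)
The plan is to treat all four statements as consequences of a single structural fact: the assignment $g \mapsto \mathfrak{U}_g$ is a group action on channels, and each of the operations in question (tensor product, composition, convex mixture) is respected by this action in the appropriate way. Since the meet in the subgroup lattice is just intersection, $\mathrm{Iso}(\E) \wedge \mathrm{Iso}(\F) = \mathrm{Iso}(\E) \cap \mathrm{Iso}(\F)$, statements (1)--(3) all reduce to showing that any $g$ fixing both $\E$ and $\F$ also fixes their combination. The containments are only one-directional because combining channels can produce ``accidental'' extra symmetries not present in either factor.

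For statements (1)--(3) I would first establish, in each case, that $\mathfrak{U}_g$ distributes over the relevant operation. Part (1) is immediate from the stated tensor-product action $\mathfrak{U}_g(\E \otimes \F) = \mathfrak{U}_g(\E) \otimes \mathfrak{U}_g(\F)$. For part (2) the key identity is $\mathfrak{U}_g(\E \circ \F) = \mathfrak{U}_g(\E) \circ \mathfrak{U}_g(\F)$, obtained by inserting $\mathcal{U}_{g^{-1}} \circ \mathcal{U}_g = \mathrm{id}$ at the junction $B' = A$ and regrouping. For part (3) one uses that $\mathfrak{U}_g$ is linear, so it commutes with convex combination. In each case, taking $g \in \mathrm{Iso}(\E) \cap \mathrm{Iso}(\F)$ makes the right-hand side of the identity collapse back to the original combination, so $g$ lies in the isotropy of the combined channel; this yields the desired inclusion $\mathrm{Iso}(\,\cdot\,) \supseteq \mathrm{Iso}(\E) \wedge \mathrm{Iso}(\F)$.

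For statement (4) I would first verify the homomorphism property $\mathfrak{U}_h \circ \mathfrak{U}_g = \mathfrak{U}_{hg}$, which follows from $U(h)U(g) = U(hg)$ on each of the input and output Hilbert spaces (and from $\mathcal{U}_g^\dagger = \mathcal{U}_{g^{-1}}$, so that the left-hand side is precisely $\mathfrak{U}_g(\E)$). Then $h \in \mathrm{Iso}(\mathfrak{U}_g(\E))$ iff $\mathfrak{U}_{hg}(\E) = \mathfrak{U}_g(\E)$, and applying $\mathfrak{U}_{g^{-1}}$ to both sides gives $\mathfrak{U}_{g^{-1} h g}(\E) = \E$, i.e. $g^{-1} h g \in \mathrm{Iso}(\E)$. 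This is exactly $\mathrm{Iso}(\mathfrak{U}_g(\E)) = g\,[\mathrm{Iso}(\E)]\,g^{-1}$, the familiar conjugacy relation for stabilisers under a group action; note that here it is an equality rather than an inclusion.

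The main obstacle is not conceptual but one of bookkeeping: the several Hilbert spaces carry their own representations of $G$, so one must check that the regrouping in part (2) is legitimate, namely that the representation on $B' = A$ used to define $\mathfrak{U}_g$ on the output of $\F$ and on the input of $\E$ is one and the same, so that the insertion $\mathcal{U}_{g^{-1}} \circ \mathcal{U}_g = \mathrm{id}$ is valid. Likewise, in part (4) one must confirm the homomorphism property holds simultaneously on the input and output spaces with their possibly distinct representations. Once these matching conditions are made explicit, each step reduces to a short direct calculation.
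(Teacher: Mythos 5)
Your proposal is correct and follows essentially the same route as the paper: for parts (1)--(3) you show $\mathfrak{U}_g$ distributes over tensor product, composition (by inserting $\mathcal{U}_{g^{-1}}\circ\mathcal{U}_g = \mathrm{id}$ at the junction $B'=A$), and convex mixture, so any common fixing element fixes the combination; for part (4) you obtain the stabiliser conjugacy relation. Your part (4) is packaged slightly more cleanly than the paper's --- you use the homomorphism property $\mathfrak{U}_h\circ\mathfrak{U}_g=\mathfrak{U}_{hg}$ to get both inclusions at once via an invertible-map ``iff'' chain, where the paper runs two explicit one-directional computations --- but the underlying calculation is the same, and your explicit flagging of the representation-matching bookkeeping is a point the paper leaves implicit.
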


\noindent These results also apply for quantum states, once we view the state as the state preparation map $1 \rightarrow \rho$. We can now make precise a coarse-grained (necessary, but far from sufficient) requirement that a state $\sigma$ allow the simulation of a channel $\E$ under symmetric dynamics and show that this bound is tight in general. The proof is provided in Appendix \ref{proofs}.

\begin{theorem} \label{simulate}
If a system $B$ in a state $\sigma_B$ can be used to simulate a CPTP map $\E$ under symmetric dynamics, then $\text{Iso}(\sigma_B) \prec \text{Iso}(\E)$. Moreover, if $\E$ has isotropy group $Iso(\E)$ then there exists a quantum system $B$ and quantum state $\sigma_B$ with $\text{Iso}(\sigma_B) = \text{Iso}(\E)$ that allows the simulation of $\E$.
\end{theorem}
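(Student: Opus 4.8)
The plan is to treat the two halves separately, since they have completely different flavours: the necessity direction is a direct computation, while the tightness direction requires an explicit construction.

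\emph{Necessity} ($\text{Iso}(\sigma_B)\prec\text{Iso}(\E)$). I would start from the simulation formula $\E(\rho)=\tr_B V(\rho\otimes\sigma_B)V^\dagger$ and take an arbitrary $g\in\text{Iso}(\sigma_B)$, so that $U_B(g)\sigma_B U_B^\dagger(g)=\sigma_B$. The goal is to show $\mathfrak{U}_g(\E)=\E$. Writing out $\mathfrak{U}_g(\E)(\rho)=U_A(g)\,\E\!\left(U_A^\dagger(g)\rho U_A(g)\right)U_A^\dagger(g)$ and substituting the simulation formula, the key move is to use invariance of $\sigma_B$ to rewrite $U_A^\dagger(g)\rho U_A(g)\otimes\sigma_B=(U_A(g)\otimes U_B(g))^\dagger(\rho\otimes\sigma_B)(U_A(g)\otimes U_B(g))$. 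The covariance condition $[V,\,U_A(g)\otimes U_B(g)]=0$ then lets me commute the global unitary through $V$; the $U_A(g)$ factors pulled outside the partial trace cancel the ones brought in by $\mathfrak{U}_g$, leaving only a conjugation by $U_B(g)$ inside $\tr_B$, which disappears by cyclicity of the partial trace with respect to operators on $B$. Hence $\mathfrak{U}_g(\E)=\E$, i.e. $g\in\text{Iso}(\E)$. I expect this half to be entirely routine.

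\emph{Tightness.} Set $H=\text{Iso}(\E)$; by definition $\E$ is covariant under the subgroup $H$. I would first invoke a covariant Stinespring dilation with respect to $H$: an isometry $W:\H_A\to\H_{A'}\otimes\H_E$ and a representation $U_E$ of $H$ with $W U_A(h)=(U_{A'}(h)\otimes U_E(h))W$ for $h\in H$ and $\E(\rho)=\tr_E W\rho W^\dagger$. I then adjoin a reference system $R$ spanning the coset space $G/H$, with basis $\{|gH\rangle\}$ and $U_R(g')|gH\rangle=|g'gH\rangle$, and define the coset-controlled map $\tilde V=\sum_{gH}\left[(U_{A'}(g)\otimes U_E(g))W U_A^\dagger(g)\right]\otimes|gH\rangle\langle gH|$. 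The $H$-covariance of $W$ is exactly what makes each block independent of the coset representative, so $\tilde V$ is well defined, and a short reindexing shows it commutes with $U_A(g')\otimes U_E(g')\otimes U_R(g')$, i.e. it is $G$-covariant. Choosing $\sigma_B=|eH\rangle\langle eH|$ selects the $g=e$ block, giving $\tr_{ER}\tilde V(\rho\otimes\sigma_B)\tilde V^\dagger=\tr_E W\rho W^\dagger=\E(\rho)$, while $U_R(g)|eH\rangle=|eH\rangle$ holds iff $g\in H$, so $\text{Iso}(\sigma_B)=H$ exactly. A final routine step dilates the isometry $\tilde V$ to a genuine covariant unitary by padding the ancilla, which does not alter $\text{Iso}(\sigma_B)$.

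\emph{Main obstacle.} The substance lies in the tightness direction, and I anticipate two difficulties. First, the existence of the $H$-covariant dilation with the stated intertwining property must be imported from the covariant Stinespring theorem, which is clean for finite (or compact) $G$. More seriously, for continuous groups such as the $SU(2)$ case central to the paper, $G/H$ is a manifold and the sharp coset state $|eH\rangle$ is non-normalizable; any genuinely normalizable reference that is diagonal in the coset basis reproduces only an average $\int d\mu(gH)\,p(gH)\,\mathfrak{U}_g(\E)$ rather than $\E$ itself, and coherences in the coset basis do not survive $\tr_R$. Reconciling \emph{exact} simulation with \emph{exact} isotropy $\text{Iso}(\sigma_B)=H$ for continuous $G$ therefore appears to require an idealized (improper) reference or a limiting argument, and this is the step I expect to absorb most of the effort.
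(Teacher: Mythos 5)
Your necessity half is correct and is essentially what the paper does: the paper gives both a one-line version (view $\sigma_B$ as a preparation map and apply $\mathrm{Iso}(\mathcal{F}\circ\sigma_B)\succ\mathrm{Iso}(\mathcal{F})\wedge\mathrm{Iso}(\sigma_B)$ from Lemma \ref{concatenate}) and exactly your computation of commuting the covariant unitary through the group action. Your tightness half, however, takes a genuinely different route. The paper does not build a coset-controlled dilation; it invokes the harmonic-analysis decomposition of the channel into irreducible process modes, $\mathcal{E}=\sum_{\lambda,k}\alpha_{\lambda^*,k}(\boldsymbol{x})\,\Phi_{\lambda,k}$, with coefficients that are harmonic functions on $\mathcal{M}=G/\mathrm{Iso}(\mathcal{E})$, takes the reference $B$ to be $\mathcal{L}^2(\mathcal{M},\mathbb{C})$ in the pure state with wavefunction $\psi(\boldsymbol{z})=\delta(\boldsymbol{z}-\boldsymbol{x})$ ``understood in the distributional sense'', and realises $\mathcal{E}$ via a covariant position measurement on $\mathcal{M}$. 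Note in particular that the issue you single out as the main obstacle --- non-normalizability of a sharp coset state for continuous $G$ --- is not circumvented by the paper either: the published proof simply accepts the idealized distributional reference, exactly as your construction would. On that point your proposal is no less rigorous than the paper's argument, and your explicit acknowledgement of it is to your credit.

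The genuine gap is elsewhere, in the definition of your coset-controlled isometry. The covariant Stinespring theorem applied to the $H$-covariant channel $\mathcal{E}$ (with $H=\mathrm{Iso}(\mathcal{E})$) furnishes $W$ together with a representation $U_E$ \emph{of $H$ only}; yet your $\tilde V=\sum_{gH}\left[(U_{A'}(g)\otimes U_E(g))W U_A^\dagger(g)\right]\otimes|gH\rangle\langle gH|$ evaluates $U_E$ at arbitrary $g\in G$, and your covariance check implicitly uses the homomorphism property $U_E(g'k)=U_E(g')U_E(k)$ for general group elements. As written, both the well-definedness of the blocks and the $G$-covariance of $\tilde V$ fail to parse. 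This is repairable: for compact $G$, every irrep of the closed subgroup $H$ occurs in the restriction to $H$ of some irrep of $G$ (Frobenius reciprocity / induced representations), so one can isometrically and $H$-equivariantly embed $\mathcal{H}_E$ into a space carrying a genuine $G$-representation, and the intertwining relation for $W$ survives the embedding while leaving $\mathrm{tr}_E W\rho W^\dagger$ unchanged. But that is a missing idea, not a routine step, and it must be stated for the construction to make sense. (Similarly, your final ``routine'' dilation of the covariant isometry to a covariant unitary requires the padding ancilla to start in a $G$-invariant state and the complementary subspaces to be $G$-equivariantly matched, which deserves at least a sentence.) With those repairs in place, your construction does deliver $\mathrm{Iso}(\sigma_B)=H$ exactly, subject to the same distributional caveat that the paper itself accepts.
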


\noindent This result gives the basic relation, in the coarse-grained picture, for when a quantum state $\sigma_B$ can be used to simulate a quantum channel and also shows that this relation is a tight one. This is necessary in order for the isotropy subgroup approach to be consistent with the parent resource theory, where one is concerned with the minimal resources needed to realise a particular quantum channel.

A similar result can be given in the case of approximate simulation.  For any target operation $\E$ we can define the noisy version given by
\begin{equation}
\E_\epsilon = (1-\epsilon) \ \E + \epsilon \ \D
\end{equation}
where $\D$ is the complete depolarization map $\rho \mapsto \frac{1}{d}\I$ for all $\rho$. From Lemma \ref{concatenate} we see that $\mbox{Iso}(\E_\epsilon) = \mbox{Iso}(\E) \wedge G = \mbox{Iso}(\E)$. If $\sigma$ allows the simulation of $\E_\epsilon$ for some $0< \epsilon <1$ then this corresponds to an approximate, `isotropic' simulation of the original map with noise parameter $\epsilon$. Theorem \ref{simulate} extends to this case in an obvious way.

\subsection{Is `shape' a resource-theoretic property?}

While we normally associate measurable properties with either projective measurements or more generally POVMs, there is an alternative way that is more general again. Specifically, a property is associated with a \emph{pre-order} defined on the set of all quantum states. This recent approach is called the resource-theory method, and has found success in areas such as entanglement, coherence, thermodynamics, and many other scenarios \cite{resourcetheory,entanglement, coherencereview, thermoreview}.  Quantum maps that respect the pre-order are called `free operations' and any real-valued function on quantum states that respects the pre-order is a \emph{measure} of the property.

Given an ability to order the isotropy subgroups of quantum states and channels under the group action, we can ask if a meaningful notion of `shape' can be defined for quantum systems, along such a resource-theoretic line. This would essentially characterise the asymmetry of states and channels without reference to a measure.

For continuous Lie groups, we find that the orbit of a state or channel is always a homogeneous space \cite{karol,homogeneous}, specified by both the group $G$ and the particular isotropy subgroup $H$ of the channel. More precisely the orbit of the channel is $\M(\E) $ and coincides with the quotient $G/H$ up to diffeomorphisms, which we write $\M(\E) \cong G/H$ (and likewise for states).\footnote{Technically we define the `shape' of the group orbit with respect to the conjugacy class of the isotropy subgroup $H$. This `shape' is also called the Orbit-Type in the literature. See \cite{bredon} for further details.} This is what we call the `shape' of a state or channel.

A simple example can be given for a single qubit state under an $SU(2)$ symmetry, where the possible types of group orbit are
\begin{equation}
\mathcal{M} \left(\frac{1}{2}[ \mathds{1} + \mathbf{r} \cdot \sigma ]\right) =
\begin{cases}
SU(2)/U(1) \cong S^2 & \mathbf{r} \neq 0 \\
SU(2)/SU(2) \cong \{e\} & \mathbf{r} = 0,
\end{cases}
\end{equation}
and these are illustrated in Fig. \ref{fig:singlequbit}.

\begin{figure}[h]
\includegraphics[width=8cm]{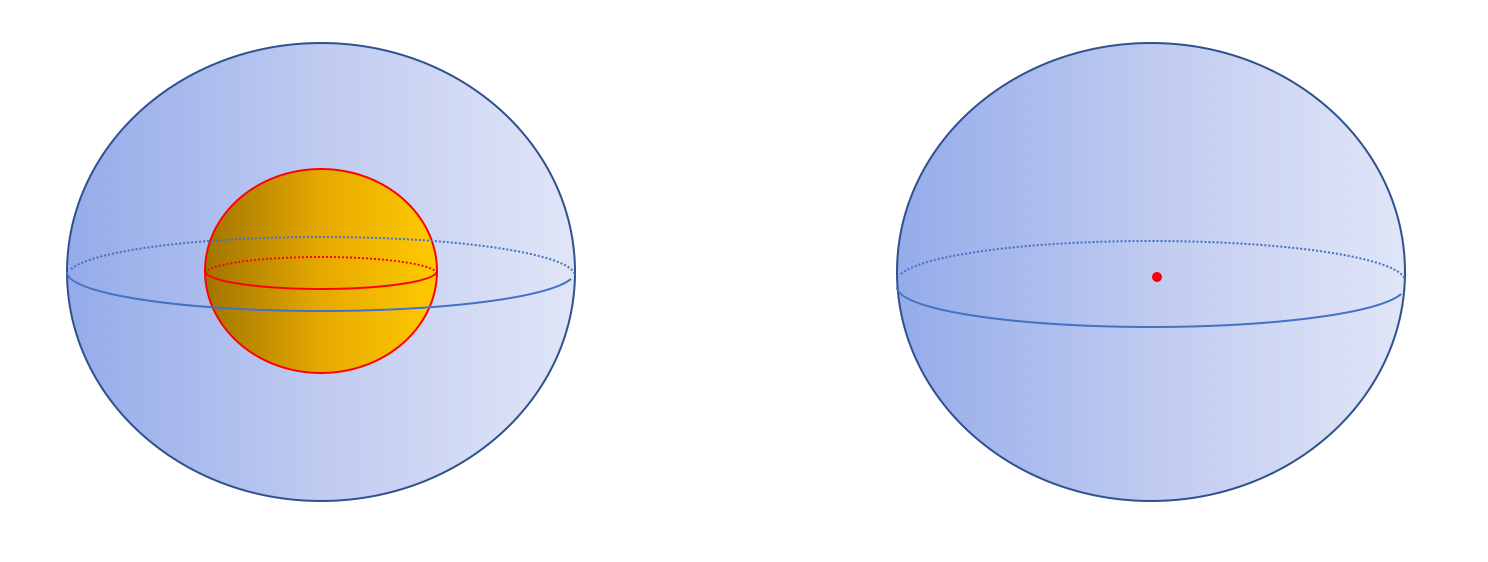}
\caption{\textbf{Basic orbits of states.} Group orbits for a single qubit under the action of $SU(2)$ shown in the Bloch sphere. There are only two possibilities in this case: a sphere for non-zero Bloch vectors (left), or a point for the symmetric maximally mixed state (right).}
\label{fig:singlequbit}
\end{figure}

The lattice of isotropy subgroups gives us a way of comparing the `shapes' of group orbits under a group action $G$, with $\M(\E) \prec \M(\F)$ iff $\rm{Iso}(\E) \succ \rm{Iso}(\F)$ (up to isomorphism of the isotropy subgroups). Likewise for states, $\M(\rho) \prec \M(\sigma)$ iff $\rm{Iso}(\rho) \succ \rm{Iso}(\sigma)$ (up to isomorphism). The convention here is to match with resource-theoretic measures of asymmetry \cite{resourceframes, vaccaroasymmetry}, so that one `shape' is `bigger' than another if it has a smaller isotropy group. For example, a single-qubit state with non-zero Bloch vector ($\M(\rho) \cong S^2$) has a `bigger' group orbit than the maximally mixed state ($\M(\mathds{1}/2) \cong \{ e \}$). This gives valuable information regarding the resources required to perform a quantum channel in the presence of symmetry constraints.

The isotropy subgroup gives us a natural equivalence relation $\sim$ between states that behave in the same way under the group action. We can then say that $\rho_1$ and $\rho_2$ are related $\rho_1 \sim \rho_2$ if we have that $\rm{Iso}(\rho_1) = \rm{Iso}(\rho_2)$. This equivalence relation partitions the state space into sets $C(H) := \{ \rho  : \ \rm{Iso}(\rho) = H \}$, and collects together all states that break the symmetry in the same manner. Taking the union of $C(H)$ corresponding to isomorphic $H$ partitions the states and channels according to their `shape'.

Viewed as a mapping, $C$ maps from the subgroup lattice onto a partition of the state space, and so we can simply allow the partition to  inherit the lattice structure, and order subsets of states as $C(H_1) \succ C(H_2)$ whenever $H_1 \prec H_2$.

Note however that the sets $\{C(H)\}$ are not convex in general. For example a qubit system under $G=SU(2)$, both $\ket{0}\bra{0}$ and $\ket{1}\bra{1}$ have the same $U(1)$ isotropy subgroup, but $\frac{1}{2}(\ket{0}\bra{0} + \ket{1}\bra{1}) = \frac{1}{2} \mathds{1}$, which is symmetric. However, $C(G)$ is always a convex set. This follows because if $\rho_1, \rho_2 \in C(G)$ then Lemma \ref{concatenate} implies that $G \succ \rm{Iso}(p_1\rho_1 + p_2 \rho_2) \succ \rm{Iso}(\rho_1)\wedge \rm{Iso}(\rho_2) = G$, and so any mixture has the same isotropy subgroup.

The $SU(2)$ qubit example also highlights that if $C(H) \neq \emptyset$ it does not imply that $C(H') \neq \emptyset$ for all $H' \succ H$, since there are subgroups of $SU(2)$ containing $U(1)$ which do not appear as the isotropy subgroup of a single qubit state. Not all possible isotropy subgroups will be seen in a given system, and for a finite dimensional system many of the sets $C(H)$ will be empty.

However in order to have a meaningful resource-theory interpretation, the coarse-grained ordering must be compatible with the set of free operations -- namely the symmetric operations. It is easy to show that this is in fact the case at the level individual quantum states.
\begin{lemma} \label{isotropy nondecreasing}
Under a symmetric operation $\E$, $\mathrm{Iso}( \E (\rho)) \succ \rm{Iso}(\rho)$.
\end{lemma}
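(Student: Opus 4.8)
The plan is to prove the claimed relation as a direct set inclusion. Recalling the convention that $H_1 \succ H_2$ means $H_1 \supseteq H_2$, the statement $\mathrm{Iso}(\E(\rho)) \succ \mathrm{Iso}(\rho)$ is equivalent to the assertion that every group element fixing $\rho$ also fixes the output state $\E(\rho)$. So I would fix an arbitrary $g \in \mathrm{Iso}(\rho)$, i.e.\ $\mathcal{U}_g(\rho) = \rho$, and aim to show $\mathcal{U}_g(\E(\rho)) = \E(\rho)$, which is exactly the condition for $g \in \mathrm{Iso}(\E(\rho))$.

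The key step is to repackage the hypothesis that $\E$ is symmetric into a covariance (intertwining) relation at the level of superoperators. By definition a symmetric operation satisfies $\mathfrak{U}_g(\E) = \mathcal{U}_g \circ \E \circ \mathcal{U}_{g^{-1}} = \E$ for all $g \in G$; composing both sides on the right with $\mathcal{U}_g$ yields the identity $\mathcal{U}_g \circ \E = \E \circ \mathcal{U}_g$. This is the only point at which the symmetry assumption on $\E$ is used, and it is what lets the group action commute through the channel.

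With this identity in hand the remaining calculation is immediate: for $g \in \mathrm{Iso}(\rho)$ one has $\mathcal{U}_g(\E(\rho)) = \E(\mathcal{U}_g(\rho)) = \E(\rho)$, where the first equality is the covariance relation and the second uses $\mathcal{U}_g(\rho) = \rho$. Hence $g \in \mathrm{Iso}(\E(\rho))$, and since $g$ was arbitrary this gives $\mathrm{Iso}(\rho) \subseteq \mathrm{Iso}(\E(\rho))$, which is the claim.

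I do not expect a genuine obstacle here, since the content is bookkeeping of the definitions; the one subtlety worth flagging explicitly is the orientation of the partial order, ensuring that the inclusion $\mathrm{Iso}(\rho) \subseteq \mathrm{Iso}(\E(\rho))$ is correctly matched to the relation $\succ$ rather than $\prec$. It is also worth remarking that the inclusion can be strict, as the output may acquire additional residual symmetry (for instance when $\E$ is fully depolarizing and sends every input to the symmetric maximally mixed state), so the result is an inclusion and not an equality.
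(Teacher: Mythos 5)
Your proof is correct, and it takes a different (more elementary) route than the paper. The paper derives the lemma as a corollary of its Lemma 1: it views the state $\rho$ as a preparation channel $1 \rightarrow \rho$, applies the composition statement $\mathrm{Iso}(\mathcal{E} \circ \rho) \succ \mathrm{Iso}(\mathcal{E}) \wedge \mathrm{Iso}(\rho)$, and then collapses the meet using the fact that $\mathcal{E}$ symmetric means $\mathrm{Iso}(\mathcal{E}) = G$ and $G \wedge H = H$ for any subgroup $H$. Your argument instead inlines, in the special case at hand, exactly the computation that underlies that lemma: the intertwining relation $\mathcal{U}_g \circ \mathcal{E} = \mathcal{E} \circ \mathcal{U}_g$ obtained from $\mathcal{U}_g \circ \mathcal{E} \circ \mathcal{U}_{g^{-1}} = \mathcal{E}$, followed by the one-line chain $\mathcal{U}_g(\mathcal{E}(\rho)) = \mathcal{E}(\mathcal{U}_g(\rho)) = \mathcal{E}(\rho)$ for $g \in \mathrm{Iso}(\rho)$. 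What your version buys is self-containedness and transparency: a reader needs no prior lemma, the single covariance identity does all the work, and you make the orientation of the partial order ($\succ$ meaning $\supseteq$) explicit, which is the only place a sign error could hide. What the paper's version buys is modularity: it reuses established machinery, reinforces the device of treating states as channels (used again in Theorem 1), and exhibits the result as an instance of the lattice-theoretic structure (meets of isotropy subgroups) that organises the whole section. Your closing remark that the inclusion can be strict, e.g.\ for the fully depolarizing channel, is also consistent with the paper's discussion and correctly signals that no equality claim is being made.
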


\noindent This shows that the coarse-grained ordering of states is consistent with the set of free (symmetric) operations, as expected. However it does not imply that a functional mapping is defined on the sets $C(H)$; it is possible to have $\rho_1$ and $\rho_2$ in the same set $C(H)$, but each get sent to different sets $C(H'_1)$ and $C(H'_2)$. This aspect means that interpreting the coarse-graining as defining a quantum resource needs care.

\subsection{The speck of dust argument -- tiny perturbations to symmetric states \& channels}
	
Having outlined how the set of quantum states or the set of quantum channels of a quantum system is partitioned according to the symmetry action one might raise the following concern: what happens if one perturbs a channel via some small perturbation? How does this affect its isotropy subgroup?

It is readily seen that one can make an arbitrarily small perturbation to any quantum channel $\E$ so as to break the symmetry in a ``maximal'' way. Informally this amounts to the ``speck of dust'' argument where, for example, a perfectly rotationally symmetric ball can be made completely asymmetric by sticking arbitrarily small pieces of dust onto it (Fig. \ref{fig:speck}).

\begin{figure}[h]
\includegraphics[width=8.5cm]{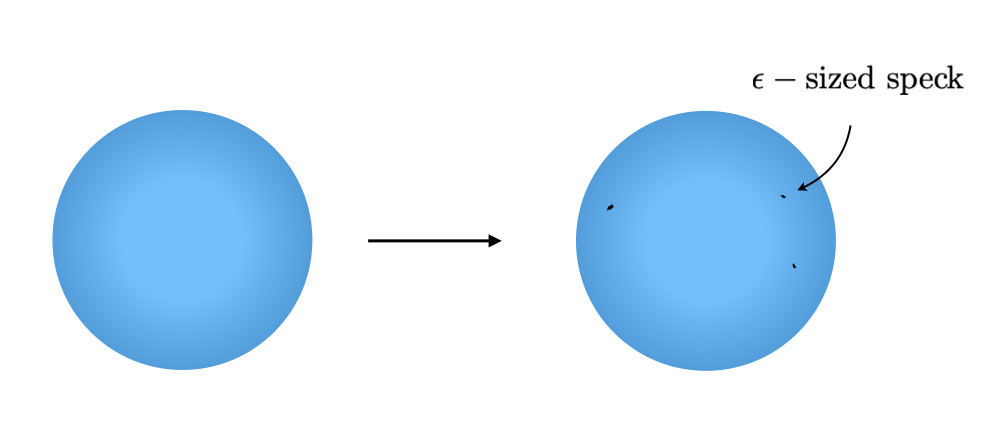}
\caption{\textbf{The speck of dust argument.} If we start with a perfectly rotationally symmetric object (left image) we can always add arbitrarily small perturbations that completely break the rotational symmetry (right image). However physically the fact that we have finite resolution means that quantum states and channels should always be understood as being defined up to some $\epsilon$--smoothing scale. This does not affect the resource-theoretic account and therefore the `shape' is a robust feature in the presence of such tiny symmetry-breaking perturbations. }
\label{fig:speck}
\end{figure}

In more abstract terms, one can use the principal orbit-type theorem \cite{bredon} for the space of quantum channels of a system to see that there is always a \emph{principal isotropy} subset that forms a dense subset in the set of all quantum channels. Thus ``most'' quantum channels on a system will break a symmetry to a maximal degree (the principal isotropy). 

This argument of course applies equally well to the partitioning of the quantum state space into different subsets with fixed isotropies, however in both cases this does not pose a problem for our formalism for the following reasons. Firstly, in resource theories we are fundamentally interested in the most efficient use of resources and therefore our task is to look for those quantum resource states that break the symmetry not in a maximal way, but in a \emph{minimal} way. 

Secondly, at the level of channels, the most interesting and commonly addressed channels in quantum information science are \emph{not} of principal isotropy type — namely they have non-trivial isotropies. For example, consider quantum channels on a single qubit, and the irreducible group action of $G=SU(2)$. A typical quantum channel on the qubit will only have a small residual isotropy group $\mathbb{Z}_2 = \{ \mathds{1}, - \mathds{1} \}$ (the principal isotropy subgroup of the set of all qubit channels). However consider the following standard single qubit channels:
\begin{enumerate}
\item The depolarizing channel: $\E(\rho) = p \rho + (1-p) (\frac{1}{2} \I)$, with $0\le p \le 1$.
\item A rotation about the $\hat{n}$ axis: $U_{\hat{n}} (\theta) = \exp [-i\hat{n}\cdot \boldsymbol{\sigma}]$.
\item A projective measurement along the $\hat{n}$ axis: $\E(\rho) = \sum_k \tr(\Pi_k \rho) \Pi_k$, where $\Pi_0 = \ket{\hat{n}}\bra{\hat{n}}$ and $\Pi_1 = \mathds{1} - \Pi_0 =  \ket{-\hat{n}}\bra{-\hat{n}}$.
\item Partial dephasing along the axis $\hat{n}$: $\E(\rho) = p \rho + (1-p)  \sum_k \tr(\Pi_k \rho) \Pi_k$, where $\Pi_0 = \ket{\hat{n}}\bra{\hat{n}}$ and $\Pi_1 = \mathds{1} - \Pi_0 =  \ket{-\hat{n}}\bra{-\hat{n}}$.
\item A qubit state preparation channel: $\E(\rho) = \frac{1}{2} ( \mathds{1} + p \ \hat{n} \cdot \sigma )$.
\end{enumerate}

\noindent It is readily seen that the isotropy subgroup for channel (1) is $SU(2)$, while for channels (2--5) it is a $U(1)$ subgroup of $SU(2)$. The orbits will have `shapes' $SU(2) /SU(2) \cong \{ e \}$ and $SU(2) /U(1) \cong S^2$ respectively. None of these channels have principal isotropy type $\{\I , -\I\}$, however they are clearly key quantum operations in quantum information science for which one might wish to determine the minimal quantum resources necessary to realise them. 

Thirdly, the fact that something is “measure zero” in some space does not imply that it is physically irrelevant. For example, a 2-d plane is a measure zero set in three spatial dimensions, however this does not mean that anyonic physics or the quantum hall effect is irrelevant. Or of closer bearing on our present work, we can consider the case of Noether’s theorem: clearly the set of dynamics with rotational symmetry is again “measure zero”, however this does not mean that Noether’s theorem is irrelevant for physics. In both cases while the restrictions are strictly unstable under small perturbations, the important thing is that the features of interest are operationally robust under small perturbations. 

The speck of dust subtlety makes itself apparent in our setting if one tries to naively exploit some metric on quantum states to quantify how far apart the sets $\{C(H)\}$ are from each other.  It is readily seen that the distance been any two such sets is in fact zero.
\begin{lemma}
Let $d(\cdot, \cdot)$ be any metric on the space of quantum states. In terms of this metric we define
\begin{equation}
\mathbf{d}(C(H_1) , C(H_2)) := \inf_{\substack{\sigma_1 \in C(H_1) \\ \sigma_2 \in C(H_2)}} d(\sigma_1 , \sigma_2).
\end{equation}
Then $\mathbf{d}(C(H_1),C(H_2)) = 0$ for all $H_1, H_2 \prec G$. 
\end{lemma}

\noindent This also shows that any set $C(H)$ is arbitrarily close to the set $C(G)$. However, this simply means that any symmetry properties must always be considered up to some \emph{finite resolution scale} -- arbitrarily small perturbations can always eliminate residual symmetries, even though this state is practically indistinguishable from the unperturbed one. 

Therefore any membership of a quantum state $\rho$ to a subset should only be considered up to some smoothing scale $\epsilon$ based on a distance measure $d$ (such as from the $L_1$ norm). For each state there is an $\epsilon$-ball of nearby states, $\mathcal{B}_\epsilon (\rho) = \{ \sigma : d(\rho,\sigma) \leq \epsilon \}$. Rather than $\rm{Iso}(\rho)$, we should consider $\rm{Iso}_\epsilon (\rho) := \max_{\sigma \in \mathcal{B}_\epsilon (\rho)} \rm{Iso}(\sigma)$. Intuitively, if a state $\rho$ is within a distance $\epsilon$ of another state with more residual symmetries, we associate those additional residual symmetries with $\rho$. The smoothing scale selects the size of symmetry-breaking perturbations that we wish to consider, and will depend on the particular physical context.

This reasoning applies equally to quantum channels. For example, the depolarisation channel on a qubit is fully invariant under the $G=SU(2)$ and so no non-trivial resource state is needed to simulate it. If one perturbs this channel to $\E \rightarrow \E_\epsilon$ by some $\epsilon >0$ perturbation in the diamond norm one can clearly break this isotropy to the principal isotropy $\{\I,-\I\}$, however it is also clear (e.g. from the continuity of the Stinespring dilation \cite{stinespringcontinuity}) that $\E_\epsilon$ can either be simulated exactly with a resource state on $B$ that $\epsilon$-close to be being symmetric, or simulated approximately to the same threshold with a perfectly symmetric state on $B$. Thus the use of isotropy groups within the resource theory context is robust under such small perturbations and the speck of dust argument is moot. We illustrate this point explicitly in Section \ref{twoqubits}, where we demonstrate that for the case of two qubits under $SU(2)$ the notion of different `shapes' is perfectly robust under the particular case of perturbations that allow a $4\%$ error rate.

\subsection{Projecting out residual symmetries via quantum operations}\label{projecting}

Given the structure of states under the above partition, we can ask how easy it is to move from one set $C(H)$ to another. As discussed one does not in general have quantum operations that map any $C(H)$ neatly into some other subset. Instead it makes sense to consider the sets $\hat{C}(H) := \bigcup_{W \succ H} C(W)$. These sets are quite natural to consider because Lemma \ref{isotropy nondecreasing} implies that each $\hat{C}(H)$ is closed under symmetric operations, and any symmetric operation provides a well-defined mapping of $\hat{C}(H) \rightarrow \hat{C}(f(H))$.

The quantum operation
\begin{equation}
\mathcal{P}_H(\rho) := \int_H dh \ \mathcal{U}_h (\rho)
\end{equation}
is the average of the state $\rho$ over a fixed subgroup $H$ weighted by the invariant Haar measure $dh$. In the case of a finite subgroup of size $|H|$, the integral $\int_H dh$ is replaced by the sum $\frac{1}{|H|} \sum_{h \in H}$.
\begin{lemma} \label{H-Twirling}
The map $\mathcal{P}_H$ has the following properties:
\begin{enumerate}
\item $\mathcal{P}_H$ is the (orthogonal) projector onto $\hat{C}(H)$.
\item $\mathcal{P}_H(\rho) = \arg \min_{\sigma \in \hat{C}(H)} S( \rho || \sigma )$, where $S( \rho || \sigma )$ is the relative entropy.
\end{enumerate}
\end{lemma}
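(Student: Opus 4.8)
My plan begins by observing that $\hat{C}(H)$ coincides exactly with the set of $H$-invariant density operators: a state $\rho$ lies in $C(W)$ for some $W \succ H$ precisely when $\mathrm{Iso}(\rho) \supseteq H$, i.e. when $\U_h(\rho) = \rho$ for all $h \in H$. I would work throughout in $\B(\H)$ equipped with the Hilbert--Schmidt inner product $\langle A, B \rangle = \tr(A^\dagger B)$, and let $V_H \subseteq \B(\H)$ denote the subspace of $H$-invariant operators, so that $\hat{C}(H)$ is precisely the set of density operators lying in $V_H$.

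For part (1), the plan is to verify the three defining properties of an orthogonal projector onto $V_H$. Idempotency, $\P_H \circ \P_H = \P_H$, follows from writing the composition as a double average $\int_H dh \int_H dh'\, \U_{hh'}(\rho)$ and using left-invariance together with normalisation ($\int_H dh = 1$) of the Haar measure. Self-adjointness, $\langle A, \P_H(B)\rangle = \langle \P_H(A), B\rangle$, follows by using cyclicity of the trace to move $U(h)$ and $U(h)^\dagger$ across the argument, followed by the inversion-invariance $h \mapsto h^{-1}$ of the Haar measure. Finally I would identify the range: $\P_H(\rho)$ is manifestly $H$-invariant by left-invariance, so $\mathrm{ran}\,\P_H \subseteq V_H$, while conversely any $\sigma \in V_H$ satisfies $\U_h(\sigma)=\sigma$ and hence $\P_H(\sigma)=\sigma$, giving the reverse inclusion and fixing $V_H$ as the fixed-point set. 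Since $\P_H$ is a convex average of unitary conjugations it is CPTP, so it maps density operators to density operators, and the restriction of the projector to the state space lands exactly in $\hat{C}(H)$.

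For part (2), I would prove the Pythagorean-type identity
\begin{equation}
S(\rho \| \sigma) = S(\rho \| \rho_H) + S(\rho_H \| \sigma) \qquad \text{for all } \sigma \in \hat{C}(H),
\end{equation}
where $\rho_H := \P_H(\rho)$. Expanding both sides, this reduces to showing $\tr\big[(\rho - \rho_H)(\log \rho_H - \log \sigma)\big] = 0$. The crucial point is that $\log \rho_H$ and $\log \sigma$ are each $H$-invariant, since a function of an $H$-invariant operator is again $H$-invariant ($U(h) f(X) U(h)^\dagger = f(U(h)XU(h)^\dagger) = f(X)$); hence $Y := \log\rho_H - \log\sigma \in V_H$ and $\P_H(Y) = Y$. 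Using the self-adjointness of $\P_H$ established in part (1), together with $\rho_H = \P_H(\rho)$, gives $\tr(\rho_H Y) = \tr(\rho\, \P_H(Y)) = \tr(\rho Y)$, which is exactly the required identity. Since $S(\rho_H \| \sigma) \ge 0$ with equality iff $\sigma = \rho_H$, the identity yields $S(\rho\|\sigma) \ge S(\rho\|\rho_H)$ with equality only at $\sigma = \rho_H$, so $\rho_H = \P_H(\rho)$ is the unique minimiser.

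I expect the main obstacle to be the bookkeeping of supports and the resulting domain of definition of the logarithms. One must first observe that $\mathrm{supp}(\rho) \subseteq \mathrm{supp}(\rho_H)$ (since the $h=e$ term of the positive average $\rho_H$ already contributes $\mathrm{supp}(\rho)$), so that $S(\rho\|\rho_H)$ is finite; and that any competitor $\sigma$ with $S(\rho\|\sigma) < \infty$ must satisfy $\mathrm{supp}(\rho) \subseteq \mathrm{supp}(\sigma)$, whence $H$-invariance of the subspace $\mathrm{supp}(\sigma)$ forces $\mathrm{supp}(\rho_H) \subseteq \mathrm{supp}(\sigma)$ as well. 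This ensures every trace appearing in the identity is finite and that the invariance arguments apply on the relevant supports; states $\sigma$ with infinite relative entropy are trivially non-minimising and may be discarded.
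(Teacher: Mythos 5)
Your proposal is correct and follows essentially the same route as the paper: part (1) via idempotency, Haar-inversion self-adjointness, and identification of the fixed-point set, and part (2) via Klein's inequality combined with the transfer identity $\tr(\rho \log \sigma) = \tr(\P_H(\rho)\log\sigma)$ for $H$-invariant $\sigma$ — your Pythagorean identity $S(\rho\|\sigma) = S(\rho\|\P_H(\rho)) + S(\P_H(\rho)\|\sigma)$ is precisely a repackaging of the paper's chain of equalities $S(\rho\|\P_H(\rho)) = -S(\rho) + S(\P_H(\rho)) = \min_{\sigma \in \hat{C}(H)} S(\rho\|\sigma)$. Your additional bookkeeping of supports and finiteness of the relative entropies is a welcome refinement that the paper's proof silently glosses over.
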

\noindent Therefore $\mathcal{P}_H$ projects onto $\hat{C}(H)$, as illustrated in Fig \ref{fig:csubh}. Dephasing \cite{nielsen}, $\mathcal{E}(\rho) = \frac{1}{2\pi} \int_0^{2\pi} d\theta \ e^{i \theta Z} \rho e^{-i\theta Z}$, can be viewed as $\P_{U(1)}$, sending $\rho$ to the nearest incoherent state while preserving the diagonal terms of a density matrix.

The mapping $\mathcal{P}_H$ moves down chains of the subgroup Hasse diagram, however this is not always a symmetric operation. The following tells us when such a transformation can be performed freely in the resource theory.
\begin{lemma} \label{twirling resources}
Given a group action for $G$,
$\rm{Iso}(\mathcal{P}_H) = N_G(H)$, where $N_G(H) = \{g \in G \ : \ gHg^{-1} = H \}$ is the normalizer \cite{normalizer} of $H$ in $G$, and therefore if $H$ is a normal subgroup of $G$ ($H \triangleleft G$) then $\P_H$ is a symmetric operation.
\end{lemma}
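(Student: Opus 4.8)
The plan is to compute the conjugated channel $\mathfrak{U}_g(\mathcal{P}_H)$ explicitly and show that averaging over $H$ in a rotated frame is identical to averaging over the conjugate subgroup $gHg^{-1}$. First I would use that the adjoint action is a homomorphism, $\mathcal{U}_a \circ \mathcal{U}_b = \mathcal{U}_{ab}$ with $\mathcal{U}_{g^{-1}} = \mathcal{U}_g^{-1}$, so that for any state $\rho$
\begin{equation}
\mathfrak{U}_g(\mathcal{P}_H)(\rho) = \mathcal{U}_g \left( \int_H dh \ \mathcal{U}_h (\mathcal{U}_{g^{-1}}(\rho)) \right) = \int_H dh \ \mathcal{U}_{ghg^{-1}}(\rho).
\end{equation}
The substitution $h' = ghg^{-1}$ is a group isomorphism from $H$ onto $gHg^{-1}$, and by uniqueness of the normalised (bi-invariant) Haar measure on a compact group it carries $dh$ to the Haar measure on $gHg^{-1}$ (in the finite case this is just a relabelling of the sum), hence
\begin{equation}
\mathfrak{U}_g(\mathcal{P}_H) = \int_{gHg^{-1}} dh' \ \mathcal{U}_{h'} = \mathcal{P}_{gHg^{-1}}.
\end{equation}
This single identity does essentially all of the work.

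With this in hand, $g \in \mathrm{Iso}(\mathcal{P}_H)$ is by definition the statement $\mathcal{P}_{gHg^{-1}} = \mathcal{P}_H$. One inclusion is then immediate: if $g \in N_G(H)$ then $gHg^{-1} = H$, the two averaging maps literally coincide, and so $g \in \mathrm{Iso}(\mathcal{P}_H)$, giving $N_G(H) \subseteq \mathrm{Iso}(\mathcal{P}_H)$. For the reverse inclusion I need that the assignment $H \mapsto \mathcal{P}_H$ is injective on the conjugates of $H$, i.e. that $\mathcal{P}_{gHg^{-1}} = \mathcal{P}_H$ forces $gHg^{-1} = H$; granting this, $\mathrm{Iso}(\mathcal{P}_H) = \{ g : gHg^{-1} = H \} = N_G(H)$ as claimed.

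The step requiring care, and the one I expect to be the main obstacle, is exactly this injectivity. The natural route is to invoke the characterisation in Lemma \ref{H-Twirling}: $\mathcal{P}_K$ is the orthogonal projector onto $\hat{C}(K)$, the set of $K$-invariant states, so equal projectors must have equal images, $\hat{C}(gHg^{-1}) = \hat{C}(H)$, and it then remains to recover the subgroup from its fixed-point set. This is where a faithfulness assumption on the action is genuinely needed: group elements acting trivially in the adjoint representation (for instance the central $\{\I,-\I\}$ that makes the $SU(2)$ action factor through $SO(3)$) cannot be distinguished by any $\mathcal{P}_H$, so the identity is properly read modulo this kernel, equivalently as a statement about the von Neumann algebra generated by $U(H)$ and its normaliser. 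Once injectivity is secured the final corollary is immediate: if $H \triangleleft G$ then $gHg^{-1} = H$ for every $g \in G$, so $N_G(H) = G$ and $\mathrm{Iso}(\mathcal{P}_H) = G$, i.e. $\mathcal{P}_H$ is a symmetric operation.
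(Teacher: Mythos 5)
Your proposal follows exactly the paper's route: both proofs rest on the single computation $\mathfrak{U}_g(\P_H) = \int_H dh\, \U_{ghg^{-1}} = \P_{gHg^{-1}}$, and both then get the inclusion $N_G(H) \subseteq \mathrm{Iso}(\P_H)$ for free --- which is all that the final claim (that $H \triangleleft G$ makes $\P_H$ a symmetric operation) actually requires. The difference lies in the reverse inclusion: the paper's proof simply asserts that $g \in \mathrm{Iso}(\P_H)$ holds \emph{if and only if} $gHg^{-1} = H$, with no supporting argument, whereas you isolate precisely the missing step --- that $\P_{gHg^{-1}} = \P_H$ should force $gHg^{-1} = H$ --- and decline to claim it without a further hypothesis.

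Your hesitation is well founded: that step is false for a general group action, so the equality $\mathrm{Iso}(\P_H) = N_G(H)$ does not hold at the stated level of generality, and no blind attempt could have closed the gap. A concrete counterexample uses the paper's own group $K_2 = \{\pm\mathds{1}, \pm iX, \pm iY, \pm iZ\}$, but represented on a \emph{single} qubit: since $K_2$ acts irreducibly on $\mathbb{C}^2$, its commutant is trivial and the Pauli twirl fully depolarizes, $\P_{K_2}(\rho) = \frac{1}{2}\tr(\rho)\,\mathds{1}$. This channel is symmetric, so $\mathrm{Iso}(\P_{K_2}) = SU(2)$; yet $N_{SU(2)}(K_2)$ is only the binary octahedral group (conjugation must permute the coordinate axes), a proper subgroup. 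Note also that the failure is not only about the kernel $\{\pm\mathds{1}\}$ of the adjoint action, which your faithfulness remark emphasises: reading the same example in $SO(3)$, where the action on operators is faithful, the discrepancy persists, because what $\P_K$ remembers of $K$ is exactly the fixed-point algebra $\mathrm{Fix}(\mathrm{Ad}\,U(K))$ (it is the orthogonal projector onto it, by Lemma \ref{H-Twirling}), and subgroups that are not conjugate-distinguishable by their fixed points --- e.g.\ any irreducibly-acting subgroup versus $G$ itself --- cannot be separated. The correct general statement is $\mathrm{Iso}(\P_H) = \{g \in G : \mathrm{Fix}(\mathrm{Ad}\,U(gHg^{-1})) = \mathrm{Fix}(\mathrm{Ad}\,U(H))\} \supseteq N_G(H)$, with equality exactly when no conjugate of $H$ other than $H$ itself shares its fixed-point set; one can check this condition does hold for the two-qubit representation and the subgroups arising in Section \ref{twoqubits}, which is where the lemma is actually used. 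So your proof establishes everything that can be established here, and the gap you flagged is real --- it is a gap in the paper's own proof, and indeed in the unqualified statement of the lemma, rather than a defect of your argument relative to it.
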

\noindent This constrains the kind of resources needed to move from one $C(H)$ to another using $\mathcal{P}_H$, since this projects onto a given $\hat{C}(H)$, although less resource-hungry ways may exist to perform the same transformation.

\subsection{Discussion of the section results}

In this section we have given an analysis of the what happens when one classifies quantum states or channels in terms of their residual symmetries. We discussed how composition and mixing affect the ordering, and how one can relate these features to the issue of simulation (exact or approximate) of a target quantum operation with some resource state. We also saw that the ordering has subtleties if one wishes to interpret it in a resource theoretic sense. The statements one makes are also quite blunt, and a good example of this is the fact that the sets $\{C(H)\}$ are all arbitrarily close to one another. However the relations still carry non-trivial content, and for example Theorem \ref{simulate} can be viewed as a form of superselection rule on quantum operations that tells us which states are ruled out and which are not.

While this is conceptually neat and provides high-level insight into the complexities of the full classification of states (as described in \cite{gour}), it is less clear how computationally useful or simple these structures are in practice. To address this point, in the next section we consider the important case of a two-qubit system under an $SU(2)$ action. We find that already in just this simple scenario the hierarchy of states is quite complex, however the example does provide insight into what to expect in the more general case.

\section{Illustrative Example: Two Qubits Under $SU(2)$ Symmetry Constraints}\label{twoqubits}

The goal of this section is to illustrate the classification of quantum states in a simple quantum system that has sufficient structure, yet is tractable to the point of being fully solvable. The state space of a two-qubit system is 15 dimensional and is sufficiently non-trivial, moreover there is a very natural group action to consider, namely the tensor product representation of $SU(2)$. The orbits of 2-qubit states have been studied in relation to thermodynamics and correlations within these states \cite{jevtic2012quantum, jevtic2012maximally}.

Any two qubit state $\rho_{AB}$ can be written
\begin{align}
\rho_{AB}&=\frac{1}{4} \bigg( \mathds{1} \otimes \mathds{1} + \mathbf{a} \cdot \sigma \otimes \mathds{1} + \mathds{1} \otimes \mathbf{b} \cdot \sigma  \nonumber \\
& \qquad \qquad \left. + \sum_{i,j = 1}^3 T_{ij} \ \sigma_i \otimes \sigma_j \right),
\end{align}
with local Bloch vectors $\mathbf{a}$ and $\mathbf{b}$ and correlation terms determined by the correlation matrix $T_{ij}$ \cite{tstates}. In general $|\mathbf{a}| \leq 1$ and $|\mathbf{b}| \leq 1$. Moreover, we can put all 2-qubit states into diagonal form,
\begin{align}
\rho_{AB} &= \frac{1}{4} \bigg( \mathds{1} \otimes \mathds{1} + \mathbf{a} \cdot \sigma \otimes \mathds{1} + \mathds{1} \otimes \mathbf{b} \cdot \sigma   \nonumber \\
&\qquad \qquad \left. + \sum_{i=1}^3 \tau_{i} \ \mathbf{c}_i \cdot \sigma \otimes  \mathbf{d}_i \cdot \sigma \right)
\end{align}
where $\{ \mathbf{c}_i \}$ and $\{ \mathbf{d}_i \}$ are orthonormal bases of $\mathbb{R}^3$.  The coefficients $\tau_i$ specify a position $(\tau_1, \tau_2, \tau_3)$ in a tetrahedron with vertices $(-1,-1,-1), (1,1,-1), (1,-1,1)$ and $(-1,1,1)$.  When $\mathbf{a} = \mathbf{b} = 0$, then $\rho_{AB}$ is a quantum state with maximally mixed marginals represented by a point in the tetrahedron. In the case $\mathbf{a}$ or $\mathbf{b}$ being non-zero, every quantum state $\rho_{AB}$ corresponds to a point in the tetrahedron, however the converse is not true: not all triples $(\mathbf{a},\mathbf{b}, \mathbf{t})$ correspond to a valid quantum states.

Via local unitaries $U_1 \otimes U_2$, these can be transformed into the canonical form
\begin{align}
\tilde{\rho}_{AB} &= \frac{1}{4} \bigg( \mathds{1} \otimes \mathds{1} + \mathbf{a}' \cdot \sigma \otimes \mathds{1} + \mathds{1} \otimes \mathbf{b}' \cdot \sigma  \nonumber \\
&\qquad \qquad \left. + \sum_{i=1}^3 \tau_{i} \ \sigma_i \otimes \sigma_i \right)
\end{align}
where $\mathbf{a}' \cdot \sigma = U_1 ( \mathbf{a} \cdot \sigma ) U_1^\dagger$ and $\mathbf{b}' \cdot \sigma = U_2 ( \mathbf{b} \cdot \sigma ) U_2^\dagger$. The states in this canonical form with maximally mixed marginals are called T-states \cite{tstates},
\begin{equation}
\rho_T = \frac{1}{4}\left(\mathds{1} \otimes \mathds{1} + \sum_j \tau_j \ \sigma_j \otimes \sigma_j \right).
\end{equation}
The set of T-states is the convex hull of the four Bell states. The corners of the tetrahedron defined by the coefficients $\{\tau_i\}$ correspond to the Bell states.

We now consider the symmetry properties of 2-qubit states under $SU(2)$ symmetry constraints, assuming an $SU(2)$ tensor product representation for the group action, i.e. $\mathcal{U}_g (\rho)= U(g) \otimes U(g) \ \rho \ U^\dagger (g) \otimes U^\dagger (g)$, where $U(g)$ is a two dimensional irrep of $SU(2)$.

Our description of the group orbit as the quotient space $G / \rm{Iso}(\rho)$ starts with the group manifold of $G$ and eliminates the redundancy caused by the symmetries of $\rho$. The group transformations in $\rm{Iso}(\rho)$ have trivial group action on $\rho$, so we quotient out the equivalence classes $g \ \text{Iso}(\rho) := \{ gh_i : \ h_i \in \text{Iso}(\rho) \}$.

Certain groups have structure which permits us to further simplify our description of the group orbit manifold. Here we consider $SU(2)$ symmetry constraints, however this $SU(2)$ action has a subgroup $\mathbb{Z}_2 = \{ \pm \mathds{1} \}$ that is always a symmetry of a quantum state, and which is associated with $SU(2)$ being the double cover of $SO(3)$. Moreover this sub-group has a key property which allows us to simplify our description of the $SU(2)$ isotropy groups to those of $SO(3)$. We can therefore think in terms of $SO(3)$ subgroups, which are the familiar chiral \emph{point groups} one encounters in for example crystallography \cite{tinkham}.

A subgroup $N$ of $G$ is a normal subgroup, denoted $N \triangleleft G$, if it satisfies $gNg^{-1} = N$ for all $g \in G$ \cite{tinkham}. These partition $G$ into equivalence classes $gN := \{ gn_i : \ n_i \in N \}$, which themselves constitute elements of the quotient group $G/N$. Every element of $G$ is specified by the pairing of the equivalence class $gN$ and an element of the normal subgroup $N$. Likewise, elements of $H$ are specified by the pairing of an element of $H/N$ and an element of $N$ because $N \triangleleft H$. This means that the group orbit manifold $G/H$ can be described by the quotient groups $G/N$ and $H/N$, and we can write $\M( \rho) \cong G/H \cong (G/N) / (H/N)$. In our present analysis $\mathbb{Z}_2$ is a normal sub-group of $SU(2)$ common to all quantum states and so $ \M(\rho) \cong SO(3) / (H/\mathbb{Z}_2)$ where $H$ is the isotropy group of $\rho$ in $SU(2)$. We can therefore use the more intuitive $SO(3)$ subgroups, listed in Table \ref{table:so3subgroups}.

\begin{table}[h] 
\begin{tabular}{ | c | c |} 
    \hline
    Group & Description  \\ \hline
    $C_n$ & Cyclic group of order $n$ \\
    $C_\infty$ & Symmetries of Cone \\
    $D_n$ & Symmetries of a regular $n$-sided Polygon  \\
    $D_\infty$ & Symmetries of Cylinder \\
    $T$ & Symmetries of Tetrahedron \\
    $O$ & Symmetries of Cube/Octahedron \\
    $I$ & Symmetries of Dodecahedron/Icosahedron \\
    \hline
\end{tabular}
\caption{\textbf{Chiral point groups: subgroups of $SO(3)$,} as detailed in \cite{tinkham}. All quantum states have isotropy subgroups under $SU(2)$ that can be related to particular chiral point group.} \label{table:so3subgroups}
\end{table}

\noindent In the next section we look at the isotropy subgroups of the Bell states, and then go on to classify the symmetry properties of the T-states. Then we continue onto the wider class of 2-qubit states with maximally mixed marginals and finally we complete the classification for general 2-qubit states by considering non-zero local Bloch vectors.

\subsection{Bell States}

The Bell states correspond to the extremal points of the tetrahedron of T-states. Their isotropy subgroups under the $SU(2)$ group action can be calculated directly, as described in Appendix \ref{Bell}.

The singlet Bell state $\psi^-$ is symmetric under $SU(2)$, therefore $\rm{Iso}(\psi^-) = SU(2)$ and so $\M(\psi^-) = \{e\}$ as expected. The triplet Bell states do break the $SU(2)$ symmetry, with
\begin{align}
\rm{Iso}(\phi^+) &= \{ (iZ)^\alpha e^{i \theta Y} : \ 0 \leq \theta < 2\pi, \ \alpha = 0,1,2,3 \} .
\end{align}
We will call this subgroup $K_\infty \cong U(1) \rtimes \mathbb{Z}_4$, and it provides an example of a semi-direct product group\footnote{ The semi-direct product \cite{tinkham} is constructed from two groups, in this case $\mathbb{Z}_4$ and $U(1)$. Each element of the semidirect product group can be thought of as a pairing $(h,n)$ where $h \in H = \mathbb{Z}_4$ and $n \in N = U(1)$. However the multiplication law of the semidirect product group does not treat the elements of this pairing independently, with $(h_1, n_1) (h_2,n_2) = (h_1h_2,f_{h_2}(n_1)n_2)$ where $f_{h} : N \rightarrow N$ is a automorphism on the second group specified by an element of the first group.  The semi-direct product $\rtimes$ is defined by the choice of automorphism $f_h$. Note that $N \triangleleft (N \rtimes H)$. If a trivial (identity) automorphism is chosen, where $f_h(n) = n$ for any $h$ and $n$, we have a direct product group \cite{tinkham}. In the $U(1) \rtimes \mathbb{Z}_4$ example, the group multiplication law is
\begin{equation*}
\left[(iZ)^{\alpha_1} e^{i \theta_1 Y} \right] \ \left[ (iZ)^{\alpha_2} e^{i \theta_2 Y} \right] = (iZ)^{\alpha_1 + \alpha_2} \ e^{i [(-1)^{\alpha_2} \theta_1 + \theta_2] Y}.
\end{equation*}
}.
Similarly,
\begin{align}
\rm{Iso}( \phi^-) = \{ (iY)^\alpha e^{i \theta X} : \ 0 \leq \theta < 2\pi, \ \alpha = 0,1,2,3 \}
\end{align}
and
\begin{align}
\rm{Iso}( \psi^+) = \{ (iX)^\alpha e^{i \theta Z} : \ 0 \leq \theta < 2\pi, \ \alpha = 0,1,2,3 \},
\end{align}
both of which are isomorphic to $K_\infty$.

Given the triplet Bell states have isomorphic isotropy subgroups, their group orbits will have the same shape, $\M(\phi^+) \cong \M(\phi^-) \cong \M(\psi^+) \cong SU(2)/K_\infty$. In the previous section we described how normal subgroups simplify our description of the group orbit, and we can use this here to put the description in terms of $SO(3)$ subgroups. There is a 2-to-1 homomorphism from the quotient $G/\{\mathds{1},-\mathds{1}\}$. Elimination of the $\mathbb{Z}_2$ normal subgroup from the group orbit description gives $SU(2)/K_\infty \cong SO(3)/D_\infty$, where $D_\infty$ is the isotropy subgroup of the cylinder.

\subsection{T-States}

From the Bell states, we can begin to map out the possible shapes of group orbit in the tetrahedron of T-states. Consider the convex hull of triplet Bell states to be the base of the tetrahedron, and $\psi^-$ to be the peak. The vertical height above the base of this tetrahedron indicates the proportion of the singlet state in the convex mixture $p_1 \phi^+ + p_2 \phi^- + p_3 \psi^+ + p_4 \psi^-$, where $p_1 + p_2 + p_3 + p_4 = 1$.

Since $\psi^-$ is symmetric under the $SU(2)$ group action,
\begin{align}
&\M( p_1 \phi^+ + p_2 \phi^- + p_3 \psi^+ + p_4 \psi^-) \nonumber \\
&\quad = \M( p_1 \phi^+ + p_2 \phi^- + p_3 \psi^+),
\end{align}
therefore the possible types of group orbit for T-states will all be exhibited in the convex hull of the triplet states. These can be visualized as a triangle of states, as shown in Fig. \ref{fig:horodecki}.

\begin{figure*}[t]
\includegraphics[width=17cm]{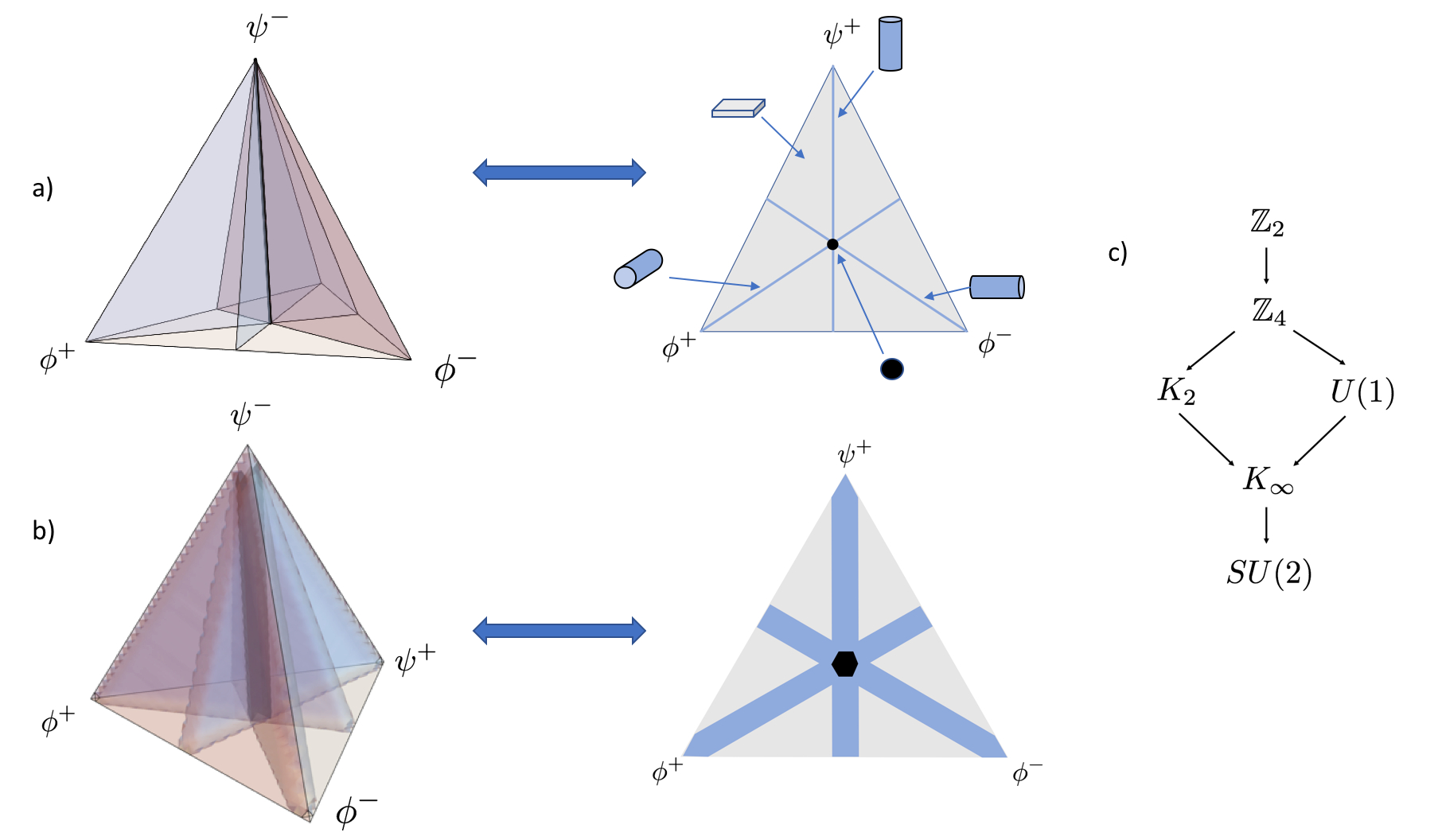}
\caption{\textbf{The partitioning of the $2$-qubit state space.} a) Tetrahedron of T-states coloured to indicate different group orbit shapes: grey -- $SU(2)/K_2 \cong SO(3)/D_2$, light blue -- $SO(3)/D_\infty$, black -- $\{e\}$. Slices through the tetrahedron that are parallel to the plane containing the triplet Bell states exhibit the same structure, exhibited in the triangle to the right. b) Tetrahedron mapping out the smoothed isotropy subgroups with smoothing scale $\epsilon=0.04$. The bands along the diagonals indicate $\rm{Iso}_\epsilon(\rho) \cong K_\infty$, while states in the central hexagon are assigned $\rm{Iso}_\epsilon(\rho) \cong SU(2)$. The remaining states retain their $K_2$ isotropy subgroup. c) Hasse diagram of the observed isotropy subgroups (up to isomorphism) for two qubits. Arrows indicate subset inclusion. This is a sublattice of the full $SU(2)$ subgroup lattice. } \label{fig:horodecki}
\end{figure*}

The vertices of this triangle are the triplet Bell states, and each have a $K_\infty$ isotropy subgroup. Appendix \ref{Bell} also shows that the midpoints of the edges of this triangle, corresponding to states $\frac{1}{4}(\mathds{1} \otimes \mathds{1} + \sigma_i \otimes \sigma_i)$, also have $K_\infty$ isotropy subgroups. Therefore the diagonals of this triangle, corresponding to states 
\begin{align} \label{diagonals}
\rho &= p \phi^+ + (1-p) \frac{1}{4} (\mathds{1} \otimes \mathds{1} + Y \otimes Y) \\
\rho &= p \phi^- + (1-p) \frac{1}{4} (\mathds{1} \otimes \mathds{1} + X \otimes X) \\
\rho &= p \psi^+ + (1-p) \frac{1}{4} (\mathds{1} \otimes \mathds{1} + Z \otimes Z)
\end{align}
for $0 \leq p < 1$, have $K_\infty \prec \rm{Iso}(\rho)$. The states not on these diagonals remain invariant under the subgroup $K_2 = \{ \pm \mathds{1}, \pm i X, \pm i Y, \pm i Z \}$.

The possible isotropy subgroups exhibited by the T-states can be seen with the projectors $\P_H$, as detailed in Appendix \ref{2qubits}. These are:
\begin{itemize}
\item $\rm{Iso}( \rho_T) = SU(2) \Rightarrow \M(\rho) \cong \{e\}$

Werner states along the central axis of the T-state tetrahedron, where $\tau_1 = \tau_2 = \tau_3$. These are the states
\begin{equation}
\rho = (1-p) \ \psi^- + \frac{p}{3} \ ( \psi^+ + \phi^+ + \phi^- )
\end{equation}
for $0 \leq p \leq 1$.

\item $\rm{Iso}( \rho_T) = K_\infty \Rightarrow \M(\rho) \cong SO(3)/D_\infty$

T-states with $\tau_i = \tau_j \neq \tau_k$. These states lie on or above the diagonals of the triangle of triplet Bell states, as seen in Fig.\ref{fig:horodecki}, excluding the Werner states on the central axis of the tetrahedron.

\item $\rm{Iso}( \rho_T) = K_2 \Rightarrow \M(\rho) \cong SO(3)/D_2$

T-states with $\tau_1 \neq \tau_2 \neq \tau_3$. These are states that do not lie on or above the diagonals of the triangle formed by the triplet Bell states.
\end{itemize}

\noindent The smoothed isotropy subgroups are illustrated in the lower part of Fig. \ref{fig:horodecki}, with a smoothing scale of $\epsilon = 0.04$ under the trace distance. Under this smoothing, the horizontal slices through the tetrahedron are no longer equivalent; near  $\psi^-$ all states are assigned $SU(2)$ as their isotropy subgroup, while at the base there remain many states with $\rm{Iso}_\epsilon(\rho) =  K_2$.

\subsection{Maximally Mixed Marginals}

The set of 2-qubit states with maximally mixed marginals is a wider class of states reached from the T-states through local $SU(2)$ unitaries $U(g_1) \otimes U(g_2)$. Appendix \ref{2qubits} details how the isotropy subgroups are enumerated.

If these local unitaries are the same for each qubit, we reach states of the form
\begin{align}
\tilde{\rho}_T &= U(g) \otimes U(g) \ \rho_T \ U^\dagger (g) \otimes U^\dagger (g) = \U_g (\rho_T) \nonumber \\
&= \frac{1}{4}\left(\mathds{1} \otimes \mathds{1} + \sum_{i=1}^3 \tau_i \ \mathbf{c}_i \cdot \sigma \otimes \mathbf{c}_i \cdot \sigma \right),
\end{align}
where the vectors $\{ \mathbf{c}_i \}$ form an orthonormal basis of $\mathbb{R}^3$.

These states behave similarly to the T-states, but with isomorphic isotropy subgroups. This can be seen from the final part of Lemma \ref{concatenate}, where $\text{Iso}(\U_g \circ \rho_T \circ \U_{g^{-1}}) = g [\text{Iso}(\rho_T)] g^{-1}$. Explicitly, instead of $\{ \pm \mathds{1}, \pm i X, \pm i Y , \pm i Z \}$, these rotated T-states will be invariant under $\{\pm \mathds{1}, \ \pm i ( \mathbf{c}_1 \cdot \sigma), \ \pm i ( \mathbf{c}_2 \cdot \sigma), \ \pm i ( \mathbf{c}_3 \cdot \sigma) \} \cong K_2$.

The possible types of group orbit for these states are the same as for the T-states:
\begin{itemize}
\item $\rm{Iso}( \tilde{\rho}_T) = SU(2) \Rightarrow \M(\tilde{\rho}_T) \cong \{e\}$

The same symmetric states as the T-states, unchanged by the unitary transformation $U(g) \otimes U(g)$. These are states $\tilde{\rho}_T$ with $\tau_1 = \tau_2 = \tau_3$.

\item $\rm{Iso}( \tilde{\rho}_T) = K_\infty \Rightarrow \M(\tilde{\rho}_T) \cong SO(3)/D_\infty$

States $\tilde{\rho}_T$ with $\tau_i = \tau_j \neq \tau_k$. These are reached via unitary transformations $U(g) \otimes U(g)$ from the T-states with $K_\infty$ isotropy subgroups.

\item $\rm{Iso}( \tilde{\rho}_T) = K_2 \Rightarrow \M(\tilde{\rho}_T) \cong SO(3)/D_2$

States $\tilde{\rho}_T$ with $\tau_1 \neq \tau_2 \neq \tau_3$. These are reached via unitary transformations $U(g) \otimes U(g)$ from the T-states with $K_2$ isotropy subgroups.
\end{itemize}

\noindent Further states with maximally mixed marginals are reached from the T-states by different local unitaries on each qubit. Suppose that these two local unitaries have the same generator $\mathbf{r} \cdot \sigma$, such that $U(g_1) \otimes U(g_2) = e^{i \theta_1 \mathbf{r} \cdot \sigma} \otimes e^{i \theta_2 \mathbf{r} \cdot \sigma}$. This gives states of the form
\begin{align}
&\rho_M = e^{i \theta_1 \mathbf{r} \cdot \sigma} \otimes e^{i \theta_2 \mathbf{r} \cdot \sigma} \ \rho_T \ e^{-i \theta_1 \mathbf{r} \cdot \sigma} \otimes e^{-i \theta_2 \mathbf{r} \cdot \sigma} \nonumber \\
&= \frac{1}{4}\left(\mathds{1} \otimes \mathds{1} + \tau_1 \ \mathbf{r} \cdot \sigma \otimes \mathbf{r} \cdot \sigma + \sum_{i=2}^3 \tau_i \ \mathbf{c}_i \cdot \sigma \otimes \mathbf{d}_i \cdot \sigma \right)
\end{align}
where $\{\mathbf{r}, \mathbf{c}_2, \mathbf{c}_3 \}$ and $\{\mathbf{r}, \mathbf{d}_2, \mathbf{d}_3 \}$ are two sets of orthonormal bases for $\mathbb{R}^3$. If $U(g_1) = U(g_2)$ in the local unitaries, these sets of bases will be the same and we retrieve $\tilde{\rho}_T$. However, in the case that $\theta_1 \neq \theta_2$ in the local unitaries, new possibilities for the isotropy subgroup are introduced:
\begin{itemize}
\item $\rm{Iso}(\rho_M) = U(1) \Rightarrow \M(\rho_M) \cong S^2$

States $\rho_M$ with $\tau_2 = \tau_3$. These are reached from T-states with $K_\infty$ isotropy subgroups via local unitaries $e^{i \theta_1 \mathbf{r} \cdot \sigma} \otimes e^{i \theta_2 \mathbf{r} \cdot \sigma}$ where $\theta_1 \neq \theta_2$.

\item $\rm{Iso}(\rho_M) = \mathbb{Z}_4 \Rightarrow \M(\rho_M) \cong SO(3)/C_2$

States $\rho_M$ with $\tau_2 \neq \tau_3$. These are reached from T-states with $K_2$ isotropy subgroups via local unitaries $e^{i \theta_1 \mathbf{r} \cdot \sigma} \otimes e^{i \theta_2 \mathbf{r} \cdot \sigma}$ where $\theta_1 \neq \theta_2$.
\end{itemize}

\noindent The states $\tilde{\rho}_T$ are only invariant under a $K_2$ subgroup because the correlation terms are put into diagonal form by the same orthonormal bases $\{ \mathbf{c}_i \}$ and $\{ \mathbf{d}_i \}$ on each qubit. For example, the correlation terms of the T-states are put into the canonical form by the standard cartesian orthonormal basis $\{ \hat{\mathbf{x}}, \hat{\mathbf{y}}, \hat{\mathbf{z}} \}$ on each qubit. When the local unitaries are applied, this can break the $K_2$ symmetry, and the local bases become `misaligned'. However, when $\mathbf{c}_1 = \mathbf{d}_1 = \mathbf{r}$ there is still a subgroup $\mathbb{Z}_4 = \{ \pm \mathds{1} , \pm i (\mathbf{r} \cdot \sigma) \}$ that leaves these states invariant.

The remaining 2-qubit states with maximally mixed marginals are reached from the T-states through local $SU(2)$ unitaries that do not share a generator, such that $U(g_1) \otimes U(g_2) = e^{i \theta_1 \mathbf{r}_1 \cdot \sigma} \otimes e^{i \theta_2 \mathbf{r}_2 \cdot \sigma}$ where $|\mathbf{r}_1| = |\mathbf{r}_2| = 1$ and $\mathbf{r}_1 \neq \mathbf{r}_2$. These states can be expressed in the form
\begin{align}
\tilde{\rho}_M &= e^{i \theta_1 \mathbf{r}_1 \cdot \sigma} \otimes e^{i \theta_2 \mathbf{r}_2 \cdot \sigma}  \ \rho_T \ e^{-i \theta_1 \mathbf{r}_1 \cdot \sigma} \otimes e^{-i \theta_2 \mathbf{r}_2 \cdot \sigma} \nonumber \\
&= \frac{1}{4}\left(\mathds{1} \otimes \mathds{1} + \sum_{i=1}^3 \tau_i \ \mathbf{c}_i \cdot \sigma \otimes \mathbf{d}_i \cdot \sigma \right),
\end{align}
where $\{ \mathbf{c}_i \}$ and $\{ \mathbf{d}_i \}$ are orthonormal bases of $\mathbb{R}^3$ that do not share any elements, i.e. $\mathbf{c}_i \neq \mathbf{d}_j$ for all $i$ and $j$. These states exhibit another type of group orbit:
\begin{itemize}
\item $\rm{Iso}(\rho) = \mathbb{Z}_2 \Rightarrow \M(\rho) \cong SO(3)$

States $\tilde{\rho}_M$. The orthonormal bases $\{ \mathbf{c}_i \}$ and $\{ \mathbf{d}_i \}$ do not share any elements, therefore the only subgroup that leaves these states invariant is $\mathbb{Z}_2 = \{ \pm \mathds{1} \}$. These are in a sense maximally asymmetric states, as they have no non-trivial residual symmetries.
\end{itemize}

\subsection{Local Bloch Vectors}

Finally, the introduction of local Bloch vectors completes the set of 2-qubit states. In the states with maximally mixed marginals, the alignment of the orthonormal bases describing the correlation terms played a key role in determining the symmetry properties. The effect of the local Bloch vectors depends on the how they relate to each other, as well as how they relate to the bases for the correlation terms.

There are three possible isotropy subgroups for these states:
\begin{itemize}
\item $\rm{Iso}(\rho) = \mathbb{Z}_2 \Rightarrow \M(\rho) \cong SO(3)$

This will occur for any states 
\begin{align}
\rho &= \frac{1}{4}\bigg(\mathds{1} \otimes \mathds{1} + \mathbf{a} \cdot \sigma \otimes \mathds{1} + \mathds{1} \otimes \mathbf{b} \cdot \sigma  \nonumber \\
&\qquad \left.  + \sum_{i=1}^3 \tau_i \ \mathbf{c}_i \cdot \sigma \otimes \mathbf{d}_i \cdot \sigma \right)
\end{align}
whose local Bloch vectors $\mathbf{a} \cdot \sigma$ and $\mathbf{b} \cdot \sigma$ are not parallel or anti-parallel, such that $\mathbf{a} \neq \gamma \mathbf{b}$ for any $\gamma \in \mathbb{R}$.

This isotropy subgroup will also occur in states where the basis in which the correlation terms are diagonalised does not align with the local Bloch vectors. These are states of the form
\begin{align}
\rho &= \frac{1}{4}\bigg(\mathds{1} \otimes \mathds{1} + \mathbf{a} \cdot \sigma \otimes \mathds{1} + \mathds{1} \otimes \mathbf{b} \cdot \sigma  \nonumber \\
&\qquad \left. + \tau_1 \ \mathbf{r} \cdot \sigma \otimes \mathbf{r} \cdot \sigma + \sum_{i=2}^3 \tau_i \ \mathbf{c}_i \cdot \sigma \otimes \mathbf{d}_i \cdot \sigma \right)
\end{align}
where $\{\mathbf{r}, \mathbf{c}_2, \mathbf{c}_3 \}$ and $\{\mathbf{r}, \mathbf{d}_2, \mathbf{d}_3 \}$ are two sets of orthonormal bases for $\mathbb{R}^3$ and either $\mathbf{a}$ or $\mathbf{b} \neq \gamma \mathbf{r}$ for any $\gamma \in \mathbb{R}$. In this case, the correlation terms would be invariant under $\mathbb{Z}_4 = \{ \pm \mathds{1}, \pm i \mathbf{r} \cdot \sigma \}$, however the local Bloch vectors prevent this. Similarly if $\tau_2 = \tau_3$ in states of this form, then the local Bloch vectors break a $U(1)$ symmetry, while if $\{\mathbf{c}_i \} = \{ \mathbf{d}_i \}$ then the local Bloch vectors break a $K_2$ symmetry. Likewise if both $\{\mathbf{c}_i \} = \{ \mathbf{d}_i \}$ and $\tau_2 = \tau_3$, the correlation terms are invariant under a $K_\infty$ subgroup that is again broken by the local Bloch vectors.

\item $\rm{Iso}(\rho) = \mathbb{Z}_4 \Rightarrow \M(\rho) \cong SO(3)/C_2$

States of the form
\begin{align}
\rho &= \frac{1}{4}\bigg(\mathds{1} \otimes \mathds{1} + a \mathbf{r} \cdot \sigma \otimes \mathds{1} + b \mathds{1} \otimes \mathbf{r} \cdot \sigma  \nonumber \\
&\qquad \left. + \tau_1 \ \mathbf{r} \cdot \sigma \otimes \mathbf{r} \cdot \sigma + \sum_{i=2}^3 \tau_i \ \mathbf{c}_i \cdot \sigma \otimes \mathbf{d}_i \cdot \sigma \right)
\end{align}
where $\{\mathbf{r}, \mathbf{c}_2, \mathbf{c}_3 \}$ and $\{\mathbf{r}, \mathbf{d}_2, \mathbf{d}_3 \}$ are two sets of orthonormal bases for $\mathbb{R}^3$, $a,b \in \mathbb{R}$ and $\tau_2 \neq \tau_3$. The local Bloch vectors are invariant under the same $\mathbb{Z}_4$ subgroup that leaves the correlation terms invariant. This also includes the case where $\mathbf{c}_i = \mathbf{d}_i$ for $i = 2,3$. The correlation terms are invariant under a $K_2$ subgroup, but the local Bloch vectors break this to a $\mathbb{Z}_4$ subgroup.

Similarly, states of the form
\begin{align}
\rho &= \frac{1}{4}\bigg(\mathds{1} \otimes \mathds{1} + a \mathbf{c}_1 \cdot \sigma \otimes \mathds{1} + b \mathds{1} \otimes \mathbf{c}_1 \cdot \sigma  \nonumber \\
&\qquad \left.  +  \sum_{i=1}^3 \tau_i \ \mathbf{c}_i \cdot \sigma \otimes \mathbf{c}_i \cdot \sigma \right)
\end{align}
where $\{\mathbf{c}_1, \mathbf{c}_2, \mathbf{c}_3 \}$ is an orthonormal basis for $\mathbb{R}^3$ and $\tau_1 = \tau_2 \neq \tau_3$. The correlation terms are invariant under a $K_\infty$ subgroup, but the local Bloch vectors break the residual symmetry down to a $\mathbb{Z}_4$ subgroup as the local Bloch vectors are orthogonal to $\mathbf{c}_3 \cdot \sigma$, the generator of the relevant $U(1)$ subgroup.

\item $\rm{Iso}(\rho) = U(1) \Rightarrow \M(\rho) \cong S^2$

These are states
\begin{align}
\rho &= \frac{1}{4}\bigg(\mathds{1} \otimes \mathds{1} + a \mathbf{r} \cdot \sigma \otimes \mathds{1} + b \mathds{1} \otimes \mathbf{r} \cdot \sigma  \nonumber \\
&\qquad \left. \tau_1 \mathbf{r} \cdot \sigma \otimes \mathbf{r} \cdot \sigma +  \tau \sum_{i=2}^3  \ \mathbf{c}_i \cdot \sigma \otimes \mathbf{d}_i \cdot \sigma \right)
\end{align}
where $\{\mathbf{r}, \mathbf{c}_2, \mathbf{c}_3 \}$ and $\{\mathbf{r}, \mathbf{d}_2, \mathbf{d}_3 \}$ are two sets of orthonormal bases for $\mathbb{R}^3$, $a,b \in \mathbb{R}$. The local Bloch vectors are invariant under the same $U(1)$ subgroup that leaves the correlation terms invariant. This also includes the cases when $\{ \mathbf{c}_i \} = \{\mathbf{d}_i \}$ and when $\tau_1 = \tau$, where the correlation terms are invariant under $K_\infty$ and $SU(2)$ respectively.
\end{itemize}

\noindent This completes the classification of the isotropy subgroups of 2-qubit states.

\subsection{Some simple consequences of the classification}

This classification allows us to infer constraints on the resources required to simulate asymmetric channels under symmetry constraints. If we wish to simulate an axial channel (i.e. $\M(\E) = S^2$) under $SU(2)$ symmetry constraints using the T-states, the resource state must have $\M(\sigma) \succ S^2$. T-states on the diagonals of the tetrahedron, with $\M(\rho) \prec SO(3)/D_\infty$, will not be able to perform such simulations; only the T-states with $\tau_1 \neq \tau_2 \neq \tau_3$ will achieve this task up to some approximation as discussed earlier.

However, convex combinations of different T-states can create more states that break the symmetry to a larger degree. For example, $\rm{Iso}( p \phi^+ + (1-p) \phi^-) = K_2$ when $p \neq 1/2$, despite $\rm{Iso}(\phi^+) \cong \rm{Iso}(\phi^-) \cong K_\infty$.

Group orbits also constrain the dynamics of the T-states. For instance, Lemma \ref{isotropy nondecreasing} implies that symmetric operations move states situated on a diagonal of the tetrahedron only along that diagonal. The T-states are closed under symmetric dynamics -- leaving would entail further symmetries being broken.

Also note that $H_1 \subseteq H_2$ does not imply there exists a symmetric channel from $C(H_1)$ into $C(H_2)$. T-states on the diagonals of the base of the tetrahedron cannot be reached from those that lie off these diagonals under symmetric operations \cite{cristina}, despite $K_2 \prec K_\infty$. Group orbits impose the highest level constraints; further techniques such as those of \cite{cristina} then impose more detailed constraints.

\section{Outlook}

Our analysis has shown that the basic problem of state interconversion under a symmetry constraint has a rich and non-trivial structure. We have established a high-level description of the problem that is consistent with the resource-theoretic picture, however the question whether the framework can find practical use in the same way that superselection rules simplify computation remains to be explored.

We have classified the set of all two-qubit states under the $SU(2)$ tensor product representation, which may be of use in its own right for the study of quantum correlations. However, beyond two-qubits, the general problem can be extremely difficult -- for example even the finite subgroups of $SU(5)$ remain unclassified \cite{stabilizer}. Despite this complexity, we believe that the tools we have developed here provide useful insight into the structure of information processing under symmetry constraints. 

Other techniques for studying the effect of symmetry constraints on quantum operations connect with the present analysis. As already mentioned, recent work on the harmonic analysis of quantum channels suggests that the geometry of group orbits provides an insight into the use of finite quantum resources \cite{cristina}. The present paper has focused purely on the `shape' of this orbit, but the role of the induced geometry on the orbits is only partially understood. To extend such an analysis would require adapting techniques in metrology \cite{metrology} and differential geometry \cite{infogeometry}. It would also be of interest to draw connections with recent work \cite{coarsegraining} that extends classical coarse-graining into the quantum regime with symmetries present.

Finally, these ideas may inform us about other resource theories, using the symmetry constraints directly (e.g. in thermodynamics). Along similar lines, the techniques we have described may find application in the context of realising universal quantum computation via the combination of resource states with simple gate-sets, for example Clifford operations and magic states.

\section{Acknowledgements}
We would like to thank Cristina Cirstoiu, Erick Hinds Mingo, Zoe Holmes and Markus Frembs for many useful discussions. TH is funded by the EPSRC Centre for Doctoral Training in Controlled Quantum Dynamics. DJ is supported by the Royal Society.

\bibliographystyle{unsrtnat}
\bibliography{references}

\onecolumn\newpage
\appendix
\setcounter{theorem}{0}
\setcounter{lemma}{0}

\section{Proofs}\label{proofs}

\begin{lemma}
Given any two quantum channels $\E:\B(\H_A) \rightarrow \B(\H_{A'})$ and $\F : \B(\H_B) \rightarrow \B(\H_{B'})$, with unitary representations of a group $G$ defined on all input and output spaces. Then we have the following
\begin{enumerate}
\item $\rm{Iso}( \E\otimes \F) \succ \rm{Iso}(\E) \wedge \rm{Iso}(\F)$ under the tensor product group action $\mathfrak{U}_g(\E \otimes \F) = \mathfrak{U}_g(\E) \otimes \mathfrak{U}_g(\F)$.
\item If $B' = A$ then $\rm{Iso}(\E \circ \F)  \succ \rm{Iso}(\E) \wedge \rm{Iso}(\F)$.
\item If $A=B$ and $A'=B'$, and $p$ some some probability $0\le p \le 1$, then $\rm{Iso}( p\E + (1-p) \F) \succ \rm{Iso}(\E) \wedge \rm{Iso}(\F)$.
\item $\rm{Iso}(\U_g \circ \E \circ \U_g^\dagger) = g[ \rm{Iso}(\E)]g^{-1}$. 
\end{enumerate}
\end{lemma}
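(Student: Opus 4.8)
The plan is to verify each of the four claims straight from the definition $\rm{Iso}(\cdot) = \{g \in G : \mathfrak{U}_g(\cdot) = \cdot\}$, in each case by taking an arbitrary element of the proposed lower bound and checking that it stabilises the composite object. The common engine I would set up first is that $\mathfrak{U}$ is itself a group homomorphism on channels: because $g \mapsto \U_g$ is a unitary representation on every input and output space, a short calculation gives $\mathfrak{U}_g \circ \mathfrak{U}_{g'} = \mathfrak{U}_{gg'}$ and $\mathfrak{U}_e = \mathrm{id}$. All four parts lean on this, so I would prove it once at the outset. Recall also that the meet is intersection, $\rm{Iso}(\E)\wedge\rm{Iso}(\F) = \rm{Iso}(\E)\cap\rm{Iso}(\F)$, so each $\succ$ is the inclusion $\rm{Iso}(\text{composite}) \supseteq \rm{Iso}(\E)\cap\rm{Iso}(\F)$.

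Parts 1 and 3 are then immediate. For part 1, fix $g \in \rm{Iso}(\E)\cap\rm{Iso}(\F)$; the stated tensor-product action $\mathfrak{U}_g(\E\otimes\F) = \mathfrak{U}_g(\E)\otimes\mathfrak{U}_g(\F)$ shows both factors are individually fixed, hence so is the product, giving $g \in \rm{Iso}(\E\otimes\F)$. Part 3 is identical in spirit: the action is linear, so $\mathfrak{U}_g(p\E+(1-p)\F) = p\,\mathfrak{U}_g(\E) + (1-p)\,\mathfrak{U}_g(\F)$, which is fixed whenever both summands are.

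For part 2 the one extra move is to insert a resolution of the identity $\U_{g^{-1}}\circ\U_g$ on the shared space $\H_{B'} = \H_A$: for $g$ in the intersection,
\begin{equation}
\mathfrak{U}_g(\E \circ \F) = \U_g \circ \E \circ \U_{g^{-1}} \circ \U_g \circ \F \circ \U_{g^{-1}} = \mathfrak{U}_g(\E)\circ\mathfrak{U}_g(\F) = \E\circ\F,
\end{equation}
where the intermediate-space representation cancels. Part 4 is the only place that uses the homomorphism law in earnest. Identifying $\U_g\circ\E\circ\U_g^\dagger = \mathfrak{U}_g(\E)$, an element $h'$ stabilises $\mathfrak{U}_g(\E)$ iff $\mathfrak{U}_{h'g}(\E) = \mathfrak{U}_g(\E)$; applying the bijection $\mathfrak{U}_{g^{-1}}$ to both sides reduces this to $\mathfrak{U}_{g^{-1}h'g}(\E) = \E$, i.e. $g^{-1}h'g \in \rm{Iso}(\E)$. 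Since $\mathfrak{U}_{g^{-1}}$ is invertible this chain is an honest sequence of equivalences, yielding the equality $\rm{Iso}(\mathfrak{U}_g(\E)) = g\,[\rm{Iso}(\E)]\,g^{-1}$.

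The only real care I expect is in part 4's conjugation bookkeeping — keeping the composition order straight so that $h'g$ (not $gh'$) appears, and confirming that this step is a genuine two-sided equality rather than a one-sided bound. The reason parts 1--3 give only $\succ$ while part 4 gives $=$ is structural and worth stating: mixing or composing can manufacture accidental symmetries absent from either factor (for instance the midpoint $\tfrac12(\phi^+ + \phi^-)$ jumps from $K_2$ up to $K_\infty$), whereas conjugation by a fixed group element is a bijection of $G$ that carries the stabiliser onto its conjugate exactly.
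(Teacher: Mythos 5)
Your proposal is correct and follows essentially the same route as the paper: parts 1--3 are verified exactly as in the paper's proof (take $g$ in the meet, use the product/linearity/insert-$\U_{g^{-1}}\circ\U_g$ structure of the action), and part 4 rests on the same conjugation algebra, merely repackaged as a single chain of equivalences via the homomorphism property $\mathfrak{U}_h\circ\mathfrak{U}_g=\mathfrak{U}_{hg}$ where the paper writes out the two inclusions separately. Your closing remark on why parts 1--3 are only one-sided is also consistent with the paper, which makes the same point with the example $\mathrm{Iso}(\ket{0}\bra{0}\otimes\ket{0}\bra{0})\succ\mathrm{Iso}(\ket{0}\bra{0})$.
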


\begin{proof}
For $g \in \rm{Iso}(\E) \wedge \rm{Iso}(\F)$, the global group action gives $\mathfrak{U}_g(\E \otimes \F) = \mathfrak{U}_g(\E) \otimes \mathfrak{U}_g(\F) = \E \otimes \F$, therefore $g \in \rm{Iso}( \E\otimes \F)$. However $\E \otimes \F$ may have further symmetries from taking the tensor product  (for example $\mathrm{Iso}(\ket{0}\bra{0} \otimes \ket{0}\bra{0}) \succ \mathrm{Iso}(\ket{0}\bra{0})$), therefore $\rm{Iso}( \E\otimes \F) \succ \rm{Iso}(\E) \wedge \rm{Iso}(\F)$.

The group action on $\E \circ \F$ is given by $\mathfrak{U}_g(\E \circ \F) = \U_g \circ \E \circ \F \circ \U_{g^{-1}} = \U_g \circ \E \circ \U_{g^{-1}}\circ \U_g \circ \F \circ \U_{g^{-1}} = \mathfrak{U}_g(\E) \circ \mathfrak{U}_g(\F)$. Therefore if $g \in \rm{Iso}(\E) \wedge \rm{Iso}(\F)$, then $g \in \rm{Iso}( \E \circ \F)$, implying $\rm{Iso}(\E \circ \F) \succ \rm{Iso}(\E) \wedge \rm{Iso}(\F)$.

If $g \in \rm{Iso}(\E) \wedge \rm{Iso}(\F)$, then $\mathfrak{U}_g(p \E + (1-p) \F) = p \ \mathfrak{U}_g(\E) + (1-p) \ \mathfrak{U}_g(\F) = p \E + (1-p) \F$, implying $\rm{Iso}(\E) \wedge \rm{Iso}(\F) \prec \rm{Iso}(p \E + (1-p) \F)$.

For $h \in \text{Iso}(\U_g \circ \E \circ \U_{g^{-1}})$, we have $\U_h \circ \U_g \circ \E \circ \U_{g^{-1}} \circ \U_{h^{-1}} = \U_g \circ \E \circ \U_{g^{-1}}$ which simplifies to $\U_{(g^{-1} h g)} \circ \E \circ \U_{(g^{-1} h g)^{-1}}$. Therefore $g^{-1} h g \in \text{Iso}(\E)$ or $h \in g [ \text{Iso}(\E) ] g^{-1}$. When $g h' g^{-1} \in g[\text{Iso}(\E)]g^{-1}$ acts upon $\U_g \circ \E \circ \U_{g^{-1}}$, we have $(\U_g \circ \U_{h'} \circ \U_{g^{-1}}) \circ (\U_g \circ \E \circ \U_{g^{-1}}) \circ (\U_g \circ \U_{h'^{-1}} \circ \U_{g^{-1}}) = \U_g \circ \U_{h'} \circ \E \circ \U_{h'^{-1}} \circ \U_{g^{-1}} = \U_g \circ \E \circ \U_{g^{-1}}$, therefore $g h' g^{-1} \in \text{Iso}(\U_g \circ \E \circ \U_{g^{-1}})$. Therefore $\text{Iso}(\U_g \circ \E \circ \U_{g^{-1}}) = g[\text{Iso}(\E)]g^{-1}$.
\end{proof}

\begin{theorem} \label{simulate}
If a system $B$ in a state $\sigma_B$ can be used to simulate a CPTP map $\E$ under symmetric dynamics, then $\text{Iso}(\sigma_B) \prec \text{Iso}(\E)$. Conversely, if $\E$ has isotropy group $Iso(\E)$ then there exists a quantum system $B$ and quantum state $\sigma_B$ that can be used to simulate $\E$ with $Iso(\sigma_B) = Iso(\E)$.
\end{theorem}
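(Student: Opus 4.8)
The plan is to handle the two directions separately, since they call for different techniques. For the necessity direction I would argue by direct computation, showing that every symmetry of the resource descends to a symmetry of the simulated channel. Fix $g \in \text{Iso}(\sigma_B)$, so that $U_B(g)\,\sigma_B\,U_B^\dagger(g) = \sigma_B$, and expand $\mathfrak{U}_g(\E)(\rho) = U_A(g)\,\tr_B\!\big[V(U_A^\dagger(g)\rho\,U_A(g) \otimes \sigma_B)V^\dagger\big]\,U_A^\dagger(g)$. The three ingredients I would then apply in sequence are: (i) use the invariance of $\sigma_B$ to rewrite it as $U_B^\dagger(g)\,\sigma_B\,U_B(g)$, so the argument of $V$ becomes $(U_A^\dagger(g)\otimes U_B^\dagger(g))(\rho\otimes\sigma_B)(U_A(g)\otimes U_B(g))$; (ii) commute these joint unitaries through $V$ using covariance $[V,\,U_A(g)\otimes U_B(g)]=0$; and (iii) pull the $U_A(g)$ factors out of the partial trace (which acts only on $B$) while the remaining $U_B(g)$ factors drop out under $\tr_B$ by cyclicity. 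What survives is exactly $\E(\rho)$, so $g \in \text{Iso}(\E)$ and hence $\text{Iso}(\sigma_B) \prec \text{Iso}(\E)$. I expect this half to be routine once the three substitutions are lined up in the right order.

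For the converse I would first make an organisational observation that streamlines everything: by the necessity direction just proved, \emph{any} simulating resource already satisfies $\text{Iso}(\sigma_B) \prec \text{Iso}(\E)$, so to force equality it suffices to exhibit a simulation whose resource is merely $H$-invariant, where $H := \text{Iso}(\E)$. Indeed, $H$-invariance gives $\text{Iso}(\sigma_B) \succ H$, and the two bounds then pinch to $\text{Iso}(\sigma_B) = H$. This reduces the problem to constructing a covariant $V$ together with an $H$-invariant $\sigma_B$ that reproduces $\E$.

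The construction I would use is the reference-frame (relativization) one. Since $\E$ is $H$-covariant, I would take an $H$-covariant Stinespring dilation of it, and adjoin to $B$ a reference system prepared in a state $\xi$ whose isotropy is exactly $H$ (for instance the coset-indicator state on $L^2(G/H)$, which is fixed by $H$ and by nothing larger). The covariant unitary $V$ is then built to implement the dilation \emph{relative} to this reference: because the protocol uses only the relative orientation carried by $\xi$ it commutes with the global action $U_A(g)\otimes U_B(g)$, and because $\E$ needs the frame resolved only up to $H$, an $H$-isotropic reference suffices to reproduce $\E$ exactly. Taking $\sigma_B$ to be this reference (tensored with any fixed ancilla of the dilation) yields an $H$-invariant resource, and the pinching argument upgrades this to $\text{Iso}(\sigma_B) = H$.

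The main obstacle I anticipate lies entirely in this converse, specifically in making the relativized implementation an \emph{exact} covariant unitary with a genuinely $H$-isotropic reference. For continuous $G$ a perfectly sharp coset-indicator on $G/H$ is not normalisable, so one must either work in a suitable (possibly infinite-dimensional) $B$, phrase the reference as an idealised limit, or verify directly that the relative implementation closes exactly rather than only approximately. The cleanest sanity check of the scheme is the pure preparation channel $\E(\rho)=\omega$ with $\text{Iso}(\omega)=H$: here $\sigma_B=\omega$ and $V=\mathrm{SWAP}$ already give an exact, finite-dimensional covariant simulation with $\text{Iso}(\sigma_B)=H$, and I would treat this as the template to which the general reference-frame construction must reduce.
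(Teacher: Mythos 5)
Your necessity argument coincides with the paper's: it too takes a covariant dilation, uses the invariance of $\sigma_B$, commutes the group unitaries through $V$, and cancels them under the partial trace (the paper also records a one-line alternative, writing the simulation as $\E = \F \circ \sigma_B$ with $\F$ symmetric and invoking its composition lemma). For the converse, both proofs hinge on the same object --- an idealized, delta-localized reference state on the coset space $\M = G/\mathrm{Iso}(\E)$ carried by $\mathcal{L}^2(\M,\mathbb{C})$ --- but the surrounding machinery differs. The paper expands $\E$ in irreducible process modes $\E = \sum_{\lambda,k}\alpha_{\lambda^*,k}(\x)\,\Phi_{\lambda,k}$, whose coefficients are harmonic functions on $\M$, and simulates via a covariant position measurement on the reference followed by the correspondingly oriented combination of modes; it then reads off $\mathrm{Iso}(\sigma_B)=\mathrm{Iso}(\E)$ directly from the $G$-action on the quotient. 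You instead relativize an $H$-covariant dilation against the reference --- well-defined on $G/H$ precisely because $\mathfrak{U}_h(\E)=\E$ for $h\in H$, so the oriented channel depends only on the coset --- and then pinch: $H$-invariance of the resource gives $\mathrm{Iso}(\sigma_B)\succ H$, while the already-proved necessity direction gives $\mathrm{Iso}(\sigma_B)\prec\mathrm{Iso}(\E)=H$. That pinching step is a genuine economy absent from the paper: it spares you from arguing that the reference state is fixed by nothing larger than $H$. Conversely, the paper's mode decomposition makes the exactness of the induced channel more explicit, which is precisely the step you (rightly) flag as the delicate one in your relativization. Both proofs share the non-normalizable delta-state idealization --- the paper's $\psi(\z)=\delta(\z-\x)$ is ``understood in the distributional sense'' --- so your proposal sits at the same level of rigor as the published argument, and your SWAP sanity check for preparation channels is correct, giving a finite-dimensional instance where that issue disappears entirely.
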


\begin{proof}
Viewing the state $\sigma_B$ as a CPTP map, we can view the simulation as a composite channel $\E = \F \circ \sigma_B$. Therefore $\rm{Iso}(\E) = \rm{Iso}(\F \circ \sigma_B) \succ \rm{Iso}(\F) \wedge \rm{Iso}(\sigma_B) = \rm{Iso}(\sigma_B)$ since $\F$ is symmetric. Therefore $\rm{Iso}(\E) \succ \rm{Iso}(\sigma_B)$, so $\M(\sigma_B) \succ \M(\E)$.

Alternatively, from the Covariant Stinespring theorem \cite{covariantstinespring, marvianthesis}, we can write $\E(\rho) = \F(\rho \otimes \sigma_B) = \rm{tr}_{BC}[V (\rho \otimes \sigma_B \otimes \gamma_C) V^\dagger]$, where $\gamma_C = |\eta\>\<\eta|$ is some pure state, $[V,U_A(g) \otimes U_B(g) \otimes U_C(g)]=0$ and $\rm{Iso}(\gamma_C) =G$. Supposing $g \in \rm{Iso}(\sigma_B)$,
\begin{equation}
\mathcal{E} (\rho) = \mathrm{tr}_{BC} [V (\rho \otimes \U_g(\sigma_B) \otimes \U_g(\gamma_C)) V^\dagger] = \mathrm{tr}_{BC} [V (\mathds{1} \otimes \U_g \otimes \U_g) (\rho \otimes \sigma_B \otimes \gamma_C) V^\dagger].
\end{equation}
$V$ is a symmetric unitary, so $V(\mathds{1} \otimes \U_g \otimes \U_g) = V (\U_g \otimes \U_g \otimes \U_g) (\U_g^\dagger \otimes \mathds{1} \otimes \mathds{1}) 
= (\U_g \otimes \U_g \otimes \U_g) V (\U_g^\dagger \otimes \mathds{1} \otimes \mathds{1})$, giving 
\begin{align}
\mathcal{E}(\rho) &= \mathrm{tr}_{BC} [\mathcal{U}_g ( V ( \mathcal{U}_{g^{-1}} (\rho) \otimes \sigma_B \otimes \gamma_C) V^\dagger)] = \mathcal{U}_g (\mathrm{tr}_{BC} [ V ( \mathcal{U}_{g^{-1}} (\rho) \otimes \sigma_B \otimes \gamma_C) V^\dagger] ) \nonumber \\
&= \mathcal{U}_g \circ \mathcal{E} \circ \mathcal{U}_{g^{-1}} (\rho).
\end{align}
Therefore $\mathrm{Iso}(\sigma_B) \prec \mathrm{Iso}(\mathcal{E})$, and $\M(\sigma_B) \succ \M(\E)$.
 
We now establish the second part of the theorem, and construct the protocol that realises $\E$ with a quantum state that has the same isotropy subgroup as $\E$. Firstly, we note that we can decompose any $\E$ into irreducible process modes \cite{cristina} as
\begin{equation}
\E = \sum_{\lambda,k} \alpha_{\lambda^*,k} (\x) \Phi_{\lambda,k}
\end{equation}
where $\lambda$ labels irreps, including multiplicities and $k$ labels the basis vector of the irrep. As shown in \cite{cristina} the coefficients $\alpha_{\lambda, k}(\x)$ are non-normalised harmonic wavefunctions on the space $\M:=G/\text{Iso}(\E)$, which is diffeomorphic to the orbit of $\E$ under the group $G$, and where the point $\x$ corresponds to the reference data needed to simulate $\E$ under the symmetry constraint.

For our reference system $B$, we take its Hilbert space to be $\mathcal{L}^2 (\M, \mathbb{C})$, where $\sigma_B = |\psi\>\<\psi|$ is a reference state with wavefunction $\psi(\z) = \delta (\z - \x)$, understood in the distributional sense. The protocol to realise $\E$ via $\sigma_B$ involves a measurement of the position $\z$ of $B$, which provides the necessary reference frame data to realise $\E$ on the system $A$. Explicitly, we have
\begin{align}
\E &= \sum_{\lambda,k} \alpha_{\lambda^*,k} (\x) \ \Phi_{\lambda,k} \nonumber \\
& = \tr_B  \sum_{\lambda,k} |\x\>_B\<\x| \otimes \alpha_{\lambda^*,k} (\x)  \ \Phi_{\lambda,k} \nonumber\\
& =  \sum_{\lambda,k}\int dz \ \tr [ M(\z) \ \sigma_B] \ \alpha_{\lambda^*,k} (\z) \ \Phi_{\lambda,k}
\end{align}
where $\{dzM(\z) \}$ is the position measurement on $\M$, which is covariant since $G$ acts transitively on $\M$. Finally, the action of $G$ on $\sigma_B$ is given by the action of $G$ on the quotient space $G/\text{Iso}(\E)$ where at each point we have an isotropy group $\text{Iso}(\E)$, and thus $\text{Iso}(\sigma_B) = \text{Iso}(\E)$ as required.
\end{proof}

\begin{lemma}
Under a symmetric operation $\E$, $\rm{Iso}( \E(\rho)) \succ \rm{Iso}(\rho)$.
\end{lemma}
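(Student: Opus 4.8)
The plan is to exploit the covariance of symmetric operations. By definition, a symmetric operation $\E$ satisfies $\mathfrak{U}_g(\E) = \U_g \circ \E \circ \U_{g^{-1}} = \E$ for every $g \in G$; composing on the right with $\U_g$ turns this into the intertwining relation $\U_g \circ \E = \E \circ \U_g$, i.e.\ the channel commutes with the group action. This single identity is what drives the entire argument.

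First I would fix an arbitrary $g \in \mathrm{Iso}(\rho)$, so that $\U_g(\rho) = \rho$, and aim to show $g \in \mathrm{Iso}(\E(\rho))$. Applying the covariance identity to the state $\rho$ and then invoking its invariance gives the short chain
\begin{equation}
\U_g(\E(\rho)) = \E(\U_g(\rho)) = \E(\rho),
\end{equation}
so $g \in \mathrm{Iso}(\E(\rho))$. Since $g$ was arbitrary this yields $\mathrm{Iso}(\rho) \subseteq \mathrm{Iso}(\E(\rho))$, which is exactly the claimed ordering $\mathrm{Iso}(\E(\rho)) \succ \mathrm{Iso}(\rho)$.

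Alternatively, and more economically, I would derive the result as a corollary of Lemma \ref{concatenate}. Viewing the state as its preparation map $1 \to \rho$, the output $\E(\rho)$ is the composite channel $\E \circ \rho$, and part (2) of Lemma \ref{concatenate} gives $\mathrm{Iso}(\E \circ \rho) \succ \mathrm{Iso}(\E) \wedge \mathrm{Iso}(\rho)$. Because $\E$ is symmetric we have $\mathrm{Iso}(\E) = G$, whence $\mathrm{Iso}(\E) \wedge \mathrm{Iso}(\rho) = G \wedge \mathrm{Iso}(\rho) = \mathrm{Iso}(\rho)$, recovering the claim immediately.

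There is no substantive obstacle here: the only point requiring a little care is the equivalence between the defining condition $\U_g \circ \E \circ \U_{g^{-1}} = \E$ of a symmetric operation and the commutation relation $\U_g \circ \E = \E \circ \U_g$, which is precisely what allows the group action to be \emph{pulled through} the channel. Once that equivalence is stated, the proof is a one-line computation.
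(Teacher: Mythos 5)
Both of your arguments are correct, and your second, ``more economical'' route is exactly the paper's own proof: the paper views $\rho$ as a preparation map $1 \to \rho$, applies part (2) of Lemma \ref{concatenate} to get $\mathrm{Iso}(\E \circ \rho) \succ \mathrm{Iso}(\E) \wedge \mathrm{Iso}(\rho)$, and concludes using $G \wedge H = H$ for any $H \prec G$. Your first argument is a genuinely more elementary alternative that bypasses Lemma \ref{concatenate} entirely: the covariance identity $\U_g \circ \E = \E \circ \U_g$ lets you pull any $g \in \mathrm{Iso}(\rho)$ through the channel, giving $\U_g(\E(\rho)) = \E(\U_g(\rho)) = \E(\rho)$, which exhibits the inclusion $\mathrm{Iso}(\rho) \subseteq \mathrm{Iso}(\E(\rho))$ element by element. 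What your direct route buys is self-containedness and transparency; what the paper's route buys is uniformity, since it shows the monotonicity statement is just a special case of the general composition bound that the paper reuses elsewhere (e.g.\ in the proof of Theorem \ref{simulate}). One small point worth making explicit in your direct argument: when $\E$ maps a system $A$ to a different system $A'$, the two sides of the commutation relation involve different representations ($\U_g$ acts via $U_{A'}(g)$ on the output and via $U_A(g)$ on the input), so $\mathrm{Iso}(\rho)$ and $\mathrm{Iso}(\E(\rho))$ are subgroups of $G$ defined through different unitary actions; the argument goes through unchanged because membership is a condition on the group element $g$ itself, but stating this keeps the comparison of the two isotropy subgroups meaningful.
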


\begin{proof}
We can view the state $\rho$ as a CPTP map $1 \rightarrow \rho$, therefore from Lemma \ref{concatenate}, $\rm{Iso}( \E(\rho)) = \rm{Iso}( \E \circ \rho) \succ \rm{Iso}(\E) \wedge \rm{Iso}(\rho)$. Since $G \wedge H = H$ for any subgroup $H \prec G$, then $\rm{Iso}(\E) \wedge \rm{Iso}(\rho) = \rm{Iso}(\rho)$, therefore $\rm{Iso}( \E(\rho)) \succ \rm{Iso}(\rho)$, and thus the group orbits also obey $\M( \E(\rho)) \prec \M(\rho)$.
\end{proof}

\begin{lemma}
Let $d(\cdot, \cdot)$ be any metric on the space of quantum states. In terms of this metric we define
\begin{equation}
\mathbf{d}(C(H_1) , C(H_2)) := \inf_{\substack{\sigma_1 \in C(H_1) \\ \sigma_2 \in C(H_2)}} d(\sigma_1 , \sigma_2).
\end{equation}
Then $\mathbf{d}(C(H_1),C(H_2)) = 0$ for all $H_1, H_2 \prec G$. 
\end{lemma}

\begin{proof}
For a state $\rho_1 \in C(H_1)$, $\U_g(\epsilon \rho_1 + (1-\epsilon) \ \sigma) = \epsilon \ \U_g(\rho_1) + (1-\epsilon) \ \U_g(\sigma) = \epsilon \  \U_g(\rho_1) + (1-\epsilon) \ \sigma$ when $\sigma \in C(G)$. Hence $\text{Iso}(\epsilon \rho_1 + (1-\epsilon) \sigma) = \text{Iso}(\rho_1)$, and $\epsilon \rho_1 + (1-\epsilon) \sigma \in C(H_1)$. Likewise, for $\rho_2 \in C(H_2)$, the state $\epsilon \rho_2 + (1-\epsilon) \sigma$ (with the same $\sigma \in C(G)$) is also in $C(H_2)$. As $\epsilon \rightarrow 0$, $d(\epsilon \rho_1 + (1-\epsilon) \sigma, \epsilon \rho_2 + (1-\epsilon) \sigma) \rightarrow 0$, therefore $\mathbf{d}(C(H_1),C(H_2)) = 0$ for all $H_1, H_2 \prec G$.
\end{proof}

\begin{lemma}
The map $\mathcal{P}_H$ has the following properties:
\begin{enumerate}
\item $\mathcal{P}_H$ is the (orthogonal) projector onto $\hat{C}(H)$.
\item $\mathcal{P}_H(\rho) = \arg \min_{\sigma \in \hat{C}(H)} S( \rho || \sigma )$, where $S( \rho || \sigma )$ is the relative entropy.
\end{enumerate}
\end{lemma}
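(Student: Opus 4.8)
The plan is to treat the two parts separately, handling Part 1 with the standard algebra of group averaging and Part 2 with a Pythagorean identity for relative entropy. Throughout I would work on the real vector space of Hermitian operators equipped with the Hilbert--Schmidt inner product $\langle A, B\rangle := \tr(A^\dagger B)$, and regard $\hat{C}(H)$ concretely as the set of density operators lying in the fixed-point subspace $\mathcal{F}_H := \{ A : \mathcal{U}_h(A) = A \ \forall h \in H\}$, since a state has $H$ contained in its isotropy group precisely when it is $H$-invariant. Making this identification explicit is worthwhile because $\hat{C}(H)$ is a convex set of states while the genuine object the projector lands on is the linear subspace $\mathcal{F}_H$.

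For Part 1, I would verify the three defining properties of an orthogonal projector onto $\mathcal{F}_H$. First, idempotency: $\mathcal{P}_H^2(\rho) = \int_H dh \int_H dh' \, \mathcal{U}_{hh'}(\rho) = \mathcal{P}_H(\rho)$, using left-invariance of the Haar measure to rewrite the inner integral as $\mathcal{P}_H(\rho)$ together with the normalisation $\int_H dh = 1$. Second, that the image is exactly $\mathcal{F}_H$: left-invariance gives $\mathcal{U}_{h_0}\mathcal{P}_H(\rho) = \mathcal{P}_H(\rho)$ so $\mathrm{Im}(\mathcal{P}_H) \subseteq \mathcal{F}_H$, while any $A \in \mathcal{F}_H$ satisfies $\mathcal{P}_H(A) = A$, giving the reverse inclusion and showing $\mathcal{P}_H$ fixes $\hat{C}(H)$ pointwise. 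Third, self-adjointness: using cyclicity of the trace and invariance of the Haar measure under inversion $h \mapsto h^{-1}$ (valid since $H$ is a compact subgroup of $SU(2)$), one finds $\langle \mathcal{P}_H(A), B\rangle = \int_H dh\, \tr\!\big(A^\dagger \mathcal{U}_{h^{-1}}(B)\big) = \langle A, \mathcal{P}_H(B)\rangle$. These three facts together identify $\mathcal{P}_H$ as the orthogonal projector onto $\mathcal{F}_H$.

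For Part 2, write $\tau := \mathcal{P}_H(\rho) \in \hat{C}(H)$ and take an arbitrary $\sigma \in \hat{C}(H)$ (the case $S(\rho\|\sigma) = \infty$ being trivial, so assume $\mathrm{supp}\,\rho \subseteq \mathrm{supp}\,\sigma$). The key observation is that since $\tau$ and $\sigma$ are both $H$-invariant, so is $X := \log\tau - \log\sigma$, because the logarithm of an $H$-invariant operator is again $H$-invariant. I would then establish the averaging identity $\tr(\rho X) = \tr(\tau X)$, which follows from $\tr(\mathcal{U}_h(\rho) X) = \tr(\rho\, \mathcal{U}_{h^{-1}}(X)) = \tr(\rho X)$ integrated over $H$. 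This converts the difference of relative entropies into a clean equality,
\begin{equation}
S(\rho\|\sigma) - S(\rho\|\tau) = \tr\big(\rho\,(\log\tau - \log\sigma)\big) = \tr\big(\tau\,(\log\tau - \log\sigma)\big) = S(\tau\|\sigma),
\end{equation}
i.e. the Pythagorean relation $S(\rho\|\sigma) = S(\rho\|\tau) + S(\tau\|\sigma)$. Since $S(\tau\|\sigma) \geq 0$ with equality iff $\sigma = \tau$, this shows $\tau$ is the unique minimiser, completing Part 2.

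I expect the main obstacle to be Part 2: the non-obvious step is recognising that the averaging identity upgrades what would otherwise be a mere data-processing inequality into an \emph{exact} Pythagorean decomposition, and it is this exactness that delivers both the minimisation and the uniqueness in a single stroke. The remaining technical care is bookkeeping the support conditions so that $\log\sigma$ and the divergences are well defined; this is harmless, since $\mathrm{supp}\,\rho \subseteq \mathrm{supp}\,\tau$ always holds for the twirl, and any $\sigma$ failing $\mathrm{supp}\,\rho \subseteq \mathrm{supp}\,\sigma$ yields infinite relative entropy and so cannot compete with $\tau$.
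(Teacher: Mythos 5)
Your proof is correct and takes essentially the same route as the paper: Part 1 is verified identically (idempotency via invariance of the Haar measure, self-adjointness via its inversion-invariance, and pointwise fixing of $H$-invariant states), and Part 2 rests on the same two ingredients the paper uses, namely that $\log\sigma$ is $H$-invariant for $H$-invariant $\sigma$ (so the twirl's self-adjointness gives $\tr(\rho\log\sigma)=\tr(\mathcal{P}_H(\rho)\log\sigma)$) together with Klein's inequality for uniqueness. The only difference is presentational: you organise these facts into the explicit Pythagorean identity $S(\rho||\sigma)=S(\rho||\mathcal{P}_H(\rho))+S(\mathcal{P}_H(\rho)||\sigma)$, whereas the paper runs the same content as a chain of entropy identities ending in $S(\rho||\mathcal{P}_H(\rho))=S(\mathcal{P}_H(\rho))-S(\rho)=\min_{\sigma}S(\rho||\sigma)$.
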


\begin{proof}
For states $\rho \in C(W)$ for $W \succ H$, the unitary $U(h)$ for $h \in H$ will leave the state unchanged. Therefore $\mathcal{P}_H(\rho) = \int_H dh \ \U_h (\rho) = \int_H dh \ \rho = \rho$. Moreover, $\P_H$ is a projector because $\P_H \circ \P_H = \int_H dh \ \int_H dh' \ \U_h \circ \U_{h'} = \int_H dh \ \int_H dh' \ \U_{hh'} = \int_H dh'' \ \U_{h''} = \P_H$. Furthermore, $\P_H^\dagger (\rho) = \int_H dh \ \U_{h^{-1}}(\rho) = \int_H dh \ \U_h(\rho) = \P_H(\rho)$, meaning $\P_H$ is an orthogonal projector.

We follow the proof in \cite{frameness}. By Klein's Inequality, we have $S(\mathcal{P}_H(\rho) || \sigma ) \geq 0$, with equality iff $\sigma = \mathcal{P}_H(\rho)$, therefore $\min_{\sigma \in \hat{C}(H)} S(\mathcal{P}_H(\rho) || \sigma) = 0$. From the definition of relative entropy, $S(\mathcal{P}_H(\rho)) = \min_{\sigma \in \hat{C}(H)} [- \rm{Tr}(\rho \ \log \ \sigma)]$ because $\log$ is analytic and $\sigma \in \hat{C}(H)$. Using the idempotency of $\mathcal{P}_H$, $S(\rho || \mathcal{P}_H(\rho)) = -S(\rho) - \rm{Tr}(\rho \ \log \ \mathcal{P}_H (\rho)) = -S(\rho) + S(\mathcal{P}_H(\rho)) = - S(\rho) + \min_{\sigma \in \hat{C}(H)} [- \rm{Tr}(\rho \ \log \ \sigma)] = \min_{\sigma \in \hat{C}(H)} S( \rho || \sigma )$.
\end{proof}

\begin{lemma}
Given a group action for $G$,
$\rm{Iso}(\mathcal{P}_H) = N_G(H)$, where $N_G(H) = \{g \in G \ : \ gHg^{-1} = H \}$ is the normalizer \cite{normalizer} of $H$ in $G$, and therefore if $H$ is a normal subgroup of $G$ ($H \triangleleft G$) then $\P_H$ is a symmetric operation.
\end{lemma}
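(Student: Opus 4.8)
The plan is to reduce everything to a single computation of how conjugation acts on the twirl $\mathcal{P}_H$, and then read off both inclusions $N_G(H)\subseteq \text{Iso}(\mathcal{P}_H)$ and $\text{Iso}(\mathcal{P}_H)\subseteq N_G(H)$ from it. First I would compute, for arbitrary $g\in G$ and any state $\rho$,
\begin{equation}
\mathfrak{U}_g(\mathcal{P}_H)(\rho)=\mathcal{U}_g\circ\mathcal{P}_H\circ\mathcal{U}_{g^{-1}}(\rho)=\int_H dh\ \mathcal{U}_g\circ\mathcal{U}_h\circ\mathcal{U}_{g^{-1}}(\rho)=\int_H dh\ \mathcal{U}_{ghg^{-1}}(\rho),
\end{equation}
using that $g\mapsto\mathcal{U}_g$ is a representation so $\mathcal{U}_g\circ\mathcal{U}_h\circ\mathcal{U}_{g^{-1}}=\mathcal{U}_{ghg^{-1}}$, together with linearity to pull $\mathcal{U}_g$ and $\mathcal{U}_{g^{-1}}$ through the Haar integral.

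Next I would recognise the right-hand side as the twirl over the conjugated subgroup. The map $h\mapsto ghg^{-1}$ is a continuous group isomorphism of $H$ onto $gHg^{-1}$, and since the normalised Haar measure is the unique left-invariant probability measure on a compact group, its pushforward under this isomorphism is exactly the normalised Haar measure of $gHg^{-1}$. Relabelling $h'=ghg^{-1}$ therefore gives
\begin{equation}
\mathfrak{U}_g(\mathcal{P}_H)=\int_{gHg^{-1}}dh'\ \mathcal{U}_{h'}=\mathcal{P}_{gHg^{-1}},
\end{equation}
where for finite $H$ this is just a relabelling of the $|H|$ summands. Hence $g\in\text{Iso}(\mathcal{P}_H)$ if and only if $\mathcal{P}_{gHg^{-1}}=\mathcal{P}_H$. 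Sufficiency is then immediate: if $g\in N_G(H)$ then $gHg^{-1}=H$, the two twirls coincide, and $g\in\text{Iso}(\mathcal{P}_H)$. For necessity I would invoke part~1 of the preceding lemma, which identifies $\mathcal{P}_H$ as the orthogonal projector onto $\hat{C}(H)$; thus $\mathcal{P}_{gHg^{-1}}=\mathcal{P}_H$ forces the images to agree, $\hat{C}(gHg^{-1})=\hat{C}(H)$, and recovering $H$ from its fixed-point set yields $gHg^{-1}=H$, i.e. $g\in N_G(H)$. The corollary is then trivial: if $H\triangleleft G$ then $N_G(H)=G$, so $\text{Iso}(\mathcal{P}_H)=G$ and $\mathcal{P}_H$ is a symmetric operation.

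The main obstacle is precisely the injectivity step used in the necessity direction: a strictly larger closed subgroup $\bar{H}\supseteq H$ can in general share the same invariant states (equivalently $H$ may fail to be the full stabiliser of $\hat{C}(H)$), in which case the computation only delivers $g\in N_G(\bar{H})$. This is the same saturation subtlety that produces the ubiquitous $\mathbb{Z}_2=\{\pm\mathds{1}\}$ in the two-qubit $SU(2)$ analysis. I would close this gap by characterising $H$ intrinsically as $\bigcap_{\rho\in\hat{C}(H)}\text{Iso}(\rho)$ and arguing that for the faithful (closed-subgroup) representations under consideration this intersection returns $H$ itself, so that the projector determines the subgroup and the identification $\text{Iso}(\mathcal{P}_H)=N_G(H)$ is exact.
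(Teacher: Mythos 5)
Your forward computation agrees exactly with the paper's proof: $\mathfrak{U}_g(\mathcal{P}_H)=\int_H dh\,\mathcal{U}_{ghg^{-1}}=\mathcal{P}_{gHg^{-1}}$, hence $N_G(H)\subseteq\mathrm{Iso}(\mathcal{P}_H)$, and the normal-subgroup corollary follows. You are also right that all the real content is in the reverse inclusion, which the paper simply asserts (``$g$ will therefore be a member of $\mathrm{Iso}(\mathcal{P}_H)$ iff $gHg^{-1}=H$'') without justification. The problem is that your proposed repair of this step rests on a false claim. You assert that, for the faithful representations at hand, $\bigcap_{\rho\in\hat{C}(H)}\mathrm{Iso}(\rho)=H$, so that the projector determines the subgroup. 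This fails in the paper's own setting: take $G=SU(2)$ acting on two qubits and $H=\mathbb{Z}_6=\langle e^{i\pi Z/3}\rangle$. By the paper's Lemma~\ref{twoqubitrestriction}, every two-qubit state invariant under this $\mathbb{Z}_6$ is invariant under the full $U(1)$ with the same generator, so $\hat{C}(\mathbb{Z}_6)=\hat{C}(U(1))$, the two orthogonal projectors $\mathcal{P}_{\mathbb{Z}_6}$ and $\mathcal{P}_{U(1)}$ have the same image and are therefore equal, and your intersection returns $U(1)\supsetneq\mathbb{Z}_6$. The conclusion of the lemma is still true for this $H$, but only because $N_G(\mathbb{Z}_6)=N_G(U(1))=K_\infty$ happen to coincide --- not for the reason your argument supplies. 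So your necessity proof does not go through even in cases where the statement is correct.

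What your argument actually establishes, run carefully, is $\mathrm{Iso}(\mathcal{P}_H)=N_G(\bar{H})$, where $\bar{H}:=\{g\in G:\mathcal{U}_g(X)=X\ \forall X\in\mathrm{im}\,\mathcal{P}_H\}$ is the largest subgroup with the same invariant operators as $H$ (your intersection is precisely $\bar{H}$): one has $g\in\mathrm{Iso}(\mathcal{P}_H)$ iff $\mathcal{U}_g$ preserves $\mathrm{im}\,\mathcal{P}_H$, which is equivalent to $g\bar{H}g^{-1}=\bar{H}$. Closing the gap therefore means proving $N_G(\bar{H})=N_G(H)$, which is exactly what the recoverability claim was meant to shortcut. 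This extra step is genuinely needed and is not free: for a non-faithful action the lemma as stated is even false --- let $G=SU(2)\times SU(2)$ act on a single qubit through its first factor and take $H=\{e\}\times U(1)$; then $\mathcal{P}_H=\mathrm{id}$, so $\mathrm{Iso}(\mathcal{P}_H)=G$, while $N_G(H)=SU(2)\times K_\infty\subsetneq G$. A correct completion either restates the lemma as $\mathrm{Iso}(\mathcal{P}_H)=N_G(\bar{H})$, or verifies $N_G(\bar{H})=N_G(H)$ for the particular subgroups being twirled over --- which does hold for the ones the paper uses in its two-qubit analysis ($\mathbb{Z}_2,\mathbb{Z}_4,U(1),K_2,K_\infty,SU(2)$, all of which are saturated), but is not a consequence of the computation alone.
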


\begin{proof}
The group action on $\P_H$ is $\mathfrak{U}_g(\P_H) = \int_H dh \ \mathfrak{U}_g(\U_h) = \int_H dh \ \U_g \circ \U_h \circ \U_{g^{-1}} = \int_H dh \ \U_{ghg^{-1}} $. An element $g \in G$ will therefore be a member of $\text{Iso}(\P_H)$ iff $g H g^{-1} = H$. This is the definition of the normalizer, so we have that $\rm{Iso}(\mathcal{P}_H) = N_G(H)$. Averaging over a normal subgroup $H \triangleleft G$ is therefore symmetric, because $N_G(H) = G$ by the definition of a normal subgroup. 
\end{proof}

\section{Bell States and Single Component T-States}\label{Bell}

The extremal points of the T-states are the Bell states, which can be expressed as the vectorisation \cite{vectorisation} of Pauli matrices:
\begin{align}
\ket{\phi^+} &\propto \ket{\rm{vec}(\mathds{1})}  \\
\ket{\phi^-} &\propto \ket{\rm{vec}(Z)}  \\
\ket{\psi^+} &\propto \ket{\rm{vec}(X)} \\
\ket{\psi^-} &\propto \ket{\rm{vec}(Y)}.
\end{align}
Elements of the isotropy subgroup of a Bell state therefore satisfy $\U_g(\ket{\rm{vec}(\sigma_i)}\bra{\rm{vec}(\sigma_i)}) = \ket{\rm{vec}(\sigma_i)}\bra{\rm{vec}(\sigma_i)}$, hence
\begin{equation}
U(g) \otimes U(g) \ket{\rm{vec}(\sigma_i)} = e^{i \phi} \ket{\rm{vec}(\sigma_i)}
\end{equation}
where $\phi \in \mathbb{R}$ is some arbitrary phase. From the identity $A \otimes B \ket{\rm{vec}(M)} = \ket{\rm{vec}(AMB^T)}$ \cite{vectorisation}, the residual symmetries satisfy
\begin{equation}
U(g) \sigma_i U^T(g) = e^{i \phi} \sigma_i.
\end{equation}
We now consider each of the Bell states in turn.

\subsection{$\rm{Iso}(\phi^+) \cong K_\infty$}

$\ket{\phi^+} \propto \ket{\rm{vec}(\mathds{1})}$ is invariant under group transformations satisfying $U(g) U^T(g) = e^{i \phi} \mathds{1}$. $SU(2)$ group elements take the form $U(g) = \mathrm{exp} (i \varphi [a X + b Y + c Z])$ with $a^2 + b^2 + c^2 = 1$ ($a,b,c \in \mathbb{R}$) and $0 \leq \varphi < 2\pi$, therefore
\begin{equation}
e^{i \varphi (aX + bY + cZ)} e^{i \varphi (aX - bY + cZ)} = e^{i \phi} \mathds{1} \implies e^{i \varphi (aX + bY + cZ)} = e^{i \phi} e^{i \varphi (-aX + bY - cZ)}.
\end{equation}
Expansion of the matrix exponentials gives
\begin{equation}
\cos{\varphi} \mathds{1} + i \sin{\varphi} (aX + bY + cZ) = e^{i \phi} [\cos{\varphi} \mathds{1} + i \sin{\varphi} (-aX + bY - cZ)]
\end{equation}
which provides the conditions
\begin{align}
\cos{\varphi} (1-e^{i \phi}) &= 0 \\
ia \sin{\varphi} (1+e^{i \phi}) &= 0 \\
ib \sin{\varphi} (1-e^{i \phi}) &= 0 \\
ic \sin{\varphi} (1+e^{i \phi}) &= 0.
\end{align}
The problem simplifies because $\phi \in \mathbb{R}$ is restricted in the values it may take. Suppose $\cos{\varphi} \neq 0$, then $\cos{\varphi} (1-e^{i \phi}) = 0$ implies $\phi = 0$ (modulo $2 \pi$). Now suppose $\cos{\varphi} = 0$, ensuring that $\sin{\varphi} \neq 0$. Since $a^2 + b^2 + c^2 = 1$, at least one of $a$, $b$ or $c$ is non-zero, therefore either $1+e^{i \phi} = 0$ or $1-e^{i \phi} = 0$ in order for the conditions to hold. Together, these imply that $\phi = 0$ or $\pi$ (modulo $2\pi$).

When $\phi = 0$, the conditions reduce to
\begin{align}
2i a \sin{\varphi} &= 0 \\
2i c \sin{\varphi} &= 0.
\end{align}
One possible solution is $a=c=0$, leaving $\varphi$ unconstrained and therefore $b= \pm 1$ (without loss of generality choose $b=1$, since $b=-1$ gives `rotations' of the opposite handedness). These solutions take the form $U(g) = e^{i \varphi Y}$. The equations are also satisfied when $\sin{\varphi} = 0$, so when $\varphi = 0,\pi$, which give group transformations $U(g) = \pm \mathds{1}$.

In the $\phi = \pi$ case, the conditions become
\begin{align}
2 \cos{\varphi} &= 0 \\
2i b \sin{\varphi} &= 0.
\end{align}
If $b \neq 0$, then $\cos{\varphi} = \sin{\varphi} = 0$, which is not possible. Therefore the only solutions for these conditions is $b = 0$ and $\cos{\varphi} = 0$, which gives $\varphi = \pi/2$ or $3\pi/2$. In this case $a$ and $c$ are only constrained by $a^2 + c^2 = 1$. The $\varphi=\pi/2$ solution corresponds to the $SU(2)$ transformations
\begin{equation}
U(g) = i(aX + cZ) = (iZ)(c \mathds{1} +i a Y)
= (iZ)(\cos{\theta} \mathds{1} + i \sin{\theta} Y) = (iZ) e^{i \theta Y},
\end{equation}
since $a^2 + c^2 = 1$, so we parametrise in terms of $\cos{\theta}$ and $\sin{\theta}$. Similarly $U(g) = (-iZ) e^{i \theta Y}$ for the $\varphi = 3\pi/2$ case. This shows that
\begin{equation}
\rm{Iso}(\phi^+) = \{ (iZ)^\alpha e^{i \theta Y} : \ 0 \leq \theta < 2\pi, \ \alpha = 0,1,2,3 \} \cong K_\infty.
\end{equation}

\subsection{$\rm{Iso}(\phi^-) = K_\infty$}

Analogous arguments hold for $\ket{\phi^-} \propto \ket{\rm{vec}(Z)}$. The condition for membership of $\rm{Iso}(\phi^-)$ is
\begin{equation}
e^{i \varphi (aX + bY + cZ)} Z e^{i \varphi (aX - bY + cZ)} = e^{i \phi} Z
\end{equation}
which simplifies to
\begin{equation}
e^{i \varphi (aX + bY + cZ)} = e^{i \phi} e^{i \varphi (aX - bY - cZ)}.
\end{equation}
This gives the conditions
\begin{align}
\cos{\varphi}(1-e^{i \phi}) &= 0 \\
i a \sin{\varphi} (1-e^{i \phi}) &= 0 \\
i b \sin{\varphi} (1+e^{i \phi}) &= 0 \\
i c \sin{\varphi} (1+e^{i \phi}) &= 0,
\end{align}
solved in the same way as the previous example. The isotropy subgroup is
\begin{equation}
\rm{Iso}( \phi^-) = \{ (iY)^\alpha e^{i \theta X} : \ 0 \leq \theta < 2\pi, \ \alpha = 0,1,2,3 \} \cong K_\infty.
\end{equation}

\subsection{$\rm{Iso}(\psi^+) = K_\infty$}

$\ket{\psi^+} \propto \ket{\rm{vec}(X)}$ gives the condition
\begin{equation}
e^{i \varphi (aX + bY + cZ)} = e^{i \phi} e^{i \varphi (-aX - bY + cZ)}.
\end{equation}
This is satisfied when
\begin{align}
\cos{\varphi}(1-e^{i \phi}) &= 0 \\
i a \sin{\varphi} (1+e^{i \phi}) &= 0 \\
i b \sin{\varphi} (1+e^{i \phi}) &= 0 \\
i c \sin{\varphi} (1-e^{i \phi}) &= 0,
\end{align}
which gives the isotropy subgroup as
\begin{equation}
\rm{Iso}( \psi^+) = \{ (iX)^\alpha e^{i \theta Z} : \ 0 \leq \theta < 2\pi, \ \alpha = 0,1,2,3 \} \cong K_\infty.
\end{equation}

\subsection{$\rm{Iso}(\psi^-) = SU(2)$}

The singlet Bell state $\ket{\psi^-} \propto \ket{\rm{vec}(Y)}$ behaves differently to the other Bell states. The isotropy subgroup condition is
\begin{equation}
e^{i \varphi (aX + bY + cZ)} = e^{i \phi} e^{i \varphi (aX + bY + cZ)},
\end{equation}
which is satisfied for all elements of $SU(2)$, therefore $\rm{Iso}( \psi^-) = SU(2)$.

\subsection{$\rm{Iso}(\frac{1}{4}(\mathds{1} \otimes \mathds{1} + \tau_i \ \sigma_i \otimes \sigma_i) = K_\infty$}

For states, $\rho = \frac{1}{4}(\mathds{1} \otimes \mathds{1} + \tau_i \sigma_i \otimes \sigma_i)$, the $\mathds{1} \otimes \mathds{1}$ term can be neglected because $\rm{Iso}( \mathds{1} \otimes \mathds{1} + \rho') = \rm{Iso}( \rho')$. For $\rho = \frac{1}{4}(\mathds{1} \otimes \mathds{1} + \tau_1 X \otimes X)$, an $SU(2)$ group element belongs to $\rm{Iso}(\rho)$ when
\begin{equation}
\U_g(X \otimes X) = \U_g(X) \otimes \U_g(X) = X \otimes X,
\end{equation}
satisfied when
\begin{equation}
e^{i \varphi (aX + bY + cZ)} X e^{-i \varphi (aX + bY + cZ)} = \pm X,
\end{equation}
or equivalently
\begin{equation}
e^{i \varphi (aX + bY + cZ)} = \pm e^{i \varphi (aX - bY - cZ)}.
\end{equation}
This is the same condition as for $\psi^+$, therefore 
\begin{equation}
\rm{Iso}( X \otimes X) = \rm{Iso}( \psi^+) = \{ (iY)^\alpha e^{i \theta X} : \ 0 \leq \theta < 2\pi, \ \alpha = 0,1,2,3 \} \cong K_\infty.
\end{equation}
Likewise, elements of $\rm{Iso}( Y \otimes Y)$ satisfy
\begin{equation}
e^{i \varphi (aX + bY + cZ)} = \pm e^{i \varphi (-aX + bY - cZ)},
\end{equation}
the same as for $\phi^+$, therefore
\begin{equation}
\rm{Iso}( Y \otimes Y) = \{ (iZ)^\alpha e^{i \theta Y} : \ 0 \leq \theta < 2\pi, \ \alpha = 0,1,2,3 \} \cong K_\infty,
\end{equation}
while the condition for membership of $\rm{Iso}( Z \otimes Z)$ is
\begin{equation}
e^{i \varphi (aX + bY + cZ)} = \pm e^{i \varphi (-aX + bY - cZ)},
\end{equation}
showing that
\begin{equation}
\rm{Iso}( Z \otimes Z) = \{ (iX)^\alpha e^{i \theta Z} : \ 0 \leq \theta < 2\pi, \ \alpha = 0,1,2,3 \} \cong K_\infty.
\end{equation}

\section{2-Qubit Isotropy Subgroups} \label{2qubits}

The isotropy subgroups of 2-qubit states under a tensor product $SU(2)$ group action can be found with projection operators $\P_H$. These do not indicate which subgroups of $SU(2)$ will appear as isotropy subgroups, however the following lemma restricts the possibilities.
\begin{lemma}\label{twoqubitrestriction}
For a 2-qubit state $\rho$ transforming under an $SU(2)$ tensor product representation, any cyclic subgroup of the isotropy subgroup $\rm{Iso}(\rho)$ must be either $\mathbb{Z}_2$, $\mathbb{Z}_4$ or $U(1)$.
\end{lemma}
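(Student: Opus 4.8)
The plan is to translate the problem into the induced action of $SU(2)$ on the data $(\mathbf{a},\mathbf{b},T)$ of the state and to exploit the fact that a two-qubit density operator carries angular momentum at most $2$. First I would record that $\mathbb{Z}_2=\{\pm\mathds{1}\}$ always lies in $\mathrm{Iso}(\rho)$, since $(-\mathds{1})\otimes(-\mathds{1})=\mathds{1}\otimes\mathds{1}$, so that the double cover $\pi:SU(2)\to SO(3)$ sends $\mathrm{Iso}(\rho)$ onto a closed subgroup $\bar H\le SO(3)$ with $\mathrm{Iso}(\rho)=\pi^{-1}(\bar H)$. Using $U(g)(\hat{v}\cdot\sigma)U(g)^\dagger=(R\hat{v})\cdot\sigma$ with $R=\pi(g)$, membership $g\in\mathrm{Iso}(\rho)$ is equivalent to $R\mathbf{a}=\mathbf{a}$, $R\mathbf{b}=\mathbf{b}$ and $RTR^{T}=T$. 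Any closed cyclic subgroup $C\le\mathrm{Iso}(\rho)$ is generated by a single element $g=e^{i\varphi\,\hat{n}\cdot\sigma}$; if $\varphi$ is an irrational multiple of $\pi$ then the closure of $C$ is the one-parameter group $\{e^{i\theta\,\hat{n}\cdot\sigma}\}\cong U(1)$, so it remains only to treat the finite case.

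The key step is a harmonic decomposition with respect to the rotation axis $\hat{n}$. Under conjugation by $U\otimes U$ the Hermitian-operator space decomposes as $(0\oplus1)\otimes(0\oplus1)=0\oplus0\oplus1\oplus1\oplus1\oplus2$, i.e. only spins $0,1,2$ occur: the identity is spin $0$, the local Bloch terms $\mathbf{a},\mathbf{b}$ are spin $1$, and the correlation tensor $T$ splits into its trace (spin $0$), antisymmetric (spin $1$) and symmetric-traceless (spin $2$) parts. Hence, decomposing each piece of $\rho$ into eigenvectors of rotation about $\hat{n}$, every component carries a weight $m\in\{0,\pm1,\pm2\}$ and is multiplied by $e^{im\theta}$ under a rotation through $\theta$ about $\hat{n}$. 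Equivalently, a direct computation of $RTR^{T}$ and $R\mathbf{a}$ in the eigenbasis of $R$ produces exactly the phases $e^{im\theta}$ with $|m|\le2$, and this bound $|m|\le2$ is the single quantitative input of the proof.

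Then I would run the dichotomy. Suppose $C$ is finite and $\pi(C)\le SO(3)$ is cyclic of order $n$, generated by a rotation through $\theta=2\pi/n$ about a common axis $\hat{n}$. Invariance of $\rho$ forces $e^{im\theta}=1$, i.e. $n\mid m$, for every active weight $m$. If $n\ge3$ this annihilates all $m\neq0$ components, so $\rho$ is invariant under every rotation about $\hat{n}$; thus $U(1)=\{e^{i\theta\,\hat{n}\cdot\sigma}\}\subseteq\mathrm{Iso}(\rho)$ and $C$ is contained in this $U(1)$. The only genuinely discrete possibilities therefore have $n\le2$: for $n=1$ the preimage $\pi^{-1}(\{e\})=\langle-\mathds{1}\rangle\cong\mathbb{Z}_2$, while for $n=2$ (a $\pi$-rotation, for which $\theta=\pi$ still admits the $m=\pm2$ content of $T$) the generator lifts to $g=i\,\hat{n}\cdot\sigma$ with $g^{2}=-\mathds{1}$, giving $C=\{\pm\mathds{1},\pm i\,\hat{n}\cdot\sigma\}\cong\mathbb{Z}_4$. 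Collecting the cases — $U(1)$ when $g$ has infinite order or $n\ge3$, and $\mathbb{Z}_2$ or $\mathbb{Z}_4$ otherwise — yields the claim, and the surviving $m=\pm2$ content in the $n=2$ case shows the $\mathbb{Z}_4$ alternative is sharp and not forced up to $U(1)$.

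The main obstacle is the bookkeeping of the double cover: one must correctly convert rotation orders in $SO(3)$ into cyclic orders in $SU(2)$ (the factor of two from $-\mathds{1}$, which turns a $2$-fold axis into $\mathbb{Z}_4$ rather than $\mathbb{Z}_2$), and phrase the $n\ge3$ case honestly — such a subgroup is not forbidden outright but is necessarily absorbed into a continuous $U(1)$ axial symmetry, which is precisely what the ``$U(1)$'' alternative in the statement records. Everything else reduces to the representation-theoretic fact that no weight with $|m|>2$ appears, which is exactly what rules out any stand-alone three-fold or higher discrete rotational symmetry of a two-qubit state.
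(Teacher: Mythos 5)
Your proposal is correct and is essentially the paper's own argument in different coordinates: the paper diagonalizes the generator of the cyclic subgroup and reads off the eigenvalue phases $1$, $e^{\pm 2i\theta}$, $e^{\pm 4i\theta}$ of $U(g)\otimes U(g)\otimes U^*(g)\otimes U^*(g)$ on the vectorized state, which is precisely the weight decomposition with spin at most $2$ that you extract from $(0\oplus 1)\otimes(0\oplus 1)$, and then concludes $\mathbb{Z}_2$, $\mathbb{Z}_4$ or $U(1)$ exactly as you do. Your explicit handling of the $SU(2)\to SO(3)$ double cover and of the absorption of any $n\ge 3$ axis into a full $U(1)$ is somewhat more careful than the paper's presentation, but the single quantitative input (no weight beyond spin $2$ appears) is identical.
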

\begin{proof}
A group element belongs to $\rm{Iso}(\rho)$ when $\U_g(\rho) = \rho$, or in vectorised form \cite{vectorisation},
\begin{equation}
U(g) \otimes U(g) \otimes U^*(g) \otimes U^*(g) \ket{\rm{vec}(\rho)} = \ket{\rm{vec}(\rho)}.
\end{equation}
For $\rho$ to be invariant under the group transformation $\U_g$, it requires $\ket{\rm{vec}(\rho)}$ to be in $\rm{ker}[U(g) \otimes U(g) \otimes U^*(g) \otimes U^*(g) - \mathds{1}]$.

All groups contain cyclic subgroups from repeated application of a single generator, because $g \in G$ implies that $g^n \in G$. For any cyclic subgroup of $\rm{Iso}(\rho)$, there is some basis in which its representation is diagonal, i.e. $U(g) = e^{i \theta} \ket{0}\bra{0} + e^{-i \theta} \ket{1}\bra{1}$. Therefore
\begin{align}
&U(g) \otimes U(g) \otimes U^*(g) \otimes U^*(g) - \mathds{1} \nonumber \\
&\qquad = (e^{2i\theta}-1) (\ket{1}\bra{1} + \ket{2}\bra{2} + \ket{7}\bra{7} + \ket{11}\bra{11}) + (e^{4i\theta}-1) \ket{3}\bra{3} + (e^{-4i\theta}-1) \ket{12}\bra{12} \nonumber \\
&\hspace{2cm} + (e^{-2i\theta}-1) (\ket{4}\bra{4} + \ket{8}\bra{8} + \ket{13}\bra{13} + \ket{14}\bra{14} ),
\end{align}
where we have assumed for simplicity that $U(g)$ is diagonal in the computational basis (with for example $\ket{2} = \ket{0010}$). Vectors in $\rm{span}(\ket{0},\ket{5},\ket{6},\ket{9},\ket{10},\ket{15})$ are invariant under this $U(1)$ subgroup. Vectors in $\rm{span}(\ket{1},\ket{2},\ket{4},\ket{7},\ket{8},\ket{11}, \ket{13},\ket{14})$ are also in $\rm{ker}[U(g) \otimes U(g) \otimes U^*(g) \otimes U^*(g) - \mathds{1}]$ for $\theta = 0,\pi$, and are therefore invariant under a $\mathbb{Z}_2$ subgroup. Finally, the vectors of $\rm{span}(\ket{3},\ket{12})$ are in $\rm{ker}[U(g) \otimes U(g) \otimes U^*(g) \otimes U^*(g) - \mathds{1}]$ when $\theta = 0, \pi/2, \pi, 3\pi/2$, and hence invariant under a $\mathbb{Z}_4$ subgroup.

This holds for any cyclic subgroup of $SU(2)$, therefore any cyclic subgroup of $\rm{Iso}(\rho)$ must be $\mathbb{Z}_2$, $\mathbb{Z}_2$ or $U(1)$.
\end{proof}

This plays a similar role to the crystallographic restriction theorem in crystallography \cite{tinkham}, and reduces the number of subgroups to check. For example, any 2-qubit state invariant under a $\mathbb{Z}_6$ must be invariant under the $U(1)$ subgroup with the same generator. This allows us to enumerate the possible isotropy subgroups for 2-qubit states. It also suggests that an $N$-qubit state will have only cyclic subgroups $\mathbb{Z}_2, \mathbb{Z}_4, \dots \mathbb{Z}_{2N}, U(1)$ within $\rm{Iso}(\rho)$. 

The channels $\P_H$ provide a way to find the sets $C(H)$. Consider two subgroups $H_1$ and $H_2$, with $H_1 \prec H_2$ and there are no subgroups $H$ that could be the isotropy subgroup of some state such that $H_1 \prec H \prec H_2$. If a state $\rho$ satisfies $\P_{H_1} (\rho) = \rho$ but $\P_{H_2} (\rho) \neq \rho$, then $\rm{Iso}(\rho) = H_1$, and $\rho \in C(H_1)$. We now use this technique to identify the isotropy subgroups of all 2-qubit states.

\subsection{$\mathbb{Z}_2 \prec \rm{Iso}(\rho)$}

All 2-qubit states are symmetric under $\mathbb{Z}_2 = \{ \pm \mathds{1} \}$, confirmed by
\begin{equation}
\P_{\mathbb{Z}_2}(\rho) = \frac{1}{2}[\ \U_\mathds{1}(\rho) + \U_{-\mathds{1}}(\rho) \ ] = \rho
\end{equation}
for any 2-qubit state $\rho$.

\subsection{$\mathbb{Z}_4 \prec \rm{Iso}(\rho)$}

From Lemma \ref{twoqubitrestriction}, if a 2-qubit state has $\mathbb{Z}_3 \prec \rm{Iso}(\rho)$, then it also has $U(1) \prec \rm{Iso}(\rho)$. The next subgroup to check is $\mathbb{Z}_4 = \{ \pm \mathds{1}, \pm i \mathbf{r} \cdot \sigma \}$. For illustrative purposes, let us consider the particular $\mathbb{Z}_4$ subgroup $\{ \pm \mathds{1}, \pm i Z \}$. For a general 2-qubit state $\rho$,
\begin{align}
\P_{\mathbb{Z}_4}(\rho) &= \frac{1}{4}\left[ \mathds{1} \otimes \mathds{1} + \P_{\mathbb{Z}_4}(\mathbf{a} \cdot \sigma \otimes \mathds{1}) + \P_{\mathbb{Z}_4}(\mathds{1} \otimes \mathbf{b} \cdot \sigma) + \sum_{i,j=1}^3 T_{ij} \ \P_{\mathbb{Z}_4} (\sigma_i \otimes \sigma_j) \ \right]  \\
&= \frac{1}{4}\left[ \mathds{1} \otimes \mathds{1} +  a_z \ Z \otimes \mathds{1} + b_z \ \mathds{1} \otimes Z + T_{33} \ Z \otimes Z + \sum_{i,j = 1}^2 T_{ij} \ \sigma_i \otimes \sigma_j  \right].
\end{align}

Now consider the more general $\mathbb{Z}_4 = \{\pm \mathds{1}, \pm i \mathbf{r} \cdot \sigma \}$ subgroup. The useful identities
\begin{align}
(\mathbf{r} \cdot \sigma)(\mathbf{v} \cdot \sigma) &= (\mathbf{r} \cdot \mathbf{v}) \mathds{1} + i (\mathbf{r} \times \mathbf{v}) \cdot \sigma \\
(\mathbf{r} \cdot \sigma)(\mathbf{v} \cdot \sigma)(\mathbf{r} \cdot \sigma) &= 2(\mathbf{r} \cdot \mathbf{v})(\mathbf{r} \cdot \sigma) - \mathbf{v} \cdot \sigma
\end{align}
allow us to calculate $\P_{\mathbb{Z}_4}(\rho)$. The local Bloch vectors give
\begin{align}
\P_{\mathbb{Z}_4}(\mathbf{a} \cdot \sigma \otimes \mathds{1}) &= \frac{1}{2} [ \mathbf{a} \cdot \sigma \otimes \mathds{1} + (\mathbf{r} \cdot \sigma)(\mathbf{a} \cdot \sigma)(\mathbf{r} \cdot \sigma) \otimes \mathds{1} ] \\
&= \frac{1}{2} [ \mathbf{a} \cdot \sigma \otimes \mathds{1} + (2(\mathbf{r} \cdot \mathbf{a})(\mathbf{r} \cdot \sigma) - \mathbf{a} \cdot \sigma) \otimes \mathds{1} ] \\
&= (\mathbf{r} \cdot \mathbf{a})(\mathbf{r} \cdot \sigma) \otimes \mathds{1}
\end{align}
and similarly $\P_{\mathbb{Z}_4}(\mathds{1} \otimes \mathbf{b} \cdot \sigma) = (\mathbf{r} \cdot \mathbf{b})(\mathds{1} \otimes \mathbf{r} \cdot \sigma)$. The correlation terms may be written
\begin{equation}
\sum_{i,j=1}^3 T_{ij} \ \sigma_i \otimes \sigma_j = \sum_{i,j=1}^3 T'_{ij} \ \mathbf{c}_i \cdot \sigma \otimes \mathbf{c}_j \cdot \sigma ,
\end{equation}
where $\{\mathbf{c}_i\}$ is an orthonormal basis for $\mathbb{R}^3$ and we choose $\mathbf{c}_1 = \mathbf{r}$. It is clear that $\P_{\mathbb{Z}_4}(\mathbf{c}_1 \cdot \sigma \otimes \mathbf{c}_1 \cdot \sigma) = \P_{\mathbb{Z}_4}(\mathbf{r} \cdot \sigma \otimes \mathbf{r} \cdot \sigma) = \mathbf{r} \cdot \sigma \otimes \mathbf{r} \cdot \sigma$.
The identity
\begin{align}
(\mathbf{r} \cdot \sigma)(\mathbf{u} \cdot \sigma)(\mathbf{r} \cdot \sigma) \otimes (\mathbf{r} \cdot \sigma)(\mathbf{v} \cdot \sigma)(\mathbf{r} \cdot \sigma) &= 4(\mathbf{r} \cdot \mathbf{u})(\mathbf{r} \cdot \mathbf{v}) \mathbf{r} \cdot \sigma \otimes \mathbf{r} \cdot \sigma - 2 (\mathbf{r} \cdot \mathbf{u}) \mathbf{r} \cdot \sigma \otimes \mathbf{v} \cdot \sigma \nonumber \\
&\qquad -2(\mathbf{r} \cdot \mathbf{v}) \mathbf{u} \cdot \sigma \otimes \mathbf{r} \cdot \sigma + \mathbf{u} \cdot \sigma \otimes \mathbf{v} \cdot \sigma.
\end{align}
allows us to show that when $j \neq 1$,
\begin{align}
\P_{\mathbb{Z}_4}(\mathbf{c}_1 \cdot \sigma \otimes \mathbf{c}_j \cdot \sigma) &= \mathbf{c}_1 \cdot \sigma \otimes \mathbf{c}_j \cdot \sigma + 2(\mathbf{r} \cdot \mathbf{c}_1)(\mathbf{r} \cdot \mathbf{c}_j) \mathbf{r} \cdot \sigma \otimes \mathbf{r} \cdot \sigma - (\mathbf{r} \cdot \mathbf{c}_1) \mathbf{r} \cdot \sigma \otimes \mathbf{c}_j \cdot \sigma -(\mathbf{r} \cdot \mathbf{c}_j) \mathbf{c}_1 \cdot \sigma \otimes \mathbf{r} \cdot \sigma \nonumber \\
&= \mathbf{r} \cdot \sigma \otimes \mathbf{c}_j \cdot \sigma - \mathbf{r} \cdot \sigma \otimes \mathbf{c}_j \cdot \sigma = 0
\end{align}
because $\mathbf{r} \cdot \mathbf{c}_j = \mathbf{c}_1 \cdot \mathbf{c}_j = 0$. The same holds for the $\mathbf{c}_j \cdot \sigma \otimes \mathbf{c}_1 \cdot \sigma$ terms with $j \neq 1$, which also vanish when $\P_{\mathbb{Z}_4}$ is applied. Finally, when $i,j \neq 1$,
\begin{align}
\P_{\mathbb{Z}_4}(\mathbf{c}_i \cdot \sigma \otimes \mathbf{c}_j \cdot \sigma) &= \mathbf{c}_i \cdot \sigma \otimes \mathbf{c}_j \cdot \sigma + 2(\mathbf{r} \cdot \mathbf{c}_i)(\mathbf{r} \cdot \mathbf{c}_j) \mathbf{r} \cdot \sigma \otimes \mathbf{r} \cdot \sigma - (\mathbf{r} \cdot \mathbf{c}_i) \mathbf{r} \cdot \sigma \otimes \mathbf{c}_j \cdot \sigma -(\mathbf{r} \cdot \mathbf{c}_j) \mathbf{c}_i \cdot \sigma \otimes \mathbf{r} \cdot \sigma \nonumber \\
&= \mathbf{c}_i \cdot \sigma \otimes \mathbf{c}_j \cdot \sigma.
\end{align}
Therefore
\begin{align}
\P_{\mathbb{Z}_4}(\rho) &= \frac{1}{4} \left[ \mathds{1} \otimes \mathds{1} + (\mathbf{r} \cdot \mathbf{a})(\mathbf{r} \cdot \sigma) \otimes \mathds{1} + (\mathbf{r} \cdot \mathbf{b})(\mathds{1} \otimes \mathbf{r} \cdot \sigma) + T'_{11} \mathbf{r} \cdot \sigma \otimes \mathbf{r} \cdot \sigma + \sum_{i,j=2}^3 T'_{ij} \ \mathbf{c}_i \cdot \sigma \otimes \mathbf{c}_j \cdot \sigma \ \right] \nonumber \\
&= \frac{1}{4} \left[ \mathds{1} \otimes \mathds{1} + (\mathbf{r} \cdot \mathbf{a})(\mathbf{r} \cdot \sigma) \otimes \mathds{1} + (\mathbf{r} \cdot \mathbf{b})(\mathds{1} \otimes \mathbf{r} \cdot \sigma) + \tau_1 \ \mathbf{r} \cdot \sigma \otimes \mathbf{r} \cdot \sigma + \sum_{i=2}^3 \tau_i \ \mathbf{c}_i \cdot \sigma \otimes \mathbf{d}_i \cdot \sigma \ \right],
\end{align} 
where $\{\mathbf{r}, \mathbf{c}_i \}$ and $\{ \mathbf{r}, \mathbf{d}_i \}$ are orthonormal bases of $\mathbb{R}^3$.

\subsection{$\rm{Iso}(\rho) = \mathbb{Z}_2$}

The image of $\P_{\mathbb{Z}_4}$ forms a proper subset of the image of $\P_{\mathbb{Z}_2}$, therefore there exist 2-qubit states with $\rm{Iso}(\rho) = \mathbb{Z}_2$, and these arise in several ways:
\begin{itemize}
\item States of the form
\begin{align}
\rho &= \frac{1}{4} \left[ \mathds{1} \otimes \mathds{1} + \mathbf{a} \cdot \sigma \otimes \mathds{1} + \mathds{1} \otimes \mathbf{b} \cdot \sigma + \sum_{i=1}^3 \tau_i \ \mathbf{c}_i \cdot \sigma \otimes \mathbf{d}_i \cdot \sigma \ \right],
\end{align}
where $\{\mathbf{c}_i \}$ and $\{ \mathbf{d}_i \}$ are orthonormal bases of $\mathbb{R}^3$ and $\mathbf{c}_i \neq \mathbf{d}_i$ for $i=1,2,3$.
\item States of the form
\begin{align}
\rho &= \frac{1}{4} \left[ \mathds{1} \otimes \mathds{1} + \mathbf{a} \cdot \sigma \otimes \mathds{1} + \mathds{1} \otimes \mathbf{b} \cdot \sigma + \tau_1 \ \mathbf{r} \cdot \sigma \otimes \mathbf{r} \cdot \sigma + \sum_{i=2}^3 \tau_i \ \mathbf{c}_i \cdot \sigma \otimes \mathbf{d}_i \cdot \sigma \ \right],
\end{align}
where $\{\mathbf{r}, \mathbf{c}_i \}$ and $\{\mathbf{r}, \mathbf{d}_i \}$ are orthonormal bases of $\mathbb{R}^3$ but $\mathbf{a}$ and/or $\mathbf{b}$ are not parallel or anti-parallel to $\mathbf{r}$, i.e. there are no $\gamma_1,\gamma_2 \in \mathbb{R}$ such that both $\mathbf{a} = \gamma_1 \mathbf{r}$ and $\mathbf{b} = \gamma_2 \mathbf{r}$.
\item States of the form
\begin{align}
\rho &= \frac{1}{4} \left[ \mathds{1} \otimes \mathds{1} + \mathbf{a} \cdot \sigma \otimes \mathds{1} + \mathds{1} \otimes \mathbf{b} \cdot \sigma + \sum_{i,j=1}^3 T_{ij} \ \sigma_i \otimes \sigma_j \ \right],
\end{align}
where $\mathbf{a} \neq \gamma \mathbf{b}$ for any $\gamma \in \mathbb{R}$, so the local Bloch vectors are neither parallel nor anti-parallel to each other.
\end{itemize}

\subsection{$U(1) \prec \rm{Iso}(\rho)$}

The next subgroup to consider is $U(1)$. We apply $\P_{U(1)}$ to states $\P_{\mathbb{Z}_4}(\rho)$, because if $\mathbb{Z}_4 \nprec \rm{Iso}(\rho)$ then $U(1) \nprec \rm{Iso}(\rho)$. For illustrative purposes, consider the $U(1)$ subgroup $\{ e^{i \theta Z} : \ 0 \leq \theta < 2\pi \}$, which contains $\mathbb{Z}_4 = \{ \pm \mathds{1}, \pm i Z \}$. The states invariant under this $U(1)$ subgroup are
\begin{align}
\P_{U(1)}(\rho) &= \P_{U(1)} \circ \P_{\mathbb{Z}_4} (\rho) \\
&= \frac{1}{4}\left[ \mathds{1} \otimes \mathds{1} +  a_z \ Z \otimes \mathds{1} + b_z \ \mathds{1} \otimes Z + T_{33} \ Z \otimes Z  + \sum_{i,j = 1}^2 T_{ij} \ \P_{U(1)} (\sigma_i \otimes \sigma_j) \  \right]
\end{align}
because the terms $Z \otimes \mathds{1}$, $\mathds{1} \otimes Z$ and $Z \otimes Z$ are certainly invariant under the $U(1)$ subgroup generated by $Z$. For the remaining correlation terms,
\begin{align}
\P_{U(1)}(X \otimes X) &= \frac{1}{2\pi} \int_0^{2\pi} d\theta \ e^{i \theta Z} X e^{-i \theta Z} \otimes e^{i \theta Z} X e^{-i \theta Z} \\
&= \frac{1}{2\pi} \int_0^{2\pi} d\theta \ \cos^2{2\theta} \ X \otimes X + \sin^2{2\theta} \ Y \otimes Y - \cos{2\theta}\sin{2\theta} \ (X \otimes Y + Y \otimes X)  \\
&= \frac{1}{2} (X \otimes X + Y \otimes Y) = \P_{U(1)}(Y \otimes Y)
\end{align}
and similarly $\P_{U(1)}(X \otimes Y) = - \P_{U(1)}(Y \otimes X) = \frac{1}{2} (X \otimes Y - Y \otimes X)$. Together these give
\begin{align}
\P_{U(1)}(\rho) &= \frac{1}{4}\bigg[ \mathds{1} \otimes \mathds{1} +  a_z \ Z \otimes \mathds{1} + b_z \ \mathds{1} \otimes Z + T_{33} \ Z \otimes Z + \frac{1}{2}(T_{11} + T_{22})(X \otimes X + Y \otimes Y) \nonumber \\
&\hspace{2cm} \left. + \frac{1}{2}(T_{12} - T_{21})(X \otimes Y - Y \otimes X) \right]
\end{align}

Now consider the more general $U(1) = \{ e^{i \mathbf{r} \cdot \sigma} : \ 0 \leq \theta < 2\pi \}$. This acts on a Bloch vector as
\begin{align}
e^{i \theta \mathbf{r} \cdot \sigma} (\mathbf{u} \cdot \sigma) e^{-i \theta \mathbf{r} \cdot \sigma} &= \cos{2\theta} \ \mathbf{u} \cdot \sigma + \sin{2\theta} \ (\mathbf{u} \times \mathbf{r}) \cdot \sigma + 2 \sin^2{\theta} \ (\mathbf{r} \cdot \mathbf{u}) \ \mathbf{r} \cdot \sigma,
\end{align}
therefore
\begin{align}
\P_{U(1)}(\mathbf{a} \cdot \sigma \otimes \mathds{1}) &= \frac{1}{2\pi} \int_0^{2\pi} d\theta \ e^{i \theta \mathbf{r} \cdot \sigma} (\mathbf{a} \cdot \sigma) e^{-i \theta \mathbf{r} \cdot \sigma} \otimes \mathds{1} \\
&= \frac{1}{2\pi} \int_0^{2\pi} d\theta \ \cos{2\theta} \ \mathbf{a} \cdot \sigma \otimes \mathds{1} + \sin{2\theta} \ (\mathbf{a} \times \mathbf{r}) \cdot \sigma \otimes \mathds{1} + 2 \sin^2{\theta} \ (\mathbf{r} \cdot \mathbf{a}) \ \mathbf{r} \cdot \sigma \otimes \mathds{1} \nonumber \\
&= (\mathbf{r} \cdot \mathbf{a}) \ \mathbf{r} \cdot \sigma \otimes \mathds{1}
\end{align}
and similarly $P_{U(1)}(\mathds{1} \otimes \mathbf{b} \cdot \sigma) = (\mathbf{r} \cdot \mathbf{b}) \ \mathds{1} \otimes \mathbf{r} \cdot \sigma$. The group action on the correlation terms is
\begin{align}
&e^{i \theta \mathbf{r} \cdot \sigma} (\mathbf{u} \cdot \sigma) e^{-i \theta \mathbf{r} \cdot \sigma} \otimes e^{i \theta \mathbf{r} \cdot \sigma} (\mathbf{v} \cdot \sigma) e^{-i \theta \mathbf{r} \cdot \sigma}  \nonumber \\
&= \cos^2{2\theta} \ \mathbf{u} \cdot \sigma \otimes \mathbf{v} \cdot \sigma + 4 \sin^4{\theta} (\mathbf{r} \cdot \mathbf{u})(\mathbf{r} \cdot \mathbf{v}) \mathbf{r} \cdot \sigma \otimes \mathbf{r} \cdot \sigma + 2 \cos{2\theta} \sin{2\theta} \ [\mathbf{u} \cdot \sigma \otimes (\mathbf{v} \times \mathbf{r}) \cdot \sigma \nonumber \\
&\qquad + (\mathbf{u} \times \mathbf{r}) \cdot \sigma \otimes \mathbf{v} \cdot \sigma ] + 2 \cos{2\theta} \sin^2{\theta} [ (\mathbf{r} \cdot \mathbf{v}) \mathbf{u} \cdot \sigma \otimes \mathbf{r} \cdot \sigma + (\mathbf{r} \cdot \mathbf{u}) \mathbf{r} \cdot \sigma \otimes \mathbf{v} \cdot \sigma] \nonumber \\
&\qquad + 2 \sin{2\theta} \sin^2{\theta} \ [ (\mathbf{r} \cdot \mathbf{v})(\mathbf{u} \times \mathbf{r}) \cdot \sigma \otimes \mathbf{r} \cdot \sigma  + (\mathbf{r} \cdot \mathbf{u}) \mathbf{r} \cdot \sigma \otimes (\mathbf{v} \times \mathbf{r}) \cdot \sigma] \nonumber \\
&\qquad + \sin^2{2\theta} (\mathbf{u} \times \mathbf{r}) \cdot \sigma \otimes (\mathbf{v} \times \mathbf{r}) \cdot \sigma.
\end{align}
This gives
\begin{align}
\P_{U(1)}(\mathbf{c}_i \cdot \sigma \otimes \mathbf{c}_j \cdot \sigma) &= \frac{1}{2\pi} \int_0^{2\pi} d\theta \ e^{i \theta \mathbf{r} \cdot \sigma} (\mathbf{c}_i \cdot \sigma) e^{-i \theta \mathbf{r} \cdot \sigma} \otimes e^{i \theta \mathbf{r} \cdot \sigma} (\mathbf{c}_j \cdot \sigma) e^{-i \theta \mathbf{r} \cdot \sigma}  \\
&= \frac{1}{2} [ \ \mathbf{c}_i \cdot \sigma \otimes \mathbf{c}_j \cdot \sigma - (\mathbf{r} \cdot \mathbf{c}_j) \mathbf{c}_i \cdot \sigma \otimes \mathbf{r} \cdot \sigma - (\mathbf{r} \cdot \mathbf{c}_i) \mathbf{r} \cdot \sigma \otimes \mathbf{c}_j \cdot \sigma  \nonumber \\
&\hspace{2cm} + (\mathbf{c}_i \times \mathbf{r}) \cdot \sigma \otimes (\mathbf{c}_j \times \mathbf{r}) \cdot \sigma + 3 (\mathbf{r} \cdot \mathbf{c}_i)(\mathbf{r} \cdot \mathbf{c}_j) \mathbf{r} \cdot \sigma \otimes \mathbf{r} \cdot \sigma \ ]  \\
&= \frac{1}{2} [ \ \mathbf{c}_i \cdot \sigma \otimes \mathbf{c}_j \cdot \sigma - \delta_{1j} \mathbf{c}_i \cdot \sigma \otimes \mathbf{c}_1 \cdot \sigma - \delta_{1i} \mathbf{c}_1 \cdot \sigma \otimes \mathbf{c}_j \cdot \sigma \nonumber \\
&\qquad +  (\mathbf{c}_i \times \mathbf{c}_1) \cdot \sigma \otimes (\mathbf{c}_j \times \mathbf{c}_1) \cdot \sigma +3 \ \delta_{1i} \delta_{1j} \ \mathbf{c}_1 \cdot \sigma \otimes \mathbf{c}_1 \cdot \sigma ],
\end{align}
therefore
\begin{align}
\P_{U(1)}(\rho) &= \frac{1}{4}\bigg[\mathds{1} \otimes \mathds{1} + (\mathbf{r} \cdot \mathbf{a}) \ \mathbf{r} \cdot \sigma \otimes \mathds{1} + (\mathbf{r} \cdot \mathbf{b}) \ \mathds{1} \otimes \mathbf{r} \cdot \sigma +T'_{11} \ \mathbf{r} \cdot \sigma \otimes \mathbf{r} \cdot \sigma \nonumber \\
&\hspace{1cm} + \frac{1}{2}(T'_{22} + T'_{33}) \ [\mathbf{c}_2 \cdot \sigma \otimes \mathbf{c}_2 \cdot \sigma + \mathbf{c}_3 \cdot \sigma \otimes \mathbf{c}_3 \cdot \sigma ] \nonumber \\
&\hspace{1cm} + \frac{1}{2}(T'_{23} - T'_{32}) \ [ \mathbf{c}_2 \cdot \sigma \otimes \mathbf{c}_3 \cdot \sigma - \mathbf{c}_3 \cdot \sigma \otimes \mathbf{c}_2 \cdot \sigma] \bigg] \\
&= \frac{1}{4}\left[\mathds{1} \otimes \mathds{1} + (\mathbf{r} \cdot \mathbf{a}) \ \mathbf{r} \cdot \sigma \otimes \mathds{1} + (\mathbf{r} \cdot \mathbf{b}) \ \mathds{1} \otimes \mathbf{r} \cdot \sigma + \tau_1 \mathbf{r} \cdot \sigma \otimes \mathbf{r} \cdot \sigma  + \tau \sum_{i=2}^3 \ \mathbf{c}_i \cdot \sigma \otimes \mathbf{d}_i \cdot \sigma \right]
\end{align}
where $\{\mathbf{r}, \mathbf{c}_i\}$ and $\{\mathbf{r}, \mathbf{d}_i\}$ are orthonormal bases of $\mathbb{R}^3$.

\subsection{$K_2 \prec \rm{Iso}(\rho)$}

Consider one such subgroup, $K_2 = \{ \pm \mathds{1}, \pm i X , \pm i Y, \pm i Z \}$. For this subgroup,
\begin{equation}
\P_{K_2}(\rho) = \frac{1}{4} \left[ \mathds{1} \otimes \mathds{1} + \P_{K_2}(\mathbf{a} \cdot \sigma \otimes \mathds{1}) + \P_{K_2}(\mathds{1} \otimes \mathbf{b} \cdot \sigma) + \sum_{i,j = 1}^3 T_{ij} \ \P_{K_2}(\sigma_i \otimes \sigma_j) \  \right].
\end{equation}
Local Bloch vectors are eliminated, since $\P_{K_2}(X \otimes \mathds{1}) = \P_{K_2}(Y \otimes \mathds{1}) = \P_{\K_2}(Z \otimes \mathds{1}) = 0$, and likewise for the other local Bloch vector. For the correlation terms,
\begin{align}
\P_{K_2}(X \otimes X) &= \frac{1}{4} [ X \otimes X + XXX \otimes XXX + YXY \otimes YXY + ZXZ \otimes ZXZ ] = X \otimes X,
\end{align}
and similarly $\P_{K_2}(Y \otimes Y) = Y \otimes Y$ and $\P_{K_2}(Z \otimes Z) = Z \otimes Z$. The remaining correlation terms vanish since
\begin{align}
\P_{K_2}(X \otimes Y) &= \frac{1}{4} [ X \otimes Y + XXX \otimes XYX + YXY \otimes YYY + ZXZ \otimes ZYZ ]  \\
&= \frac{1}{4} [ X \otimes Y - X \otimes Y - X \otimes Y + X \otimes Y] = 0
\end{align}
and similarly for other $\sigma_i \otimes \sigma_j$ terms with $i \neq j$. Therefore
\begin{equation}
\P_{K_2}(\rho) = \frac{1}{4} \left[ \mathds{1} \otimes \mathds{1} + \sum_{i=1}^3 T_{ii} \ \sigma_i \otimes \sigma_i \right],
\end{equation}
which are the T-states. This proves that all T-states $\rho_T$ have $K_2 \prec \rm{Iso}(\rho_T)$.

Consider the more general $K_2$ subgroup, $\{\pm \mathds{1}, \pm i \mathbf{r}_1\cdot\sigma, \pm i \mathbf{r}_2 \cdot \sigma, \pm i \mathbf{r}_3 \cdot \sigma \}$ where $\{\mathbf{r}_1,\mathbf{r}_2,\mathbf{r}_3 \}$ forms an orthonormal basis of $\mathbb{R}^3$. We use that same basis for the correlation terms, such that 2-qubit states can be expressed
\begin{equation}
\rho = \frac{1}{4} \left[ \mathds{1} \otimes \mathds{1} + \mathbf{a} \cdot \sigma \otimes \mathds{1} + \mathds{1} \otimes \mathbf{b} \cdot \sigma  + \sum_{i,j=1}^3 T'_{ij} \ \mathbf{r}_i \cdot \sigma \otimes \mathbf{r}_j \cdot \sigma \ \right].
\end{equation}
We first consider the action of $\P_{K_2}$ on the local Bloch vectors, $\mathbf{a} \cdot \sigma = \sum_{i=1}^3 (\mathbf{r}_i \cdot \mathbf{a}) \ \mathbf{r}_i \cdot \sigma$,
\begin{align}
\P_{K_2}(\mathbf{a} \cdot \sigma \otimes \mathds{1}) &= \frac{1}{4} \bigg[ \mathbf{a} \cdot \sigma \otimes \mathds{1} + \sum_{i=1}^3 (\mathbf{r}_i \cdot \sigma)(\mathbf{a} \cdot \sigma)(\mathbf{r}_i \cdot \sigma) \otimes \mathds{1} \bigg] \\
&= \frac{1}{4}\bigg[\mathbf{a} \cdot \sigma \otimes \mathds{1} + \sum_{i=1}^3 (2 (\mathbf{r}_i \cdot \mathbf{a}) \mathbf{r}_i \cdot \sigma - \mathbf{a} \cdot \sigma) \otimes \mathds{1} \bigg]  \\
&= \frac{1}{2} \left[ \ \sum_{i=1}^3 (\mathbf{r}_i \cdot \mathbf{a}) \  \mathbf{r}_i \cdot \sigma \otimes \mathds{1} - \mathbf{a} \cdot \sigma \otimes \mathds{1} \ \right] = 0,
\end{align}
and similarly $\P_{K_2}(\mathds{1} \otimes \mathbf{b} \cdot \sigma) = 0$. The correlation terms give
\begin{align}
\P_{K_2}(\mathbf{r}_i \cdot \sigma \otimes \mathbf{r}_j \cdot \sigma) &= \frac{1}{4} \left[ \mathbf{r}_i \cdot \sigma \otimes \mathbf{r}_j \cdot \sigma + \sum_{k=1}^3 (\mathbf{r}_k \cdot \sigma)(\mathbf{r}_i \cdot \sigma)(\mathbf{r}_k \cdot \sigma) \otimes (\mathbf{r}_k \cdot \sigma)(\mathbf{r}_j \cdot \sigma)(\mathbf{r}_k \cdot \sigma) \right]  \\
&= \frac{1}{4} \left[ \mathbf{r}_i \cdot \sigma \otimes \mathbf{r}_j \cdot \sigma + \sum_{k=1}^3 \bigg(4 (\mathbf{r}_k \cdot \mathbf{r}_i)(\mathbf{r}_k \cdot \mathbf{r}_j) \mathbf{r}_k \cdot \sigma \otimes \mathbf{r}_k \cdot \sigma + \mathbf{r}_i \cdot \sigma \otimes \mathbf{r}_j \cdot \sigma \right. \nonumber \\
&\hspace{5cm} - 2(\mathbf{r}_k \cdot \mathbf{r}_i) \mathbf{r}_k \cdot \sigma \otimes \mathbf{r}_j \cdot \sigma - 2(\mathbf{r}_k \cdot \mathbf{r}_j) \mathbf{r}_i \cdot \sigma \otimes \mathbf{r}_k \cdot \sigma \bigg) \bigg] \nonumber  \\
&= \frac{1}{4} \left[ \sum_{k=1}^3 (4 \delta_{ik}\delta_{jk} \mathbf{r}_k \cdot \sigma \otimes \mathbf{r}_k \cdot \sigma - 2 \delta_{ik} \mathbf{r}_k \cdot \sigma \otimes \mathbf{r}_j \cdot \sigma  - 2 \delta_{jk} \mathbf{r}_i \cdot \sigma \otimes \mathbf{r}_k \cdot \sigma ) + 4 \mathbf{r}_i \cdot \sigma \otimes \mathbf{r}_j \cdot \sigma \right] \nonumber \\
&= \delta_{ij} \ \mathbf{r}_j \cdot \sigma \otimes \mathbf{r}_j \cdot \sigma
\end{align}
Therefore
\begin{equation}
\P_{K_2}(\rho) = \frac{1}{4} \left[ \mathds{1} \otimes \mathds{1} + \sum_{i = 1}^3 \tau_i \ \mathbf{r}_i \cdot \sigma \otimes  \mathbf{r}_i \cdot \sigma \right]
\end{equation}
where $\{\mathbf{r}_i\}$ is an orthonormal basis for $\mathbb{R}^3$.

\subsection{$\rm{Iso}(\rho) = \mathbb{Z}_4$}

We have now calculated $\P_H(\rho)$ for the two subgroups directly above $\mathbb{Z}_4$ on the subgroup lattice that are possible isotropy subgroups. This identifies the 2-qubit states in $C(\mathbb{Z}_4)$. They are:
\begin{itemize}
\item States of the form
\begin{equation}
\rho = \frac{1}{4} \bigg[ \mathds{1} \otimes \mathds{1} + \tau_1 \ \mathbf{r} \cdot \sigma \otimes \mathbf{r} \cdot \sigma + \sum_{i=2}^3 \tau_i \ \mathbf{c}_i \cdot \sigma \otimes \mathbf{d}_i \cdot \sigma \bigg]
\end{equation}
where $\tau_2 \neq \tau_3$ and $\mathbf{c}_i \neq \mathbf{d}_i$ for $i=2,3$.
\item States of the form
\begin{equation}
\rho = \frac{1}{4} \left[ \mathds{1} \otimes \mathds{1} + a \mathbf{r} \cdot \sigma \otimes \mathds{1} + b \mathds{1} \otimes \mathbf{r} \cdot \sigma + \tau_1 \ \mathbf{r} \cdot \sigma \otimes \mathbf{r} \cdot \sigma  + \sum_{i=2}^3 \tau_i \ \mathbf{c}_i \cdot \sigma \otimes \mathbf{d}_i \cdot \sigma \right]
\end{equation}
where $\tau_2 \neq \tau_3$ and $\mathbf{c}_i \neq \mathbf{d}_i$ for $i=2,3$, and $a,b \in \mathbb{R}$.
\item The states
\begin{equation}
\rho = \frac{1}{4}\left[\mathds{1} \otimes \mathds{1} + a \mathbf{c}_1 \cdot \sigma \otimes \mathds{1} + b \mathds{1} \otimes \mathbf{c}_1 \cdot \sigma   +  \sum_{i=1}^3 \tau_i \ \mathbf{c}_i \cdot \sigma \otimes \mathbf{c}_i \cdot \sigma \right]
\end{equation}
where $a,b \in \mathbb{R}$. This gives a $\mathbb{Z}_4$ isotropy subgroup when $\tau_1 = \tau_2 \neq \tau_3$.
\end{itemize}

\subsection{$K_\infty \prec \rm{Iso}(\rho)$}

The next subgroup to consider is $K_\infty$, which contains the $K_2$ subgroup. Therefore we need only apply $\P_{K_\infty}$ to states $\P_{K_2}(\rho)$. We can also say the same about states $\P_{U(1)}(\rho)$, however the $K_2$ subgroup projects onto a simpler subset of states.

As an illustrative example, consider the particular subgroup $\{ (iX)^\alpha e^{i \theta Z} : \ 0 \leq \theta < 2\pi, \ \alpha = 0,1,2,3 \}$. This contains the $K_2$ subgroup $\{\pm \mathds{1}, \pm iX, \pm iY, \pm iZ \}$, so we need only apply $\P_{K_\infty}$ to the T-states. The CPTP map $\P_{K_\infty}$ can be split up as $\P_{K_\infty} = \P_{U(1)} \circ \P_{\mathbb{Z}_4}$, therefore
\begin{equation}
\P_{K_\infty}(\rho_T) = \P_{U(1)}(\rho_T) = \frac{1}{4} \bigg( \mathds{1} \otimes \mathds{1} + \frac{1}{2}(\tau_1 + \tau_2) \ (X \otimes X + Y \otimes Y) + \tau_3 \ Z \otimes Z \bigg)
\end{equation}

\noindent More generally, we consider 
\begin{equation}
K_\infty = \{ (i \ \mathbf{r}_j \cdot \sigma)^\alpha e^{i \theta \ \mathbf{r}_k \cdot \sigma} : \ 0 \leq \theta < 2\pi, \ \alpha = 0,1,2,3 \}
\end{equation}
where $\{\mathbf{r}_i \}$ is an orthonormal basis of $\mathbb{R}^3$ and $j \neq k$. The projector $\P_{K_\infty}$ on a general 2-qubit state gives
\begin{align}
\P_{K_\infty} (\rho) &= \P_{U(1)} \circ \P_{K_2} (\rho) = \P_{U(1)} \left[ \frac{1}{4} \left( \mathds{1} \otimes \mathds{1} + \sum_{i=1}^3 \tau_i \ \mathbf{r}_i \cdot \sigma \otimes \mathbf{r}_i \cdot \sigma \right) \right]  \\
&= \frac{1}{4} \left( \mathds{1} \otimes \mathds{1} + \frac{1}{2} \left[ \sum_{i \neq k} \tau_i \right] \left[ \sum_{i \neq k} \mathbf{r}_i \cdot \sigma \otimes \mathbf{r}_i \cdot \sigma \right] + \tau_k \ \mathbf{r}_k \cdot \sigma \otimes \mathbf{r}_k \cdot \sigma \right)
\end{align}

\subsection{$\rm{Iso}(\rho) = K_2$}

These are the T-states and the states with maximally mixed marginals that can be reached from the T-states through rigid $SU(2)$ rotations:
\begin{equation}
\rho = \frac{1}{4} \bigg[ \mathds{1} \otimes \mathds{1} + \sum_i \tau_i \ \mathbf{c}_i \cdot \sigma \otimes \mathbf{c}_i \cdot \sigma \bigg],
\end{equation}
with $\tau_1 \neq \tau_2 \neq \tau_3$. There cannot be any local Bloch vectors.

\subsection{$\rm{Iso}(\rho) = U(1)$}

These are states of the form
\begin{equation}
\rho = \frac{1}{4} \left[ \mathds{1} \otimes \mathds{1} + \tau_1 \ \mathbf{r} \cdot  \sigma \otimes \mathbf{r} \cdot \sigma + \tau \sum_{i=2}^3 \mathbf{c}_i \cdot \sigma \otimes \mathbf{d}_i \cdot \sigma \right]
\end{equation}
where $\mathbf{c}_i \neq \mathbf{d}_i$ for $i=2,3$. Local Bloch vectors are permitted so long as they are either aligned or anti-aligned with the shared Bloch vector in the correlation terms:
\begin{equation}
\rho = \frac{1}{4} \left[ \mathds{1} \otimes \mathds{1} + a \mathbf{r} \cdot \sigma \otimes \mathds{1} + b \mathds{1} \otimes \mathbf{r} \cdot \sigma + \tau_1 \ \mathbf{r} \cdot  \sigma \otimes \mathbf{r} \cdot \sigma + \tau \sum_{i=2}^3 \mathbf{c}_i \cdot \sigma \otimes \mathbf{d}_i \cdot \sigma \right]
\end{equation}
where $\mathbf{c}_i \neq \mathbf{d}_i$ for $i=2,3$ and $a,b \in \mathbb{R}$.

\subsection{$\rm{Iso}(\rho) = SU(2)$}

Since $G = SU(2)$, when we apply $\P_{SU(2)}$ to a general 2-qubit state we get the form of a general 2-qubit symmetric state. Since $K_\infty \prec SU(2)$,
\begin{equation}
\P_{SU(2)}(\rho) = \P_{SU(2)} \circ \P_{K_\infty} (\rho) = \frac{1}{4} \left( \mathds{1} \otimes \mathds{1} + \sum_{i=1}^3 \tau_i \ \P_{SU(2)} \ [ \mathbf{c}_i \cdot \sigma \otimes \mathbf{c}_i \cdot \sigma ] \right).
\end{equation}

A general $SU(2)$ group transformation is $U(g) = e^{i \varphi \mathbf{r} \cdot \sigma}$. We parameterise $\mathbf{r} = ( \sin{\phi} \cos{\theta}, \sin{\phi} \sin{\theta}, \cos{\phi})$. Considering each of the terms individually,
\begin{align}
&\P_{SU(2)}(\mathbf{c}_i \cdot \sigma \otimes \mathbf{c}_i \cdot \sigma ) \nonumber \\
&= \frac{1}{2\pi^2} \int_0^{2\pi} d\theta \int_0^\pi d\phi \ \sin{\phi} \int_0^\pi d\varphi \ \sin^2{\varphi} \ \bigg( e^{i \varphi \mathbf{r} \cdot \sigma} \mathbf{c}_i \cdot \sigma e^{-i \varphi \mathbf{r} \cdot \sigma} \otimes e^{i \varphi \mathbf{r} \cdot \sigma} \mathbf{c}_i \cdot \sigma e^{-i \varphi \mathbf{r} \cdot \sigma} \bigg) \\
&= \frac{1}{2\pi^2} \int_0^{2\pi} d\theta \int_0^\pi d\phi \ \sin{\phi} \int_0^\pi d\varphi \ \sin^2{\varphi} \cos^2{2\varphi} \ \mathbf{c}_i \cdot \sigma \otimes \mathbf{c}_i \cdot \sigma \nonumber \\
&\hspace{1.5cm} + 2 \sin^2{\varphi} \cos{2\varphi} \sin{2\varphi} \ [\mathbf{c}_i \cdot \sigma \otimes (\mathbf{c}_i \times \mathbf{r}) \cdot \sigma + (\mathbf{c}_i \times \mathbf{r}) \cdot \sigma \otimes \mathbf{c}_i \cdot \sigma ] \nonumber \\
&\hspace{1.5cm} + 2 \cos{2\varphi} \sin^4{\varphi} \ (\mathbf{r} \cdot \mathbf{c}_i) \ [ \mathbf{c}_i \cdot \sigma \otimes \mathbf{r} \cdot \sigma + \mathbf{r} \cdot \sigma \otimes \mathbf{c}_i \cdot \sigma ] \nonumber \\
&\hspace{1.5cm} + 2 \sin{2\varphi} \sin^4{\varphi} \ (\mathbf{r} \cdot \mathbf{c}_i) \ [ (\mathbf{c}_i \times \mathbf{r}) \cdot \sigma \otimes \mathbf{r} \cdot \sigma + \mathbf{r} \cdot \sigma \otimes (\mathbf{c}_i \times \mathbf{r}) \cdot \sigma ] \nonumber \\
&\hspace{1.5cm} + 4 \sin^6{\varphi} \ (\mathbf{r} \cdot \mathbf{c}_i)^2 \ \mathbf{r} \cdot \sigma \otimes \mathbf{r} \cdot \sigma  + \sin^2{\varphi} \sin^2{2\varphi} \ (\mathbf{c}_i \times \mathbf{r}) \cdot \sigma \otimes (\mathbf{c}_i \times \mathbf{r}) \cdot \sigma  \\
&= \frac{1}{8\pi} \int_0^{2\pi} d\theta \int_0^\pi d\phi \ \sin{\phi} \  \bigg( \mathbf{c}_i \cdot \sigma \otimes \mathbf{c}_i \cdot \sigma - 2 (\mathbf{r} \cdot \mathbf{c}_i) \ [ \mathbf{c}_i \cdot \sigma \otimes \mathbf{r} \cdot \sigma + \mathbf{r} \cdot \sigma \otimes \mathbf{c}_i \cdot \sigma ] \nonumber \\
&\hspace{5cm} + (\mathbf{c}_i \times \mathbf{r}) \cdot \sigma \otimes (\mathbf{c}_i \times \mathbf{r}) \cdot \sigma + 5 (\mathbf{r} \cdot \mathbf{c}_i)^2 \ \mathbf{r} \cdot \sigma \otimes \mathbf{r} \cdot \sigma \bigg)
\end{align}
Performing the integrals over $\theta$ and $\phi$ gives
\begin{align}
\P_{SU(2)}(\rho) = \frac{1}{4} \bigg[ \ \mathds{1} \otimes \mathds{1} + \tau \sum_{i=1}^3 \sigma_i \otimes \sigma_i \ \bigg].
\end{align}

\subsection{$\rm{Iso}(\rho) = K_\infty$}
These are states in the image of $\P_{K_\infty}$ but not in that of $\P_{SU(2)}$. They have the form
\begin{equation}
\rho = \frac{1}{4} \left[ \mathds{1} \otimes \mathds{1} + \sum_i \tau_i \ \mathbf{c}_i \cdot \sigma \otimes \mathbf{c}_i \cdot \sigma \right]
\end{equation}
where $\tau_1 = \tau_2 \neq \tau_3$ i.e. two of the $\tau_i$ are the same, but not equal to the third.

\end{document}